%% file: main.tex
\documentclass[ejsv2,preprint,noshowframe]{imsart}

\RequirePackage{natbib}
\RequirePackage{amsmath}
\RequirePackage{xcolor}
\RequirePackage{url}
\RequirePackage{multirow}
\RequirePackage{booktabs}
\RequirePackage{array}
\RequirePackage{flafter}
\RequirePackage{placeins}
\RequirePackage[hyperindex,breaklinks,colorlinks,citecolor=blue,urlcolor=blue]{hyperref}
\RequirePackage{comment}
\RequirePackage{graphicx}
\RequirePackage{subcaption}
\startlocaldefs
\DeclareMathOperator*{\argminA}{arg\,min}

\theoremstyle{definition}
\newtheorem{assu}{Assumption}
\newtheorem{sassu}{Structural Assumption}

\theoremstyle{plain}
\newtheorem{prop}{Proposition}

\newtheorem{theorem}{Theorem}
\newtheorem{lemma}{Lemma}

\theoremstyle{definition}

\newtheorem{remark}{Remark}
\endlocaldefs

\begin{document}

\begin{frontmatter}

\title{Joint-Sparse Transfer Learning for High-Dimensional Multi-Output Regression}
\runtitle{Joint-Sparse Multitask Transfer Learning}

\begin{aug}
\author[A]{\fnms{Sunwoo}~\snm{Lim}\ead[label=e1]{sunwoo.lim@marshall.usc.edu}}
\and
\author[A,B]{\fnms{Mladen}~\snm{Kolar}\ead[label=e2]{mkolar@marshall.usc.edu}}

\address[A]{Marshall School of Business, University of Southern California
\printead[presep={,\ }]{e1,e2}}
\address[B]{Mohamed bin Zayed University of Artificial Intelligence}
\runauthor{S. Lim and M. Kolar}
\end{aug}

\begin{abstract}
Multitask linear models can improve estimation and prediction by exploiting structure shared across responses, bridging taskwise fitting and complete pooling.
In many applications, however, the objective is estimation in a data-limited target domain, while data-rich but heterogeneous source domains are available.
Borrowing from these sources can improve efficiency but introduce bias.
We develop a joint-sparse transfer-learning framework for high-dimensional multi-output regression that combines shared predictor structure across responses with source--target similarity.
The framework yields two complementary estimators: a fused estimator that aggregates jointly fitted domain-specific coefficients and a target-based debiased estimator that adjusts for source-induced shifts.
Our error bounds show how transfer increases the available information and sharing predictors across responses reduces selection costs.
They also reveal a tradeoff: the fused estimator benefits from larger sources but may retain bias if source shifts point in similar directions,
whereas debiasing trades this bias
for additional estimation error governed by the smaller target sample.
Comparison with a minimax lower bound identifies regimes where 
the bounds match up to logarithmic factors, 
where matching remains unresolved,
and where projection onto a target-based convex set closes the gap.
Simulations and an analysis of single-cell RNA and surface-protein profiles across cell types support the theory.
\end{abstract}

\begin{keyword}[class=MSC]
\kwdgroup[type=primary]{\kwd{62H12}}
\kwdgroup[type=secondary]{\kwd{62J07}\kwd{62R07}}
\end{keyword}

\begin{keyword}
\kwd{High-dimensional regression}
\kwd{joint sparsity}
\kwd{minimax lower bound}
\kwd{multitask learning}
\kwd{transfer learning}
\end{keyword}

\end{frontmatter}

\tableofcontents

\section{Introduction} \label{sec:intro}\label{sec:motivation}
Modern scientific studies increasingly collect high-dimensional predictors
and several related outcomes, creating a need for statistical methods that
estimate multiple regression models jointly. A prominent example arises in
single-cell multi-omics, where RNA expression and surface protein abundance are
measured in the same cells \citep{luecken2021sandbox}. In this setting, one may
use thousands of gene-expression features to predict a panel of protein markers,
yielding a multi-output regression problem whose coefficient matrix describes
RNA--protein associations. Because only a relatively small subset of genes
is expected to drive variation across many protein markers,
joint sparsity across the response tasks is a natural structural assumption.

Such settings also motivate learning across related datasets. Target-only
estimation can be difficult when the domain of primary interest contains
relatively few observations compared with the number of genes and protein tasks,
whereas related source domains may contain substantially more observations.
In a bone marrow mononuclear cell study with finely annotated cell types, a rare cell type
may serve as a data-poor target while more abundant cell types provide larger
collections of matched RNA--protein profiles \citep{luecken2021sandbox}. These
source cell types share the same measurement space, but differences in
transcriptional programs and surface-marker usage can induce cell-type-specific
regression patterns. Transfer learning is therefore a natural strategy for
borrowing strength from related sources while preserving target-specific
structure \citep{pan2009survey}.

Two obstacles make this problem challenging. First, standard multitask learning
methods are typically formulated for a single homogeneous population. When
applied to multidomain data, they either use only the limited target sample or
pool all domains as if the regression coefficients were the same. Existing
high-dimensional transfer linear models relax exact pooling, but they have
largely been developed for single-output regression and measure source--target
similarity task by task \citep{bastani2021predicting, li2022transfer,
tian2023transfer, li2024estimation, liu2024unified, he2024transfusion,
qu2025automatic}. This taskwise approach misses an important feature of
multi-output biomedical data: cell-type-specific shifts may vary across protein
markers, whereas the relevant genes are shared across many response tasks.
Second, even informative sources can differ from the target in systematic,
directionally aligned ways. In such cases, borrowing source information by
direct aggregation can reduce variance but may introduce bias. Conversely,
treating every domain separately avoids this source-induced bias but discards
the large source sample sizes. A useful method must therefore exploit multitask
structure and account for source-induced bias by respecting domain
heterogeneity.

We propose a joint-sparse transfer framework in which the same row-level
structure governs both the target coefficient matrix and its relation to the
sources. That matrix has one row for each of the $p$ predictors and one column
for each of the $K$ responses, and at most $s$ of its rows are nonzero. Each
difference between a source coefficient matrix and the target is in turn
controlled in a mixed row norm, so a gene is borrowed or shrunk across the whole
protein panel rather than marker by marker, while its effects are still allowed
to differ between cell types. This structure is well suited to multi-omics
applications, where a small set of genes can influence several protein markers
in related but domain-specific ways. Within this framework we study Transfer
Multitask Learning (TMTL), which combines the target observations with those of
$L$ source domains, $N$ observations in total, and yields two estimators.
TMTL(Fused) fits all
domains jointly and aggregates the fitted coefficient matrices, and
TMTL(Debiased) adds to that aggregate a correction estimated from the target
sample alone. These complementary estimators allow us to study when aggregation is
effective and when target correction is beneficial.

\subsection{Contributions}

Our first set of results gives estimation error bounds for TMTL(Fused) and
TMTL(Debiased), in Theorems~\ref{thm:fused} and~\ref{thm:debiased} respectively,
whose leading term is $\sqrt{s\psi_K/N}$ with $\psi_K=1+\log p/K$. Two effects
are visible in this single expression. Transfer supplies the combined sample
size $N$ in place of the target sample size alone, and sharing predictors across
responses divides the $\log p$ paid for locating the active rows among the $K$
responses. Once $K\gtrsim\log p$, the factor $\psi_K$ is of constant order and
the term reduces to $\sqrt{s/N}$, the rate available when the active set is
known. With a single response, $\psi_K$ reverts to $1+\log p$ and the
amortization disappears, so this is an effect that only a multitask analysis can
exhibit.

The bounds also separate two ways in which source heterogeneity is costly. Write
$\bar h$ for the weighted sum of the magnitudes of the individual source--target
contrasts, and $\epsilon_D$ for the magnitude of their weighted sum, so that
$\epsilon_D\leq\bar h$ by the triangle inequality. The first measures what it
costs to estimate the contrasts, the second how much of the source shift
survives aggregation. They coincide when the shifts point in a common direction
and separate when they cancel across domains, coefficient by coefficient. Only
$\bar h$ enters the bound for TMTL(Debiased), whereas the bound for TMTL(Fused)
carries $\epsilon_D$ as well; against that, the fused heterogeneity term is
divided by the source sample size and the debiased one by the smaller target
sample size. The fused bound benefits from the larger source sample size but
includes an aggregation-bias term. Its advantage therefore depends on whether
that bias is small enough to offset the cost of estimating a correction from the
smaller target sample. The distinction itself is inherited from the
scalar-response analysis of \citet{he2024transfusion}; what the multitask
setting adds is that both quantities are mixed row norms, so each aggregates a
whole row of response-specific shifts before the sum over predictors is taken.

Upper bounds alone cannot say how much of this cost is intrinsic to the problem.
Theorem~\ref{thm:multitask-transfer-lower-bound} therefore bounds below the risk
of every estimator with access to all target and source data, over a class in
which every source lies within a common radius $h$ of the target. It separates a
pooled estimation cost from a target--source separation cost whose form changes
across small, intermediate and large radii. For $K=1$ it recovers the
scalar-response benchmarks of \citet{li2022transfer} and
\citet{tian2023transfer}. Minimax lower bounds for multitask transfer are
available under spectral similarity between the source and target coefficient
matrices \citep{zhao2026smart}; to our knowledge none is available for row
sparsity with mixed-norm contrasts.

Comparing the upper and lower bounds identifies which parts of the rates are
unavoidable. The estimation term $\sqrt{s\psi_K/N}$ is optimal whenever $\psi_K$
is comparable with the corresponding factor in the lower bound; TMTL(Debiased)
has the optimal dependence on the radius and on the target sample size at
intermediate radii; and TMTL(Fused) is optimal at small radii when $L\lesssim
s$. At large radii neither available upper bound establishes the rate that the
target sample alone would give, which is the best any estimator can do once the
sources are arbitrarily far from the target.
Theorem~\ref{thm:safe-projection} closes that gap:
projecting either estimator onto a convex set built from the target sample
preserves the bound it already satisfies and caps its error at that target-only
rate, at the price of additional calibration conditions.

A two-block alternating direction method of multipliers (ADMM)
\citep{boyd2011distributed} solves the fused program, and the empirical work
traces the behavior the theory predicts. The simulations vary the magnitude and
the alignment of the source shifts, and a single-cell RNA--protein analysis
treats cell types as domains; both include settings in which borrowing from the
sources is worse than ignoring them.

\subsection{Related work}
\label{sec:related-work}
Our work connects high-dimensional multitask regression with statistical
transfer learning. The central question is when shared predictor structure
across responses and information from heterogeneous sources provide
complementary gains in estimating a target regression matrix.

Multitask regression exploits shared representations
\citep{argyriou2008convex, maurer2013sparse}, low-rank structure
\citep{ando2005framework, pong2010trace}, or joint sparsity
\citep{obozinski2010joint, lounici2009taking, liu2009blockwise}, including
models with task-specific deviations \citep{yang2009heterogeneous,
jalali2010dirty}. Joint variable selection through group penalties
\citep{yuan2006model} is especially relevant here. It uses shared predictor
relevance to improve estimation across responses, with guarantees for prediction
and support recovery \citep{lounici2011oracle, obozinski2011support}. These
results typically concern a single homogeneous population. Data-shared and
stratified regression also accommodate common and group-specific effects
\citep{gross2016data, ollier2017regression, asiaee2018high}. Our focus is
estimation in one data-limited target domain using related sources whose
regression effects may differ from those of the target.

High-dimensional transfer learning uses source samples to improve target
estimation \citep{pan2009survey, bastani2021predicting, li2022transfer,
tian2023transfer, li2024estimation}. Recent developments address source and
variable selection, co-regularization, and adaptive transfer
\citep{liu2024unified, liu2025coregularization, qu2025automatic,
rauschenberger2025estimating}. Much of this theory concerns scalar responses.
Applying such methods separately to multiple responses leaves shared predictor
structure unused. Our estimators build on the fused estimation and target
correction strategy of TransFusion \citep{he2024transfusion}. In the
multi-output setting, we study how the gains from sharing predictors interact
with source heterogeneity, and when joint transfer improves on target-only
multitask learning or compares favorably with response-by-response transfer.

Multi-output transfer methods also exploit low-rank coefficient structure
\citep{park2025transfer} or shared spectral subspaces \citep{zhao2026smart}. We
focus on shared predictor relevance and similarity of predictor effects across
domains, a structure motivated by the role of genes across multiple protein
responses. Row sparsity also implies a rank restriction, so these model classes
need not be disjoint. However, our formulation does not require the spectral
subspace relations used by \citet{zhao2026smart}, and our theory does not rely
on low-rank assumptions. The two lines of work also differ in what they
establish. \citet{park2025transfer} give upper bounds only, whereas
\citet{zhao2026smart} prove a minimax lower bound for their spectral class.
Ours is, to our knowledge, the first for row sparsity with mixed-norm
source--target contrasts, and the two benchmarks are not comparable, since
neither parameter class contains the other.

In single-cell multi-omics, technologies such as CITE-seq
\citep{stoeckius2017simultaneous} and REAP-seq
\citep{peterson2017multiplexed} have motivated protein prediction from RNA
using neural networks, latent-variable models, and ensembles
\citep{zhou2020surface, gayoso2021joint, xu2021ensemble,
lakkis2022multiuse, yang2023interpretable, chen2024imputing}. Our application
uses the bone marrow data of \citet{luecken2021sandbox} to examine borrowing
across cell types while retaining cell-type-specific regression effects and
sharing predictor information across protein responses.

\subsection{Organization and notation}
The remainder of the paper is organized as follows.
Section~\ref{sec:problem-setup-tmtl} introduces the model setting and the
joint-sparse transfer framework. Section~\ref{sec:methodology-tmtl} presents
Transfer Multitask Learning (TMTL), including its computation and practical
tuning. Section~\ref{sec:theory} establishes estimation error bounds for the
fused and target-corrected estimators, presents a minimax lower bound, and
compares the bounds. Sections~\ref{sec:simulation_p200}
and~\ref{sec:realdata} report the simulation study and the single-cell
multi-omics application, respectively. Section~\ref{sec:discussion} concludes
with a discussion. The Appendices contain proofs, computational
details, extensions, and additional simulations.
Code for the proposed method, simulations, and real-data analysis is available at
\url{https://github.com/Damelim/TMTL}.

\textbf{Notation.} For $n\in\mathbb N$, let $[n]=\{1,\ldots,n\}$.
For a matrix $\mathbf A$, let $A_{jk}$ denote its $(j,k)$ entry and
$\mathbf A_j$ its $j$th row. We write $\|\mathbf A\|_F$ for the Frobenius
norm and $\|\mathbf v\|_2$ for the Euclidean norm of a vector $\mathbf v$.
For matrices of the same dimensions, $\langle\mathbf A,\mathbf B\rangle
=\operatorname{tr}(\mathbf A^\top\mathbf B)=\sum_{j,k}A_{jk}B_{jk}$ is the
Frobenius inner product, so that
$\|\mathbf A\|_F^2=\langle\mathbf A,\mathbf A\rangle$. We write
$\mathbf I_d$ for the $d\times d$ identity matrix, $\mathbf 0$ for a zero
vector or matrix whose dimensions are clear from the context,
$\otimes$ for the Kronecker product, and $\mathbf 1\{\cdot\}$ for the
indicator of an event.
For $\mathbf A\in\mathbb R^{p\times K}$, define the mixed row norms
\[
\|\mathbf A\|_{2,q}
=\left(\sum_{j=1}^p\|\mathbf A_j\|_2^q\right)^{1/q},
\quad 1\leq q<\infty,
\qquad
\|\mathbf A\|_{2,\infty}=\max_{j\in[p]}\|\mathbf A_j\|_2.
\]
For a symmetric matrix $\mathbf A$, $\lambda_{\min}(\mathbf A)$ and
$\lambda_{\max}(\mathbf A)$ denote its smallest and largest eigenvalues.

We write $a\wedge b=\min\{a,b\}$ and $a\vee b=\max\{a,b\}$.
For sequences $a_m$ and $b_m$,
$a_m=O(b_m)$ or $a_m\lesssim b_m$ means
that $|a_m/b_m|$ is bounded for large $m$, whereas $a_m=o(b_m)$ means
that $a_m/b_m\to0$. We write $a_m\gtrsim b_m$ when $b_m\lesssim a_m$, and
$a_m\asymp b_m$ when both $a_m=O(b_m)$
and $b_m=O(a_m)$.

\section{Model and problem setup} 
\label{sec:problem-setup-tmtl} 

Throughout, a task is one of the $K$ responses, and a domain is a population or
dataset from which observations are drawn. We observe data from a target domain,
indexed by $\ell=0$, and $L\geq1$ related source domains. The main text treats
the common-design model
\begin{equation}
    \mathbf Y^\ell=\mathbf X^{\ell}\mathbf B^\ell+\mathbf E^\ell,
    \qquad \ell=0,\ldots,L,
    \label{transferMTL_dgp_samecovariate}
\end{equation}
where $\mathbf Y^\ell\in\mathbb R^{n_\ell\times K}$ is the response matrix,
$\mathbf X^{\ell}\in\mathbb R^{n_\ell\times p}$ is the design matrix
shared across responses within domain $\ell$,
$\mathbf B^\ell\in\mathbb R^{p\times K}$ is the coefficient matrix, and
$\mathbf E^\ell\in\mathbb R^{n_\ell\times K}$ is the mean-zero error matrix.
The target sample size is $n_0=n_T$. For the main theoretical analysis, each
source has sample size $n_\ell=n_S$, $\ell\in[L]$, and the total sample size is
$N=n_T+Ln_S$.

For observation $i\in[n_\ell]$, let $\mathbf x_i^\ell\in\mathbb R^p$, $\mathbf
y_i^\ell\in\mathbb R^K$, and $\boldsymbol\epsilon_i^\ell\in\mathbb R^K$ denote
the transposes of the $i$th rows of $\mathbf X^\ell$, $\mathbf Y^\ell$, and
$\mathbf E^\ell$, respectively. The columns of $\mathbf Y^\ell$, $\mathbf
B^\ell$, and $\mathbf E^\ell$ are written as $\mathbf y^{\ell k}$,
$\boldsymbol\beta^{\ell k}$, and $\boldsymbol\epsilon^{\ell k}$, respectively,
for $k\in[K]$, and $\epsilon_i^{\ell k}$ denotes the $(i,k)$ entry of
$\mathbf E^\ell$. The target coefficient matrix $\mathbf B^0$ is the primary
parameter of interest. Row $j$ collects the effects of feature $j$ across all
response tasks; thus, estimating $\mathbf B^0$ amounts to learning the target
multi-output regression map.

We write the regression models without explicit intercepts. This formulation
assumes that each predictor and response has been centered within its domain;
equivalently, it allows an unpenalized domain- and task-specific intercept that
has been removed before estimating the slope matrices. In data analyses, the
centering constants must be estimated from the training observations and then
applied unchanged to validation and test observations.

To encode structure shared across tasks, let
\[
\mathcal{S}(\mathbf{B}^0) := \{j \in [p]: \mathbf{B}_j^0 \neq 0\}
\]
denote the target row support, the union of the task-specific supports.
\begin{sassu}[Joint row sparsity]
\label{assumption_joint_sparsity}
For an integer $s\in\{0,\ldots,p\}$, the target coefficient matrix satisfies
\[
|\mathcal{S}(\mathbf{B}^0)| \leq s.
\]
\end{sassu}
This condition restricts the number of predictors relevant across all responses;
within an active row, coefficients may be nonzero for some tasks and zero for
others. In the single-cell application, rows are genes and columns are protein
markers, so row sparsity says that a limited set of genes drives variation
across the protein markers, while an active gene need not affect every marker.

The source domains are informative only insofar as their coefficient matrices
are related to the target coefficient matrix. Rather than require
$\mathbf{B}^\ell=\mathbf{B}^0$, we allow source-specific deviations while
controlling their aggregate row-wise magnitude.
\begin{sassu}[Row-wise source--target contrasts]
\label{assumption_source_target_contrast}
For $\boldsymbol{h}:=(h_1,\ldots,h_L)^\top\in[0,\infty)^L$, the source and
target coefficient matrices satisfy
\[
\lVert \mathbf{B}^\ell - \mathbf{B}^0 \rVert_{2,1} \leq h_\ell,
\quad \ell \in [L].
\]
\end{sassu}
This condition allows domain-specific regression effects while bounding the
total discrepancy across predictor rows. The contrasts need not be row sparse; 
small differences may occur across many predictors.

Together, Structural
Assumptions~\ref{assumption_joint_sparsity}
and~\ref{assumption_source_target_contrast} define the parameter space
\newpage
\[
\Theta(s,\boldsymbol h)
:=
\left\{
\begin{aligned}
&(\mathbf B^0,\ldots,\mathbf B^L)\in(\mathbb R^{p\times K})^{L+1}:\\
&|\mathcal S(\mathbf B^0)|\le s,\quad
\lVert\mathbf B^\ell-\mathbf B^0\rVert_{2,1}\le h_\ell,\quad
\ell\in[L]
\end{aligned}
\right\}.
\]

The rates below depend on the realized contrasts through two distinct
quantities. Define
\begin{align}
h_\ell^\star := \lVert \mathbf{B}^\ell - \mathbf{B}^0 \rVert_{2,1},
\qquad
\bar h := \frac{n_S}{N} \sum\limits_{\ell=1}^{L} h_\ell^\star,
\qquad
\epsilon_D := \left\lVert \frac{n_S}{N} \sum\limits_{\ell=1}^{L} (\mathbf{B}^\ell - \mathbf{B}^0) \right\rVert_{2,1}.
\label{eq:heterogeneity-quantities}
\end{align}
Structural Assumption~\ref{assumption_source_target_contrast} gives
$h_\ell^\star\le h_\ell$. The quantity $\bar h$ is the weighted sum of
contrast magnitudes, whereas $\epsilon_D$ is the magnitude of the weighted
aggregate contrast and measures the source shift that survives aggregation.
The triangle inequality gives $\epsilon_D\leq\bar h$. Cancellation among
source deviations can make $\epsilon_D$ much smaller than $\bar h$, or even
zero; directionally aligned deviations can make the two quantities comparable.

We next introduce TMTL,
which incorporates this source--target structure through row-wise regularization
and then corrects systematic source-induced bias.

\section{Transfer Multitask Learning} 
\label{sec:methodology-tmtl}

Transfer Multitask Learning (TMTL) fits the target and source domains jointly
under model \eqref{transferMTL_dgp_samecovariate} and produces two estimators
of $\mathbf B^0$: TMTL(Fused), which aggregates the fitted coefficient
matrices, and TMTL(Debiased), which applies a target-data correction to that
aggregate.

Let $\lambda_0>0$ be the baseline regularization parameter, let $a_\ell>0$ be
the relative contrast weight for source domain $\ell\in[L]$, and write
$\lambda_\ell=\lambda_0 a_\ell$. Following the joint-fitting strategy of
\citet{he2024transfusion}, we estimate domain-specific coefficient matrices
while encouraging shared predictor structure across responses
\citep{lounici2009taking, obozinski2010joint, obozinski2011support}.
Specifically, we solve
\begin{equation}
\begin{aligned}
\widehat{\mathbf{B}}^0,\ldots,\widehat{\mathbf{B}}^L
& \in \argminA_{\mathbf{B}^0,\ldots,\mathbf{B}^L}
\ \Biggl[\frac{1}{2NK}\sum_{\ell=0}^{L}
\lVert\mathbf Y^\ell-\mathbf X^\ell\mathbf B^\ell\rVert_F^2
+\lambda_0\lVert\mathbf B^0\rVert_{2,1}
+\lambda_0\sum_{\ell=1}^{L}a_\ell
\lVert\mathbf B^\ell-\mathbf B^0\rVert_{2,1}\Biggr],
\end{aligned}
\label{cotrain}
\end{equation}
and set
\begin{equation}
\widehat{\mathbf{W}}^F := \frac{n_S}{N} \sum\limits_{\ell=1}^{L} \widehat{\mathbf{B}}^\ell + \frac{n_T}{N} \widehat{\mathbf{B}}^0.
\label{fused}
\end{equation}
The penalty $\lVert \mathbf{B}^{0} \rVert_{2,1}$ imposes joint row sparsity on
the target coefficient matrix, whereas $\lVert
\mathbf{B}^\ell-\mathbf{B}^0\rVert_{2,1}$ shrinks each source coefficient matrix
toward the target at the level of feature rows. The source coefficients may
therefore differ from the target, although large row-wise discrepancies are
discouraged. The fitted target block $\widehat{\mathbf B}^0$ is one of the
optimization variables, whereas the reported estimator $\widehat{\mathbf W}^F$
averages all fitted blocks with sample-size weights. Our error bounds and
predictions concern $\widehat{\mathbf W}^F$.

The population counterpart of $\widehat{\mathbf W}^F$ is
\newpage
\[
\mathbf{W}^F := \frac{n_S}{N} \sum\limits_{\ell=1}^{L} \mathbf{B}^\ell + \frac{n_T}{N} \mathbf{B}^0.
\]
The directional bias in \eqref{eq:heterogeneity-quantities}, $\epsilon_D=\lVert
\mathbf{W}^{F} - \mathbf{B}^0 \rVert_{2,1}$, is the mixed-norm approximation bias
incurred by targeting $\mathbf W^F$ instead of $\mathbf B^0$. When the source
shifts cancel across domains, $\epsilon_D$ is small and the fused estimator can
benefit from the larger effective sample size. When they point in the same
direction, $\epsilon_D$ can be comparable to $\bar h$, and the aggregate may
retain a substantial source shift. For a correction penalty
$\widetilde\lambda>0$, TMTL therefore adds a target-data correction step:
\begin{align}
\widehat{\mathbf{D}} &\in \argminA_{\mathbf{D} \in \mathbb{R}^{p \times K}} \frac{1}{2n_T K} \lVert \mathbf{Y}^0 - \mathbf{X}^{0} (\widehat{\mathbf{W}}^{F} + \mathbf{D}) \rVert_F^2 + \widetilde\lambda \lVert \mathbf{D} \rVert_{2,1}, \label{debiasedobjective} \\
\widehat{\mathbf{W}}^{D} &:= \widehat{\mathbf{W}}^{F} + \widehat{\mathbf{D}}. \label{debiasedestimator}
\end{align}
With $\widehat{\mathbf W}^F$ held fixed, the correction step uses the target
loss and a row-wise penalty on $\mathbf D$. The parameter
$\widetilde\lambda$ controls the magnitude of the correction: larger values keep
$\widehat{\mathbf W}^D$ close to $\widehat{\mathbf W}^F$, whereas smaller values
permit greater target-specific adjustment. We call $\widehat{\mathbf W}^D$ the
debiased estimator. Here, debiasing refers to reducing aggregation bias in point
estimation.

Both estimators use the raw source datasets directly, and both extend beyond
the common-design model of Section~\ref{sec:problem-setup-tmtl}.
Section~\ref{sec:generalization} of the Appendices defines them for
task-specific covariates and unequal domain sample sizes, without either
restriction.

\subsection{Computation}
\label{sec:opt-tmtl}
The fused objective \eqref{cotrain} is convex, but its contrast penalties
couple the target and source coefficient matrices. To separate the quadratic
loss from the penalties, introduce auxiliary copies
$\boldsymbol\Gamma^{00},\ldots,\boldsymbol\Gamma^{0L}$ of $\mathbf B^0$ and
$\boldsymbol\Gamma^1,\ldots,\boldsymbol\Gamma^L$ of the source matrices.
The equivalent constrained problem is
\begin{equation}
\begin{aligned}
\underset{\mathbf B,\boldsymbol\Gamma}{\operatorname{minimize}}\quad
&\frac{1}{2NK}\sum_{\ell=0}^L
\|\mathbf Y^\ell-\mathbf X^\ell\mathbf B^\ell\|_F^2
 +\lambda_0\|\boldsymbol\Gamma^{00}\|_{2,1}
 +\sum_{\ell=1}^L\lambda_\ell
 \|\boldsymbol\Gamma^\ell-\boldsymbol\Gamma^{0\ell}\|_{2,1},\\
\text{subject to}\quad
&\boldsymbol\Gamma^{0\ell}=\mathbf B^0\quad(0\leq\ell\leq L),
\qquad \boldsymbol\Gamma^\ell=\mathbf B^\ell\quad(\ell\in[L]).
\end{aligned}
\label{eq:admm-main-constrained}
\end{equation}
We use two-block ADMM \citep{boyd2011distributed}: one block contains all
coefficient matrices and the other all auxiliary matrices. Updates across
domains or rows within either block are independent components of an exact
block minimization.

Let $\mathbf U^\ell$ and $\mathbf V^\ell$ be the scaled dual variables for the
target-copy and source-copy constraints, with positive penalties
$\rho_{0\ell}$ and $\rho_\ell$. Cache
$\mathbf G^\ell=(\mathbf X^\ell)^\top\mathbf X^\ell/(NK)$ and
$\mathbf C^\ell=(\mathbf X^\ell)^\top\mathbf Y^\ell/(NK)$, and write
\begin{equation}
S_a(\mathbf v):=
\begin{cases}
\mathbf 0,&\lVert\mathbf v\rVert_2\leq a,\\
\bigl(1-a/\lVert\mathbf v\rVert_2\bigr)\mathbf v,&\lVert\mathbf v\rVert_2>a,
\end{cases}
\qquad a>0,
\label{eq:group-soft-threshold}
\end{equation}
for group soft thresholding, the proximal operator of $a\lVert\cdot\rVert_2$
\citep{parikh2014proximal}; in particular $S_a(\mathbf 0)=\mathbf 0$. Initialize the penalties at one and the auxiliary
and dual variables at zero. Each iteration performs the three updates below;
iteration indices are suppressed and each step uses the latest available
values.

The coefficient block solves one ridge-type system per domain,
\begin{align*}
\Bigl(\mathbf G^0+\textstyle\sum_{\ell=0}^L\rho_{0\ell}\mathbf I_p\Bigr)\mathbf B^0
&=\mathbf C^0+\textstyle\sum_{\ell=0}^L\rho_{0\ell}
 (\boldsymbol\Gamma^{0\ell}-\mathbf U^\ell),\\
(\mathbf G^\ell+\rho_\ell\mathbf I_p)\mathbf B^\ell
&=\mathbf C^\ell+\rho_\ell(\boldsymbol\Gamma^\ell-\mathbf V^\ell),
\qquad \ell\in[L].
\end{align*}
These systems differ across iterations only in their right-hand sides and in a
scalar on the diagonal, so one eigendecomposition of each $\mathbf G^\ell$
serves every iteration, and the source systems can be solved in parallel.

The auxiliary block separates over feature rows. For $j\in[p]$,
\[
\boldsymbol\Gamma_j^{00}\leftarrow S_{\lambda_0/\rho_{00}}
(\mathbf B_j^0+\mathbf U_j^0),
\]
which is what sets rows of $\widehat{\mathbf B}^0$ exactly to zero. For each
source $\ell\in[L]$, set
$\mathbf q_j^{0\ell}=\mathbf B_j^0+\mathbf U_j^\ell$,
$\mathbf q_j^\ell=\mathbf B_j^\ell+\mathbf V_j^\ell$ and
$\mu_\ell=\rho_{0\ell}\rho_\ell/(\rho_{0\ell}+\rho_\ell)$. The contrast penalty
couples the two copies, but reparametrizing by their difference decouples the
row problem, leaving the single threshold
$\mathbf r_j^\ell=S_{\lambda_\ell/\mu_\ell}
(\mathbf q_j^\ell-\mathbf q_j^{0\ell})$ and the updates
\[
\boldsymbol\Gamma_j^{0\ell}\leftarrow
\frac{\rho_{0\ell}\mathbf q_j^{0\ell}+\rho_\ell\mathbf q_j^\ell
-\rho_\ell\mathbf r_j^\ell}{\rho_{0\ell}+\rho_\ell},
\qquad
\boldsymbol\Gamma_j^\ell\leftarrow
\frac{\rho_{0\ell}\mathbf q_j^{0\ell}+\rho_\ell\mathbf q_j^\ell
+\rho_{0\ell}\mathbf r_j^\ell}{\rho_{0\ell}+\rho_\ell}.
\]
Shrinking $\mathbf r_j^\ell$ is what pulls source row $j$ toward the target.
Finally, the dual step is multiplier ascent,
\[
\mathbf U^\ell\leftarrow\mathbf U^\ell+\mathbf B^0-\boldsymbol\Gamma^{0\ell},
\qquad
\mathbf V^\ell\leftarrow\mathbf V^\ell+\mathbf B^\ell-\boldsymbol\Gamma^\ell.
\]

We stop when the largest constraint-specific residual falls below a tolerance
or an iteration cap is reached, adapting each penalty from its own residual
pair during a fixed initial horizon. After that horizon the iteration is
standard two-block ADMM, so the objective converges to the optimum whenever a
primal--dual solution exists \citep[Section~3.2]{boyd2011distributed}. For
fixed $p$, the per-iteration cost is linear in $L$ and $K$, following a
one-time setup cubic in $p$. The group lasso correction
\eqref{debiasedobjective} uses the same algorithm.
Section~\ref{app:computation} of the Appendices gives the augmented
Lagrangian, the derivations, and the residual, penalty-adaptation, and cost
details.

\subsection{Hyperparameter selection}
\label{sec:hyper-param-tmtl}
TMTL uses separate regularization parameters for the fused fit and the target
correction. In \eqref{cotrain}, the baseline penalty $\lambda_0$ sets the
overall level of row-wise regularization and the source penalties are
$\lambda_\ell=a_\ell\lambda_0$; a larger $a_\ell$ penalizes the contrast with
source $\ell$ more heavily and so borrows more from it. In
\eqref{debiasedobjective}, $\widetilde\lambda$ sets the size of the target
correction. Tuning is based on held-out prediction performance in the target
domain, with $\widetilde\lambda$ selected after the fused fit.

Tuning $\lambda_0,\lambda_1,\ldots,\lambda_L$ independently is feasible only
for small $L$, because an unrestricted Cartesian grid grows exponentially in
$L$. For many source domains we instead restrict $(a_1,\ldots,a_L)$ to a small
number of prespecified weight vectors and tune $\lambda_0$ within each family,
so the number of fused fits is the number of $\lambda_0$ values times the
number of weight vectors. Two candidate families come from the theory: in the
balanced-source setting of Section~\ref{sec:theory}, the favorable scalings are
$a_\ell\asymp\sqrt{n_S/N}$ and $a_\ell\asymp n_S/N$, one for each
heterogeneity regime. These scalings depend only on the sample sizes, so they
can be used without knowing $s$, $\bar h$, or $\epsilon_D$. A weight vector
may be fixed in advance or selected by comparing candidate families on target
validation data.

After selecting the regularization parameters by held-out target prediction
error, we refit the estimators and evaluate them on separate test observations.
Sections~\ref{sec:simulation_p200} and~\ref{sec:realdata} specify the candidate
weights, tuning grids, and evaluation settings.

\section{Estimation error bounds and a minimax lower bound} \label{sec:theory}

This section states the technical conditions, gives an upper bound for each of
the two estimators over the parameter space $\Theta(s,\boldsymbol h)$, and
proves a minimax lower bound for the same estimation problem. The fused
estimator gains from the combined source--target sample size but pays the
directional bias $\epsilon_D/\sqrt K$; the target correction removes that term
under stronger sample-size conditions, while its rate continues to depend on
$\bar h$. The lower bound of Section~\ref{sec:lower-bound} provides a benchmark
for assessing which parts of these upper bounds are unavoidable and where gaps
remain. Proofs are in Sections~\ref{app:theory-proofs}
and~\ref{app:multitask-lower-bound} of the Appendices.

\subsection{Technical conditions}
\label{sec:technical-conditions-tmtl}
In addition to the structural assumptions in
Section~\ref{sec:problem-setup-tmtl}, we impose the following conditions on the
designs and errors.
\begin{assu}[Sub-Gaussian design]
For each domain, the rows of $\mathbf X^{\ell}$ are independent, mean-zero,
sub-Gaussian vectors with covariance $\boldsymbol{\Sigma}^{\ell}$. There is
a constant $\kappa_X<\infty$ such that
\[
\lVert u^\top\mathbf x_i^{\ell}\rVert_{\psi_2}
\leq
\kappa_X (u^\top\boldsymbol{\Sigma}^{\ell}u)^{1/2}
\quad\text{for all }u\in\mathbb R^p,\ 0\leq \ell\leq L.
\]
Here $\lVert Z\rVert_{\psi_2}:=\inf\{t>0:\mathbb E\exp(Z^2/t^2)\leq2\}$
is the sub-Gaussian Orlicz norm.
Moreover, the covariance eigenvalues are uniformly bounded: for some constant $C_X\geq1$,
\[
C_X^{-1}
\leq
\min_{0\leq \ell\leq L}\lambda_{\min}(\boldsymbol{\Sigma}^{\ell})
\leq
\max_{0\leq \ell\leq L}\lambda_{\max}(\boldsymbol{\Sigma}^{\ell})
\leq C_X.
\]
\label{assumption_design}
\end{assu}
\begin{assu}[Gaussian errors]
The errors $\epsilon_i^{\ell k}$ are independent across all observations,
tasks, and domains, with $\epsilon_i^{\ell k}\sim \mathcal N(0,\sigma_\ell^2)$ and
$\sigma_\ell^2\leq\sigma_{\max}^2$. They are jointly independent of the
design matrices $\mathbf X^0,\ldots,\mathbf X^L$.
\label{assumption_error}
\end{assu}
Assumption~\ref{assumption_design} imposes standard sub-Gaussian design and
well-conditioned covariance requirements uniformly across domains.
Assumption~\ref{assumption_error} requires independent Gaussian errors with
uniformly bounded variance; Gaussianity simplifies the concentration arguments
in the proofs.

All bounds are stated for the normalized Frobenius-norm error
\begin{align}
K^{-1/2}\lVert \widehat{\mathbf{W}}-\mathbf{B}^0\rVert_F
=
\left(
\frac1K\sum_{k=1}^K
\lVert\widehat{\boldsymbol w}^{k}-\boldsymbol\beta^{0k}\rVert_2^2
\right)^{1/2},
\label{eq:normalized-error}
\end{align}
where $\widehat{\boldsymbol w}^{k}$ and $\boldsymbol\beta^{0k}$ are the $k$-th
columns of $\widehat{\mathbf W}$ and $\mathbf B^0$. Thus
\eqref{eq:normalized-error} is the root mean squared coefficient error across
tasks, and normalizing by $K^{-1/2}$ makes the bounds comparable across
different numbers of tasks.

For later use, define the following quantity, which we call the task-complexity factor:
\begin{align}
\psi_K:=1+\frac{\log p}{K},
\label{eq:psi-K}
\end{align}
which quantifies the effective support-selection complexity when information is shared across $K$ tasks.

\subsection{Upper bound for TMTL(Fused)}
\label{sec:fused}
The error of $\widehat{\mathbf W}^F$ splits into estimation error around its
population counterpart $\mathbf W^F$ and the deterministic approximation bias
$\lVert \mathbf{W}^F-\mathbf{B}^0\rVert_{2,1}=\epsilon_D$.
\begin{theorem}[TMTL(Fused)]\label{thm:fused}
Consider a sequence of problem instances indexed by $m\to\infty$, along which
the dimensions and structural parameters may vary. Assume Structural
Assumptions~\ref{assumption_joint_sparsity}--\ref{assumption_source_target_contrast}
and Assumptions~\ref{assumption_design}--\ref{assumption_error}. Suppose that
$n_S \geq n_T$, $s\log p/n_S = o(1)$, and, for some fixed $M_F<\infty$,
\[
\frac{L\bar h\log p}{\sqrt{K n_S}}\le M_F
\]
for all sufficiently large $m$.
Fix a tail constant $c>8$. For a sufficiently large
constant $c_0>0$ depending only on the sub-Gaussian design constants, the
covariance eigenvalue bounds, $\sigma_{\max}$, and $c$, choose
\[
\lambda_0
=
c_0 N^{-1}\sqrt{\frac{(L+1)n_S}{K}\psi_K},
\qquad
\lambda_\ell
=
8\sqrt{\frac{n_S}{N}}\lambda_0,\quad \ell\in[L].
\]
Then, for all sufficiently large $m$, with probability at least
\[
1-C_1(L+1)Kp\exp(-C_2n_T)-C_3(L+1)p^{1-c},
\]
we have
\begin{align}
    \frac{1}{\sqrt{K}}
    \lVert \widehat{\mathbf{W}}^{F} - \mathbf{B}^0\rVert_F \lesssim 
    \sqrt{\frac{s\psi_K}{N}} +
    \sqrt{\frac{\bar h}{\sqrt{K}} \sqrt{\frac{\psi_K}{n_S}}} + \frac{\epsilon_D}{\sqrt{K}}.
    \label{rate_fused}
\end{align}
Here $C_1,C_2,C_3$ depend only on the fixed design and noise constants. The
hidden constant in $\lesssim$ may additionally depend on the fixed constants
$c$, $c_0$, and $M_F$, but not on $m$.
\end{theorem}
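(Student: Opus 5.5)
We follow the TransFusion argument of \citet{he2024transfusion}, lifting it from scalar responses to mixed row norms. The first step is a reparametrization of \eqref{cotrain}. Write $\mathbf D^\ell=\mathbf B^\ell-\mathbf B^0$ for $\ell\in[L]$, and set
\[
\mathbf W=\frac{n_T}{N}\mathbf B^0+\frac{n_S}{N}\sum_{\ell}\mathbf B^\ell .
\]
Then $\mathbf B^0=\mathbf W-\frac{n_S}{N}\sum_\ell\mathbf D^\ell$. The program becomes a convex problem in $(\mathbf W,\mathbf D^1,\ldots,\mathbf D^L)$, and its loss is the stacked least-squares loss with design $\mathbf X^\ell$ acting on $\mathbf W+\mathbf D^\ell-\frac{n_S}{N}\sum_{m}\mathbf D^m$. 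The true parameter in these coordinates is $(\mathbf W^F,\mathbf D^{1\star},\ldots,\mathbf D^{L\star})$. The penalty becomes
\[
\lambda_0\Big\|\mathbf W-\tfrac{n_S}{N}\sum\mathbf D^\ell\Big\|_{2,1}+\sum_\ell\lambda_\ell\|\mathbf D^\ell\|_{2,1}.
\]
The error is split as
\[
\|\widehat{\mathbf W}^F-\mathbf B^0\|_F\le\|\widehat{\mathbf W}^F-\mathbf W^F\|_F+\|\mathbf W^F-\mathbf B^0\|_F .
\]
The second term is at most $\|\mathbf W^F-\mathbf B^0\|_{2,1}=\epsilon_D$, since $\|\cdot\|_F\le\|\cdot\|_{2,1}$. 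After dividing by $\sqrt K$ this gives the last term of \eqref{rate_fused}. What remains is to control $\Delta_W=\widehat{\mathbf W}^F-\mathbf W^F$ together with $\Delta_\ell=\widehat{\mathbf D}^\ell-\mathbf D^{\ell\star}$.

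The second step is the basic inequality. Comparing the objective at the estimate and at the truth gives
\[
\frac{1}{2NK}\sum_\ell\|\mathbf X^\ell(\Delta_W+\Delta_\ell-\bar\Delta)\|_F^2\le\frac{1}{NK}\sum_\ell\langle(\mathbf X^\ell)^\top\mathbf E^\ell,\Delta_W+\Delta_\ell-\bar\Delta\rangle+\text{penalty differences},
\]
with $\bar\Delta=\frac{n_S}{N}\sum\Delta_m$ and $\Delta_0=0$. The noise term is handled by dual norms. Since $\sum_\ell\frac{n_\ell}{N}(\Delta_\ell-\bar\Delta)=0$ when all sources have size $n_S$, the $\Delta_W$ part pairs with the pooled score $\sum_\ell(\mathbf X^\ell)^\top\mathbf E^\ell/(NK)$. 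Conditionally on the design this score is Gaussian, so row-wise chi-square tail bounds with $K$ degrees of freedom and a union over $p$ rows give
\[
\max_j\Big\|\sum_\ell(\mathbf X^\ell_j)^\top\mathbf E^\ell\Big\|_2/(NK)\lesssim N^{-1}\sqrt{(L+1)n_S\psi_K/K}
\]
with probability at least $1-C_3(L+1)p^{1-c}$. This explains $\psi_K$, because $\sqrt{K+c\log p}/K\asymp\sqrt{\psi_K/K}$, and it explains the choice of $\lambda_0$. The $\Delta_\ell$ parts pair with the per-source scores. Their $(2,\infty)$ norms are of order $\frac{1}{NK}\sqrt{n_SK\psi_K}$, which is dominated by $\lambda_\ell=8\sqrt{n_S/N}\lambda_0$. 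The factor $8$ gives room to absorb the noise into half the penalty. The usual decomposability argument on the support $\mathcal S$ (of size at most $s$) for the $\mathbf W$-penalty, with no decomposition for the $\mathbf D^\ell$, then yields a cone-type inequality of the form
\[
\text{loss}+\tfrac{\lambda_0}{2}\|(\Delta_W)_{\mathcal S^c}\|_{2,1}+\tfrac12\sum_\ell\lambda_\ell\|\Delta_\ell\|_{2,1}\lesssim\lambda_0\sqrt s\|\Delta_W\|_F+\lambda_0\epsilon_D'+\sum_\ell\lambda_\ell h_\ell^\star .
\]
Here the bias-like slack $\sum_\ell\lambda_\ell h_\ell^\star\asymp\lambda_0\sqrt{N/n_S}\,\bar h$ comes from $\|\mathbf D^{\ell\star}\|_{2,1}$. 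The penalty on $\mathbf W-\frac{n_S}{N}\sum\mathbf D$ is treated via the triangle inequality with $\bar\Delta$, which is controlled by $\frac{n_S}{N}\sum\|\Delta_\ell\|_{2,1}$ and absorbed because $\lambda_\ell\ge\frac{n_S}{N}\lambda_0$.

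The third step, which we expect to be the main obstacle, is a restricted eigenvalue condition for the stacked design on this cone, where $\mathbf W$ is approximately $s$-row-sparse and the $\Delta_\ell$ are only $\ell_{2,1}$-bounded. We would use a sub-Gaussian lower deviation bound of the form
\[
\|\mathbf X^\ell\mathbf A\|_F^2/n_\ell\ge c\|\mathbf A\|_F^2-C\frac{\log p}{n_\ell}\|\mathbf A\|_{2,1}^2 .
\]
This bound holds simultaneously for all $\mathbf A$ with the probability term $C_1(L+1)Kp\,e^{-C_2n_T}$, obtained by applying a Rudelson--Zhou/Raskutti-type result columnwise plus a union bound. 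We apply it per domain and sum. The cross terms leave a deficit of order
\[
\frac{\log p}{n_S}\Big(s\|\Delta_W\|_F^2+\big(\textstyle\sum_\ell\|\Delta_\ell\|_{2,1}\big)^2\Big).
\]
The first part is absorbed using $s\log p/n_S=o(1)$. The second is bounded by the cone inequality in terms of $\bar h$, and the condition $L\bar h\log p/\sqrt{Kn_S}\le M_F$ is exactly what keeps this deficit of the same order as the slack rather than dominant. Solving the resulting quadratic inequality in $x=\|\Delta_W\|_F/\sqrt K$ gives
\[
x^2\lesssim\frac{s\psi_K}{N}+\lambda_0\sqrt{\frac{N}{n_S}}\frac{\bar h}{\sqrt K}\cdot\frac{N}{L n_S}(\cdots).
\]
With the stated $\lambda_0$, the middle contribution simplifies to $\frac{\bar h}{\sqrt K}\sqrt{\psi_K/n_S}$, whose square root is the second term of \eqref{rate_fused}. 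Checking this simplification, namely that the factors of $L$ and $N$ cancel under $n_S\ge n_T$, is the delicate bookkeeping we would carry out last.
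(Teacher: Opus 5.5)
Your argument is essentially the paper's proof. Both use a basic inequality on the score event, decomposability around the row-sparse target, a cone inequality with slack $H_\lambda=\sum_\ell\lambda_\ell h_\ell^\star$, per-domain restricted curvature combined across domains, and a quadratic in $\lVert\Delta_W\rVert_F/\sqrt K$.

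The only difference is the coordinates. The paper works in $(\mathbf B^0,\mathbf B^\ell-\mathbf B^0)$ and passes to $\widehat\Delta^F$ through $\widehat\Delta^0=\widehat\Delta^F-\frac{n_S}{N}\sum_\ell\widehat\Delta^\ell$; that is your triangle inequality with $\bar\Delta$. A few smaller points:
\begin{itemize}
\item Absorbing $\frac32\lambda_0\lVert\bar\Delta\rVert_{2,1}$ needs $\lambda_\ell>3\frac{n_S}{N}\lambda_0$, not merely $\lambda_\ell\geq\frac{n_S}{N}\lambda_0$. The stated tuning satisfies this.
\item Your identity $\sum_\ell\frac{n_\ell}{N}(\Delta_\ell-\bar\Delta)=0$ is not what the score pairing uses, since linearity suffices there. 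It is, however, exactly what yields the paper's weighted-Jensen step $\sum_\ell\frac{n_\ell}{N}\lVert\Delta_W+\Delta_\ell-\bar\Delta\rVert_F^2\geq\lVert\Delta_W\rVert_F^2$ in the curvature argument.
\item The $\lambda_0\epsilon_D'$ term disappears if you decompose around $\mathbf W^F-\frac{n_S}{N}\sum_\ell\mathbf D^{\ell\star}=\mathbf B^0$. It is harmless anyway, because $\lambda_0\epsilon_D\lesssim H_\lambda$.
\end{itemize}

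One step of the bookkeeping you deferred is wrong as written, and it would not close under the stated hypotheses.
\begin{itemize}
\item Summing the per-domain curvature bounds with weights $n_\ell/N$ turns every tolerance into $\frac{\log p}{KN}\lVert\cdot\rVert_{2,1}^2$. The contrast part of the deficit is therefore $\frac{\log p}{KN}\sum_\ell\lVert\Delta_\ell\rVert_{2,1}^2$, not $\frac{\log p}{Kn_S}(\sum_\ell\lVert\Delta_\ell\rVert_{2,1})^2$.
\item The cone gives $\sum_\ell\lVert\Delta_\ell\rVert_{2,1}\lesssim\sqrt{N/n_S}\,(\sqrt s\lVert\Delta_W\rVert_F+H_\lambda/\lambda_0)$. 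This carries a $\lVert\Delta_W\rVert_F$ part, not only $\bar h$.
\item With your $1/n_S$ normalization, the deficit contains $\frac{Ls\log p}{n_S}\cdot\frac{\lVert\Delta_W\rVert_F^2}{K}$. It also contains a heterogeneity remainder whose ratio to $H_\lambda$ is of order $L^2\bar h\log p/\sqrt{Kn_S\psi_K}$. Neither is controlled by $s\log p/n_S=o(1)$ and $L\bar h\log p/\sqrt{Kn_S}\leq M_F$.
\item With the correct $1/N$, the factor $N/n_S$ cancels. The first term is then absorbed using $s\log p/n_S=o(1)$, and the ratio becomes $L\bar h\log p/\sqrt{Kn_S\psi_K}\lesssim M_F$. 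These are the paper's conditions $u_m=o(1)$ and $v_m/K\leq M_v$ in Lemma~\ref{lemma_fused_average_curvature}.
\end{itemize}

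The quadratic then gives $\lVert\Delta_W\rVert_F/\sqrt K\lesssim\sqrt{Ks}\lambda_0+\sqrt{H_\lambda}$. Here $\sqrt{Ks}\lambda_0\lesssim\sqrt{s\psi_K/N}$, and $H_\lambda=8\lambda_0\sqrt{N/n_S}\,\bar h\lesssim\bar h\sqrt{\psi_K/(Kn_S)}$, using $Ln_S\leq N\leq 2Ln_S$. This gives the stated rate. It should replace your final display, whose extra factors $1/\sqrt K$ and $N/(Ln_S)$ do not belong there.
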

The bound in \eqref{rate_fused} has three components. The first,
\[
\sqrt{\frac{s\psi_K}{N}},
\]
is the joint-sparse multitask estimation error based on the total
source--target sample size $N$. The second,
\[
\sqrt{\frac{\bar h}{\sqrt K}
\sqrt{\frac{\psi_K}{n_S}}},
\]
is the cost of estimating source--target contrasts from the source samples. The
final term, $\epsilon_D/\sqrt K$, bounds the normalized Frobenius approximation
bias from using the population fused coefficient $\mathbf{W}^F$ in place of the
target coefficient $\mathbf{B}^0$. Because $\epsilon_D$ can be much smaller than
$\bar h$ when source deviations cancel, the theorem characterizes the regime
in which TMTL(Fused) is most attractive.

The probability bound combines a term of order $(L+1)Kp\exp(-C_2 n_T)$, arising from the restricted curvature and predictor column-norms, with row-wise empirical correlations between the design and noise (see Section~\ref{app:high-prob-events} of the Appendices).
It tends to one, for example, if
\[
\log\{(L+1)Kp\}=o(n_T)
\qquad\text{and}\qquad
(L+1)p^{1-c}=o(1).
\]
\begin{remark}[Comparison with target-only multitask learning]
Under the normalization and restricted-eigenvalue conditions of Corollary~4.1
of \citet{lounici2011oracle}, the target-only multitask rate is
\begin{align}
    \sqrt{\frac{s}{n_T}\left(1+\frac{A\log p}{K}\right)},
    \qquad A>5/2.
    \label{rate_lounici}
\end{align}
Writing $r_T$ for the rate in \eqref{rate_lounici}, the leading term in
\eqref{rate_fused} is $o(r_T)$ when
\[
\frac{n_T\psi_K}{N\{1+A\log p/K\}}=o(1).
\]
The full right-hand side of \eqref{rate_fused} is $o(r_T)$ if, in addition,
\[
\sqrt{\frac{\bar h}{\sqrt K}\sqrt{\frac{\psi_K}{n_S}}}
+\frac{\epsilon_D}{\sqrt K}=o(r_T).
\]
Thus, a larger total sample size alone does not imply improvement, and this
comparison does not establish uniform dominance.
\end{remark}
\begin{remark}[Comparison with single-output transfer learning]
\label{rem:single-output-comparison}
Joint row sparsity allows the responses to share information about which
predictors are active. The leading estimation term in \eqref{rate_fused} is
\[
\sqrt{\frac{s\psi_K}{N}}
=
\sqrt{\frac{s}{N}+\frac{s\log p}{NK}},
\]
compared with a leading term of order $\sqrt{s\log p/N}$ for separate
scalar-response transfer fits when the individual task sparsities are
comparable to the union support size $s$. Thus, transfer increases the
available sample size, while multitask structure reduces the support-selection
component of the estimation bound. Under the conditions of
Theorem~\ref{thm:fused}, when $K$ is of order $\log p$ or larger, the leading
term is of order $\sqrt{s/N}$.

Section~\ref{app:single-output-comparison} of the Appendices makes this
comparison precise, matching \eqref{rate_fused} term by term against the
single-output transfer rate of \citet{he2024transfusion}: the mixed row norms
replace the $\ell_1$ norms, and the support-selection logarithm is divided
among the responses.

The above comparison is most informative when the responses share much of their
support. If individual tasks involve substantially fewer predictors than the
union, the corresponding bounds for separate fits can be smaller than the
benchmark above. The overall benefit also depends on how $\bar h$ and
$\epsilon_D$ scale with $K$. Without additional scaling assumptions, the
comparison does not imply uniform dominance over single-output transfer
learning.

\end{remark}
\subsection{Upper bound for TMTL(Debiased)}
\label{sec:debiased}
When the source--target shifts are directionally aligned, $\epsilon_D$ is
comparable to $\bar h$, so the directional-bias term in \eqref{rate_fused} can
be substantial.
The row-sparse correction in
\eqref{debiasedobjective}--\eqref{debiasedestimator} removes that term, at the
cost of stronger sample-size conditions, because the correction is estimated
from the target sample alone.
\begin{theorem}[TMTL(Debiased)]\label{thm:debiased}
Fix $c>8$ and assume the structural, design, error, sample-size, and sequence
conditions of Theorem~\ref{thm:fused}, but not its tuning specification. In
addition, suppose that, for some fixed $M_T<\infty$,
$s\log p/n_T\le M_T$ for all sufficiently large $m$, and
\[
\frac{\bar h\log p}{\sqrt{K n_T}}+\frac{Ls\log p}{n_S}=o(1).
\]
Define regime~A by
\[
\frac{\bar h}{\sqrt K}
\sqrt{\frac{\psi_K}{n_T}}
\leq
\frac{s\psi_K}{n_S}.
\]
Choose $\lambda_\ell=a_\ell\lambda_0$ with
\[
(\lambda_0,a_\ell)=
\begin{cases}
\left(
\dfrac{c_0}{N} \sqrt{\dfrac{(L+1)n_S}{K}\psi_K},
8\sqrt{n_S/N}
\right), & \text{in regime A},\\[3mm]
\left(
c_0\sqrt{\dfrac{L+1}{KN}\psi_K},
8n_S/N
\right), & \text{otherwise},
\end{cases}
\]
where $c_0$ is chosen sufficiently large as in Theorem~\ref{thm:fused}, and set
\[
\widetilde\lambda
=
c_0\sqrt{\frac{\psi_K}{K n_T}}.
\]
Then, for all sufficiently large $m$, with probability at least
\[
1-C_1(L+1)Kp\exp(-C_2n_T)-C_3(L+1)p^{1-c},
\]
\begin{align}
\begin{split}
    \frac{1}{\sqrt{K}} \lVert \widehat{\mathbf{W}}^D - \mathbf{B}^0 \rVert_F
    \lesssim 
    \sqrt{\frac{s\psi_K}{N}} +
    \sqrt{\frac{\bar h}{\sqrt{K}} \sqrt{\frac{\psi_K}{n_T}}}.
\end{split}
\label{rate_debiased}
\end{align}
Here $C_1,C_2,C_3$ depend only on the fixed design and noise constants. The
hidden constant in $\lesssim$ may additionally depend on the fixed constants
$c$, $c_0$, $M_F$, and $M_T$, but not on $m$.
\end{theorem}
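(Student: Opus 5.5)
The plan is a two-stage argument. First we control the fused estimator $\widehat{\mathbf W}^F$ around its population counterpart $\mathbf W^F$, in the mixed $(2,1)$ norm as well as in Frobenius norm. Then we treat the correction step \eqref{debiasedobjective} as a target-only group lasso. Its unknown is $\mathbf D^\star:=\mathbf B^0-\widehat{\mathbf W}^F$, which is not row sparse but is \emph{weakly} sparse in $\|\cdot\|_{2,1}$. We work on the intersection of the high-probability events used for Theorem~\ref{thm:fused}:
\begin{itemize}
\item restricted eigenvalue conditions for each $\mathbf X^\ell$ over $(2,1)$-cones;
\item column-norm bounds;
\item row-wise noise--design correlation bounds $\|(\mathbf X^\ell)^\top\mathbf E^\ell\|_{2,\infty}/(n_\ell K)\lesssim\sqrt{\psi_K/(Kn_\ell)}$.
\end{itemize}
Adding the corresponding target event for the second stage changes only the constants $C_1,C_2,C_3$.

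\textbf{Step 1 (fused stage in $\ell_{2,1}$).} Rerun the basic-inequality argument behind Theorem~\ref{thm:fused} in the reparametrization $\mathbf W=\frac{n_S}{N}\sum_\ell\mathbf B^\ell+\frac{n_T}{N}\mathbf B^0$, $\boldsymbol\delta^\ell=\mathbf B^\ell-\mathbf B^0$. The aim this time is a bound on $\|\widehat{\mathbf W}^F-\mathbf W^F\|_{2,1}$ in addition to the Frobenius bound, namely
\[
\|\widehat{\mathbf W}^F-\mathbf W^F\|_{2,1}\lesssim s\sqrt{K\psi_K/N}\cdot\sqrt{\cdot}+\bar h ,
\]
with the exact form depending on the regime. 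This is where the two tuning choices enter. In regime A the choice $a_\ell=8\sqrt{n_S/N}$ is kept, so the contrast penalties dominate the source noise and the $\ell_{2,1}$ error is of order $s\sqrt{K}\,\lambda_0$-scale plus $\bar h$. Otherwise we take $a_\ell=8n_S/N$ with the larger $\lambda_0\asymp\sqrt{(L+1)\psi_K/(KN)}$. The intent is that the error $\widehat{\mathbf W}^F-\mathbf W^F$ is then driven by the sparse part, with $\ell_{2,1}$ size $\lesssim s\sqrt{K\psi_K/n_S}$ up to the condition $Ls\log p/n_S=o(1)$. The bias $\mathbf W^F-\mathbf B^0$ has $\ell_{2,1}$ norm $\epsilon_D\le\bar h$. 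Combining, $\|\mathbf D^\star\|_{2,1}\lesssim\bar h+R_1$, where $R_1$ is the $\ell_{2,1}$ estimation error. The Frobenius error of $\widehat{\mathbf W}^F-\mathbf W^F$ is already $\lesssim\sqrt{K}\bigl(\sqrt{s\psi_K/N}+\sqrt{(\bar h/\sqrt K)\sqrt{\psi_K/n_S}}\bigr)$ by Theorem~\ref{thm:fused}.

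\textbf{Step 2 (correction as a weakly sparse group lasso).} Since $\widehat{\mathbf W}^F$ is independent of nothing useful, we condition only on the uniform events. We write $\mathbf Y^0-\mathbf X^0\widehat{\mathbf W}^F=\mathbf X^0\mathbf D^\star+\mathbf E^0$. The basic inequality for \eqref{debiasedobjective} with $\widetilde\lambda\ge2\|(\mathbf X^0)^\top\mathbf E^0\|_{2,\infty}/(n_TK)$ gives, for $\boldsymbol\Delta=\widehat{\mathbf D}-\mathbf D^\star$,
\[
\tfrac{1}{n_TK}\|\mathbf X^0\boldsymbol\Delta\|_F^2\le 3\widetilde\lambda\|\boldsymbol\Delta\|_{2,1}\text{-type terms}\;\Rightarrow\;\tfrac{1}{n_TK}\|\mathbf X^0\boldsymbol\Delta\|_F^2+\widetilde\lambda\|\boldsymbol\Delta\|_{2,1}\lesssim\widetilde\lambda\|\mathbf D^\star\|_{2,1}.
\]
This uses the standard trick of comparing against the zero-support ("$\ell_1$-ball") oracle. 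Combining it with a restricted eigenvalue over $\ell_{2,1}$ balls, which follows from the sub-Gaussian deviation bound $\|\mathbf X^0\mathbf A\|_F^2/n_T\ge c\|\mathbf A\|_F^2-C(\log p/n_T)\|\mathbf A\|_{2,1}^2/K$-type (the $K$ is absorbed by $\psi_K$), gives
\[
\tfrac1K\|\boldsymbol\Delta\|_F^2\lesssim\widetilde\lambda\|\mathbf D^\star\|_{2,1}+\tfrac{\log p}{n_TK}\|\mathbf D^\star\|_{2,1}^2 .
\]
Substituting $\widetilde\lambda=c_0\sqrt{\psi_K/(Kn_T)}$ and $\|\mathbf D^\star\|_{2,1}\lesssim\bar h+R_1$ turns the $\bar h$ part into $(\bar h/\sqrt K)\sqrt{\psi_K/n_T}$. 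The quadratic term is lower order by $\bar h\log p/\sqrt{Kn_T}=o(1)$. Finally, $\widehat{\mathbf W}^D-\mathbf B^0=\boldsymbol\Delta$ exactly, since $\widehat{\mathbf W}^D=\widehat{\mathbf W}^F+\widehat{\mathbf D}$ and $\mathbf D^\star=\mathbf B^0-\widehat{\mathbf W}^F$. So this is the whole error, and no fused Frobenius term is added separately.

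\textbf{Main obstacle.} The delicate part is the $R_1$ contribution, $\widetilde\lambda R_1/K$. It must be shown to be $\lesssim s\psi_K/N$ plus the $\bar h$ term, rather than $s\psi_K/\sqrt{n_Tn_S}$ or worse. A naive cone bound $R_1\lesssim\sqrt s\,\|\cdot\|_F$ gives $\sqrt{s\psi_K/n_T}\cdot\sqrt{s\psi_K/N}$, which is too large. My first attempt is the following. Instead of treating $\widehat{\mathbf W}^F-\mathbf W^F$ as part of $\mathbf D^\star$, split $\mathbf D^\star$ into its sparse estimation part on the support $\mathcal S(\mathbf B^0)$ plus the heterogeneity part. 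Then handle the sparse part with the usual exact-sparsity cone argument, so that its cost is $\sqrt{s}\cdot\widetilde\lambda\sqrt K\cdot\|\cdot\|_F$-compatible. This is where $s\log p/n_T\le M_T$ makes the target RE usable. The regime split is chosen precisely so that, in regime A, the fused $\ell_{2,1}$ error is dominated by $s\sqrt{K}\psi_K/n_S$-type terms, while in the other regime the heterogeneity term dominates. The condition $Ls\log p/n_S=o(1)$ is then used to absorb the cross terms. If this fails, the fallback is a peeling argument over the size of $\|\boldsymbol\Delta\|_{2,1}$.
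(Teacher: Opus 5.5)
There is a genuine gap, and it sits exactly at the step you flag as the main obstacle. You take the random matrix $\mathbf D^\star=\mathbf B^0-\widehat{\mathbf W}^F$ as the reference point in the basic inequality for \eqref{debiasedobjective}. As a result the fused estimation error $\boldsymbol\Delta^F:=\widehat{\mathbf W}^F-\mathbf W^F$ sits inside the penalty and is paid linearly, as $\widetilde\lambda\lVert\boldsymbol\Delta^F\rVert_{2,1}$, at the target penalty level.

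Your proposed repair does not change this.
\begin{itemize}
\item If the support part is charged as $\widetilde\lambda\sqrt s\lVert\boldsymbol\Delta\rVert_F$ (the exact-sparsity cone for a target-only group lasso), you get $K^{-1/2}\lVert\boldsymbol\Delta\rVert_F\lesssim\sqrt{Ks}\,\widetilde\lambda=c_0\sqrt{s\psi_K/n_T}$. That is the target-only rate.
\item If it is charged as $\widetilde\lambda\sqrt s\lVert\boldsymbol\Delta^F\rVert_F$, you reproduce the $s\psi_K/\sqrt{n_TN}$ you already rejected. Moreover, $\boldsymbol\Delta^F$ is not supported on $\mathcal S(\mathbf B^0)$; it only satisfies a cone condition with slack $4H_\lambda/\lambda_0$.
\end{itemize}
Peeling does not help either, since the difficulty is the choice of reference point, not uniform control of a random process.

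The same linear accounting also spoils the heterogeneity term in regime~A. There $a_\ell=8\sqrt{n_S/N}$, so the available $\ell_{2,1}$ bound on the fused error contains $H_\lambda/\lambda_0=8\sqrt{N/n_S}\,\bar h\asymp\sqrt L\,\bar h$, not $\bar h$. Then $\widetilde\lambda\sqrt L\,\bar h$ exceeds the target term $(\bar h/\sqrt K)\sqrt{\psi_K/n_T}$ by a factor $\sqrt L$.

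The missing idea is to compare $\widehat{\mathbf D}$ with the \emph{deterministic} population correction $\mathbf B^0-\mathbf W^F$, whose mixed norm is $\epsilon_D\le\bar h$. Split the error as $\widehat{\mathbf W}^D-\mathbf B^0=\boldsymbol\Delta^F+\boldsymbol\Delta^D$ with $\boldsymbol\Delta^D:=\widehat{\mathbf D}-(\mathbf B^0-\mathbf W^F)$.
\begin{itemize}
\item At that reference point the target residual is $\mathbf E^0-\mathbf X^0\boldsymbol\Delta^F$. The fused error therefore enters through the gradient rather than the penalty.
\item Young's inequality bounds its cross term with $\boldsymbol\Delta^D$ by a quarter of the target quadratic form of $\boldsymbol\Delta^D$, plus the prediction discrepancy $Q_F=(n_TK)^{-1}\lVert\mathbf X^0\boldsymbol\Delta^F\rVert_F^2$. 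This is Lemma~\ref{lemma_debias_basic}.
\item Target lower curvature plus a smallness condition then give $K^{-1/2}\lVert\boldsymbol\Delta^D\rVert_F\lesssim\sqrt{Q_F}+\sqrt{\widetilde\lambda\epsilon_D}$.
\item The upper curvature bound of Lemma~\ref{lemma_rscrsm} yields $Q_F\lesssim K^{-1}\lVert\boldsymbol\Delta^F\rVert_F^2+\{\log p/(Kn_T)\}\lVert\boldsymbol\Delta^F\rVert_{2,1}^2$. The first term already has the pooled order. The fused $\ell_{2,1}$ error now enters only squared, with the small prefactor $\log p/(Kn_T)$. It is controlled by $s\log p/n_T\le M_T$, by $\bar h\log p/\sqrt{Kn_T}+Ls\log p/n_S=o(1)$, and by the regime-A inequality; this is exactly where those hypotheses are used.
\item Finally, $\lVert\boldsymbol\Delta^F\rVert_F$ is added separately by the triangle inequality, rather than being carried by the correction lasso.
\end{itemize}
Your Step~1 (you do need an $\ell_{2,1}$ bound on the fused error) and your handling of the $\bar h$ part are consistent with this. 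It is the reference point in Step~2 that must change.
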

\begin{remark}[Oracle tuning]
The theorem chooses between two regularization scalings according to whether
regime~A holds, which depends on the unknown sparsity and heterogeneity levels
$s$ and $\bar h$. This oracle choice characterizes the achievable rate in the
two source-heterogeneity regimes. In practice, validation can compare candidates
under both scalings, as described in Section~\ref{sec:hyper-param-tmtl}.
\label{oracle_tuning}
\end{remark}
The first term in \eqref{rate_debiased} is the joint-sparse estimation term from
the fused bound. The second term depends on $n_T$ rather than $n_S$ because the
correction is estimated from the target data. Unlike \eqref{rate_fused}, the
debiased rate contains no explicit directional bias term
$\epsilon_D/\sqrt K$; instead, $\bar h$ captures the remaining cost of
heterogeneity. This benefit comes at the cost of the stronger sample-size
conditions above and potentially greater variance when the target sample is
small. Relative to the single-output debiased transfer rate of
\citet{he2024transfusion}, the
multi-output rate reflects row-sparse sharing across tasks through $\psi_K$.
\subsection{A minimax lower bound}
\label{sec:lower-bound}

Theorems~\ref{thm:fused} and~\ref{thm:debiased} say what the two estimators
achieve. To judge which parts of those rates are intrinsic to the problem, we
now bound below the error of every estimator that uses all of the target and
source data. We work in a homogeneous Gaussian submodel, where the pairwise
Kullback--Leibler divergence between data distributions can be computed exactly,
and over a slice of $\Theta(s,\boldsymbol h)$ in which all sources share one
contrast radius. Because the submodel is contained in the model of
Section~\ref{sec:problem-setup-tmtl}, the resulting lower bound established for
the submodel is also a valid lower bound for the broader model. However, it does
not measure any additional difficulty caused by heterogeneous designs.

Let $n_0=n_T$, let $n_\ell=n_S$ for $\ell\in[L]$, and let $N=n_T+Ln_S$. For a
fixed $0<\sigma\leq\sigma_{\max}$, suppose that all design vectors and errors
are mutually independent and satisfy
\begin{equation}
\begin{aligned}
\mathbf x_i^\ell&\sim \mathcal N_p(\mathbf0,\mathbf I_p),\qquad
\boldsymbol\epsilon_i^\ell\sim \mathcal N_K(\mathbf0,\sigma^2\mathbf I_K),\\
\mathbf y_i^\ell&=(\mathbf B^\ell)^\top\mathbf x_i^\ell+
\boldsymbol\epsilon_i^\ell,\qquad i\in[n_\ell],\quad 0\leq\ell\leq L.
\end{aligned}
\label{eq:lower-gaussian-model}
\end{equation}
Write $\mathbb P_{\mathbf B}$ and $\mathbb E_{\mathbf B}$ for probability
and expectation under the coefficient tuple
$\mathbf B=(\mathbf B^0,\ldots,\mathbf B^L)$ in
\eqref{eq:lower-gaussian-model}.
For $h\geq0$, define
\begin{equation}
\begin{aligned}
\Theta_{\mathrm{MT}}(s,h):=\Theta\bigl(s,(h,\ldots,h)^\top\bigr)
=\left\{\begin{aligned}
&(\mathbf B^0,\ldots,\mathbf B^L)\in(\mathbb R^{p\times K})^{L+1}:\\
&|\mathcal S(\mathbf B^0)|\leq s,\quad
\max_{\ell\in[L]}\lVert\mathbf B^\ell-\mathbf B^0\rVert_{2,1}\leq h
\end{aligned}\right\}.
\end{aligned}
\label{eq:lower-parameter-class}
\end{equation}
The following theorem gives a minimax lower bound over this parameter space.

\begin{theorem}[Multitask transfer lower bound]
\label{thm:multitask-transfer-lower-bound}
Suppose $p,K,L,n_T,n_S$ are positive integers with $p\geq8$, and let
$s\in[p]$ satisfy $s\leq p/8$. Define
\begin{equation}
\omega_{q,K}:=1+\frac{1}{K}\log\left(\frac{\mathrm e p}{q}\right),\qquad 1\leq q\leq p,
\label{eq:lower-complexity}
\end{equation}
and, for $h\geq0$, set
\begin{equation}
r_{\mathrm{pool}}:=\frac{\sigma^2s\omega_{s,K}}{N},\qquad r_{\mathrm{shift}}:=\frac{\sigma h}{\sqrt K}\sqrt{\frac{\omega_{s,K}}{n_T}}\wedge\frac{h^2}{K}\wedge\frac{\sigma^2s\omega_{s,K}}{n_T}.
\label{eq:lower-rates}
\end{equation}
Under \eqref{eq:lower-gaussian-model}, there are universal constants
$c_{\mathrm{lb}},c_1>0$ such that, for every $h\geq0$,
\begin{equation}
\inf_{\widetilde{\mathbf B}}\sup_{\mathbf B\in\Theta_{\mathrm{MT}}(s,h)}\mathbb P_{\mathbf B}\left\{\frac{1}{K}\lVert\widetilde{\mathbf B}-\mathbf B^0\rVert_F^2\geq c_{\mathrm{lb}}\bigl(r_{\mathrm{pool}}+r_{\mathrm{shift}}\bigr)\right\}\geq c_1,
\label{eq:lower-probability}
\end{equation}
where the infimum is over all measurable estimators based on the complete target and
source samples. Consequently,
\begin{equation}
\inf_{\widetilde{\mathbf B}}\sup_{\mathbf B\in\Theta_{\mathrm{MT}}(s,h)}\mathbb E_{\mathbf B}\left[\frac{1}{K}\lVert\widetilde{\mathbf B}-\mathbf B^0\rVert_F^2\right]\geq c_{\mathrm{lb}}c_1\bigl(r_{\mathrm{pool}}+r_{\mathrm{shift}}\bigr).
\label{eq:lower-expectation}
\end{equation}
\end{theorem}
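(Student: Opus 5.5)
Since $\max(a,b)\ge (a+b)/2$, it suffices to prove the probability bound \eqref{eq:lower-probability} separately for $r_{\mathrm{pool}}$ and for $r_{\mathrm{shift}}$, each over a subfamily of $\Theta_{\mathrm{MT}}(s,h)$. The expectation bound \eqref{eq:lower-expectation} then follows from Markov's inequality. Both parts use Fano's method in its Tsybakov form. Under \eqref{eq:lower-gaussian-model} with identity design, the Kullback--Leibler divergence between the laws of the full data under tuples $\mathbf B$ and $\mathbf B'$ is exact:
\[
\mathrm{KL}(\mathbb P_{\mathbf B},\mathbb P_{\mathbf B'})=\sum_{\ell=0}^L\frac{n_\ell}{2\sigma^2}\lVert\mathbf B^\ell-\mathbf B'^\ell\rVert_F^2 .
\]
So I need packings of $p\times K$ matrices that are separated in $K^{-1}\lVert\cdot\rVert_F^2$ and whose KL divergences are at most a small multiple of the log-cardinality.

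\emph{Pooled term.} Set $\mathbf B^\ell=\mathbf B^0$ for every $\ell$; this lies in $\Theta_{\mathrm{MT}}(s,h)$ for every $h\ge0$, and the KL becomes $N\lVert\mathbf B^0-\mathbf B'^0\rVert_F^2/(2\sigma^2)$. I build a packing of row-sparse matrices in two parts. First, a Gilbert--Varshamov packing of supports $S\subset[p]$ with $|S|=s$ and pairwise symmetric differences at least $s/2$; it has log-cardinality of order $s\log(\mathrm ep/s)$ because $s\le p/8$. Second, a Varshamov--Gilbert sign packing of $\{\pm1\}^{s\times K}$ with log-cardinality of order $sK$. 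The matrices are $\mathbf B^0=\delta\,\mathbf 1_S\odot\boldsymbol\Xi$, with $\boldsymbol\Xi$ ranging over sign matrices. To combine the two parts, I take the better of the two sub-packings, which gives log-cardinality of order $s(K+\log(\mathrm ep/s))=sK\omega_{s,K}$ and pairwise distances $\lVert\cdot\rVert_F^2\gtrsim sK\delta^2$. Choosing $\delta^2\asymp\sigma^2\omega_{s,K}/N$ (the max of the two choices) makes the KL at most $c\,sK\omega_{s,K}$ and the loss at least of order $r_{\mathrm{pool}}$.

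\emph{Shift term.} Here I let the target vary while all sources stay fixed at $\mathbf B^\ell=\mathbf 0$. The KL then involves only $n_T$, and the constraint becomes $\lVert\mathbf B^0\rVert_{2,1}\le h$ together with $|\mathcal S(\mathbf B^0)|\le s$. I reuse the same construction with an adjustable support size $q\le s$ and entry size $\delta$. This gives $\lVert\mathbf B^0\rVert_{2,1}=q\sqrt K\delta$, squared separation of order $qK\delta^2$, and KL of order $n_TqK\delta^2/\sigma^2$ against entropy $qK\omega_{q,K}$. There are three regimes.
\begin{enumerate}
\item \emph{Large $h$.} Take $q=s$ and $\delta^2\asymp\sigma^2\omega_{s,K}/n_T$, which yields $\sigma^2s\omega_{s,K}/n_T$.
\item \emph{Intermediate $h$.} Take $\delta^2\asymp\sigma^2\omega_{q,K}/n_T$ and set $q\asymp h\sqrt{n_T}/(\sigma\sqrt{K\omega})$ so that the $(2,1)$-budget binds. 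The loss $q\delta^2$ then becomes $\sigma h\sqrt{\omega/(Kn_T)}$.
\item \emph{Small $h$}, when this $q$ falls below $1$. Take $q=1$, a single row, and $\delta=h/\sqrt K$. This gives a loss $h^2/K$; two-point or sign-packing over $K$ coordinates keeps the KL bounded, since $n_Th^2/\sigma^2\lesssim K\omega$ in this regime.
\end{enumerate}

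The main obstacle is the intermediate regime. There the log factor is $\omega_{q,K}$ with $q<s$, not $\omega_{s,K}$, so I must check that $\omega_{q,K}\ge\omega_{s,K}$ (true, since $q\le s$). I also need to handle the rounding of $q$ and confirm that the three choices together realize the stated three-way minimum up to constants. Taking the larger of the pooled and shift bounds completes the proof.
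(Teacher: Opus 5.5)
Your architecture matches the paper's proof. It uses the exact KL identity, a pooled family with all $L+1$ domains equal, and a shift family with every source fixed at $\mathbf 0$ and the target ranging over a $q$-row packing. It then chooses $q$ in three regimes, combines via $\max\{a,b\}\ge(a+b)/2$, and finishes with Markov. Two of your choices differ from the paper but are harmless. Taking the better of the support and sign sub-packings instead of their product (Lemma~\ref{lemma:lower-row-packing}) works because $\max\{qK,q\log(\mathrm ep/q)\}\ge\tfrac12 qK\omega_{q,K}$ and both sub-packings separate by $\gtrsim qK\delta^2$. Using Tsybakov's Fano instead of the $M\ge16$ version is also fine.

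The step that fails as written is the small-$h$ regime. A two-point test has log-cardinality $\log 2$, and a sign packing of one fixed row has log-cardinality $O(K)$. But the KL there is $\asymp n_Th^2/\sigma^2$, and, as you note yourself, this must be allowed to reach order $K\omega$, which is at least $K\omega_{s,K}=K+\log(\mathrm ep/s)$. When $K\ll\log(\mathrm ep/s)$ (for instance $K=s=1$, the scalar case), the branch $h^2/K$ of $r_{\mathrm{shift}}$ persists up to $h\asymp\sigma\sqrt{\log(\mathrm ep/s)/n_T}$. Your stated argument delivers $h^2/K$ only for $h\lesssim\sigma\sqrt{K/n_T}$, so it loses a $\log p$ factor. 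The remedy is already in your construction. At $q=1$ you must also vary which of the $p$ rows is active; this is the paper's $\mathcal M(1,a)$, with $\log|\mathcal M|\ge c(K+\log(\mathrm ep))$. You should also take $\delta=c\,h/\sqrt K$ with $c$ small, so that the KL is a small multiple of that entropy.

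In the intermediate regime, be careful how $q$ and $\delta$ are coupled. Suppose $q$ is defined with $\omega_{s,K}$ while $\delta^2\asymp\sigma^2\omega_{q,K}/n_T$. Then $\lVert\mathbf B^0\rVert_{2,1}=q\sqrt K\delta\asymp h\sqrt{\omega_{q,K}/\omega_{s,K}}$, which can exceed $h$ by an unbounded factor. Reading $\omega$ as $\omega_{q,K}$ gives an implicit equation that is solvable but awkward. The paper avoids this by setting the row norm to $c\min\{h/q,\sigma\sqrt{K\omega_{q,K}/n_T}\}$, which satisfies the budget and the KL condition at once. It then uses $\omega_{q,K}\ge\omega_{s,K}$ to bound the loss below by $K^{-1}\min\{h^2/q,A_Tq\}$ with $A_T=\sigma^2K\omega_{s,K}/n_T$, and balances the two terms by $q=\lfloor h/\sqrt{A_T}\rfloor$. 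This also fixes the regime boundaries at $\sqrt{A_T}$ and $s\sqrt{A_T}$.
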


Even when all domains share the same coefficient matrix, estimating its
nonzero rows requires the pooled rate $r_{\mathrm{pool}}$, based on all $N$
observations. This is the only component that remains when $h=0$.

As the source--target differences grow, $r_{\mathrm{shift}}$ describes the
additional difficulty of estimating the target. For small $h$, this cost is
$h^2/K$. In the intermediate range of $h$, it grows linearly with $h$, at rate
$\sigma hK^{-1/2}(\omega_{s,K}/n_T)^{1/2}$. For large $h$, it reaches the
target-only rate $\sigma^2s\omega_{s,K}/n_T$: even an estimator with access
to every source sample cannot improve on that order uniformly over the class.

The group-sparse complexity $\omega_{s,K}$ in
\eqref{eq:lower-complexity} reflects both the $K$ coefficients within an
active row and the cost of selecting the active rows.

Setting $K=1$ recovers the scalar-response problem, and
$r_{\mathrm{pool}}+r_{\mathrm{shift}}$ then reduces, up to constants, to
\begin{equation}
\frac{\sigma^2s\log(\mathrm e p/s)}{N}+\left\{\sigma h\sqrt{\frac{\log(\mathrm e p/s)}{n_T}}\wedge h^2\wedge\frac{\sigma^2s\log(\mathrm e p/s)}{n_T}\right\},
\label{eq:lower-scalar-reduction}
\end{equation}
the benchmark established for homogeneous designs by \citet{li2022transfer}
and \citet{tian2023transfer}. Theorem~\ref{thm:multitask-transfer-lower-bound}
therefore extends those results rather than competing with them: the effect of
multiple responses is confined to the factor $\omega_{s,K}$, which divides the
support-selection logarithm among the $K$ tasks. The proof is given in
Section~\ref{app:multitask-lower-bound} of the Appendices.

\subsection{Matching regimes and the large-radius gap}
\label{sec:bound-comparison}
On $\Theta_{\mathrm{MT}}(s,h)$, we have $\bar h\leq h$ and $\epsilon_D\leq h$.
Under their respective assumptions, Theorems~\ref{thm:fused}
and~\ref{thm:debiased} therefore give the squared-error bounds
\[
\begin{aligned}
\frac{1}{K}\lVert\widehat{\mathbf W}^F-\mathbf B^0\rVert_F^2
&\lesssim\frac{s\psi_K}{N}
+\frac{h}{\sqrt K}\sqrt{\frac{\psi_K}{n_S}}+\frac{h^2}{K},\\
\frac{1}{K}\lVert\widehat{\mathbf W}^D-\mathbf B^0\rVert_F^2
&\lesssim\frac{s\psi_K}{N}
+\frac{h}{\sqrt K}\sqrt{\frac{\psi_K}{n_T}},
\end{aligned}
\]
with high probability. Throughout this subsection $\sigma$ is fixed and its
constant factors are suppressed.

The two task-complexity factors differ only in the logarithm they carry:
$\psi_K=1+\log p/K$ in the upper bounds and
$\omega_{s,K}=1+\log(\mathrm ep/s)/K$ in the lower bound. If
$s\leq p^{1-\delta}$ for a fixed $\delta>0$, then
$1+\delta\log p\leq\log(\mathrm ep/s)\leq1+\log p$, so the two factors agree up
to a constant depending only on $\delta$. The same holds when
$K\gtrsim\log p$, since both are then of constant order.  We
assume $\psi_K\asymp\omega_{s,K}$ in what follows.

Which component of $r_{\mathrm{shift}}$ binds depends on the radius. Writing
$\tau:=\sigma(K\omega_{s,K}/n_T)^{1/2}$, the three components of
\eqref{eq:lower-rates} are ordered so that $h^2/K$ is smallest for
$h\leq\tau$, the term $\sigma hK^{-1/2}(\omega_{s,K}/n_T)^{1/2}$ for
$\tau\leq h\leq s\tau$, and the target-only cap
$\sigma^2s\omega_{s,K}/n_T$ for $h\geq s\tau$. We refer to these as the small,
intermediate and large radius regimes.

The estimation term $s\psi_K/N$ in both upper bounds matches
$r_{\mathrm{pool}}$ up to constants, so the cost of estimating the shared row
structure from the combined sample is of optimal order in every regime. The
heterogeneity term $hK^{-1/2}\sqrt{\psi_K/n_T}$ in the bound for
TMTL(Debiased) likewise matches the binding component of $r_{\mathrm{shift}}$
at intermediate radii.

At small radii TMTL(Fused) matches the lower bound when $L\lesssim s$, since
Young's inequality gives
\newpage
\[
\frac{h}{\sqrt K}\sqrt{\frac{\psi_K}{n_S}}
\leq\frac12\left(\frac{h^2}{K}+\frac{\psi_K}{n_S}\right),
\]
and $\psi_K/n_S\lesssim s\psi_K/N$ under that condition and $n_S\geq n_T$. Then
squaring \eqref{rate_fused} gives $s\psi_K/N+h^2/K$. The bound for
TMTL(Debiased) does not match throughout this regime. It does match at the lower
end: at $h=0$ both reduce to the pooled rate, and they continue to agree for
small positive $h$, as long as $s\psi_K/N$ dominates the debiased heterogeneity
term $(h/\sqrt K)\sqrt{\psi_K/n_T}$. Away from the pooled-rate region, the
debiased bound can be larger in order than the lower bound; it therefore does
not establish optimality uniformly throughout the small-radius regime.

At large radii neither available upper bound establishes the target-only rate
$s\psi_K/n_T$: the bound for TMTL(Fused) contains $h^2/K$ and that for
TMTL(Debiased) a term linear in $h$, and both grow without limit as the sources
move away from the target. 

\paragraph{Closing the large-radius gap.}
\label{sec:projection}
A projection closes that gap, by supplying the missing target-only branch. Its
role here is to say what is achievable, not to propose a third procedure. The
calibration below uses the unknown sparsity level $s$ and the constant
$\alpha_0$ in the restricted lower curvature bound for the target design (see
Lemma~\ref{lemma_rscrsm}), making this an oracle procedure. The projected
estimators are theoretical constructions introduced to establish what is
achievable at large radii. They are not implemented or evaluated in
Sections~\ref{sec:simulation_p200} and~\ref{sec:realdata}, which report
TMTL(Fused) and TMTL(Debiased) only.

The construction uses the target sample to decide which coefficient matrices
are compatible with its noise level and sparsity. For $\lambda_T>0$, let the
target-only group lasso pilot be
\begin{equation}
\widehat{\mathbf B}^{\mathrm{tar}}\in\argminA_{\mathbf B\in\mathbb R^{p\times K}}\left\{\frac{1}{2n_TK}\lVert\mathbf Y^0-\mathbf X^{0}\mathbf B\rVert_F^2+\lambda_T\lVert\mathbf B\rVert_{2,1}\right\},
\label{eq:safe-target-pilot}
\end{equation}
Given $\kappa_T\geq0$, the mixed-norm target confidence set is
\begin{equation}
\mathcal C_T(\lambda_T,\kappa_T):=
\left\{\mathbf B\in\mathbb R^{p\times K}:
\begin{aligned}
&\left\lVert\frac{1}{n_TK}(\mathbf X^{0})^\top(\mathbf X^{0}\mathbf B-\mathbf Y^0)\right\rVert_{2,\infty}\leq\lambda_T,\\
&\lVert\mathbf B\rVert_{2,1}\leq\lVert\widehat{\mathbf B}^{\mathrm{tar}}\rVert_{2,1}+\kappa_T
\end{aligned}
\right\}.
\label{eq:safe-confidence-set}
\end{equation}
Row $j$ of the matrix in the first restriction collects the empirical
correlations between predictor $j$ and the target residuals across tasks, so
bounding its largest row norm is a group Dantzig constraint keeping $\mathbf B$
compatible with the target sample. The second restriction bounds the total row
norm, using the pilot as a reference, because small residual correlations alone
do not pin down a coefficient matrix in high dimensions.
For $g\in\{F,D\}$, target-certified TMTL projects the reported estimator onto
this set:
\begin{equation}
\widehat{\mathbf W}^{g,\mathrm{safe}}:=\mathop{\arg\min}_{\mathbf B\in\mathcal C_T(\lambda_T,\kappa_T)}\frac12\lVert\mathbf B-\widehat{\mathbf W}^g\rVert_F^2.
\label{eq:safe-projection}
\end{equation}
The set in \eqref{eq:safe-confidence-set} is nonempty, closed and convex, so
the projection exists and is unique, and a candidate already in the set is
unchanged.

Two properties of this set drive the result, and both hold on one
high-probability event. First, the set contains the truth $\mathbf B^0$: the
calibration of $\kappa_T$ in \eqref{eq:safe-tuning} is chosen to be large
enough to cover the pilot's error in total row norm. Because $\mathbf B^0$ is
feasible, projecting onto a convex set cannot move a candidate further from it,
so the candidate's transfer guarantee survives projection. Second, on the same
event every matrix in the set is within the target-only rate of
$\mathbf B^0$; the two restrictions in \eqref{eq:safe-confidence-set}, combined
with restricted curvature of the target design, bound the radius of the whole
set, not merely the distance to one matrix fixed in advance. The first property
is what preserves the transfer branch, and the second is what supplies the cap
that takes over when the transfer branch is large. Constraining the fitted
target block $\widehat{\mathbf B}^0$ inside \eqref{cotrain} would not serve the
same purpose, because neither reported estimator need lie in a set containing
it. Section~\ref{app:safe-tmtl} of the Appendices explains this and proves the
two properties.

\begin{theorem}[Target-certified projection]
\emergencystretch=1.5em
\label{thm:safe-projection}
Assume Structural Assumption~\ref{assumption_joint_sparsity} and
Assumptions~\ref{assumption_design}--\ref{assumption_error}. Fix $c>2$, choose
\begin{equation}
\lambda_T=c_T\sqrt{\frac{\psi_K}{Kn_T}},\qquad \kappa_T=\frac{24Ks\lambda_T}{\alpha_0},
\label{eq:safe-tuning}
\end{equation}
where $c_T$ is sufficiently large and $\alpha_0$ is the target curvature
constant of Lemma~\ref{lemma_rscrsm}, and suppose that
$s\log p/n_T\leq c_{\mathrm{sp}}$ for a sufficiently small constant
$c_{\mathrm{sp}}>0$. For any random matrix
$\widehat{\mathbf W}\in\mathbb R^{p\times K}$, let
$\widehat{\mathbf W}^{\mathrm{safe}}$ be its projection onto
$\mathcal C_T(\lambda_T,\kappa_T)$. Then, with probability at least
\begin{equation}
1-C_1Kp\exp(-C_2n_T)-C_3p^{1-c},
\label{eq:safe-probability}
\end{equation}
we have
\begin{equation}
\frac{1}{\sqrt K}\lVert\widehat{\mathbf W}^{\mathrm{safe}}-\mathbf B^0\rVert_F\leq\left\{\frac{1}{\sqrt K}\lVert\widehat{\mathbf W}-\mathbf B^0\rVert_F\right\}\wedge C\sqrt{\frac{s\psi_K}{n_T}}.
\label{eq:safe-oracle-bound}
\end{equation}
The constants depend only on the constants in the design and noise assumptions,
the curvature constants, and the fixed tail constant. In particular, the result
permits arbitrary source--target contrasts and does not require the candidate
to be independent of the target sample.
\end{theorem}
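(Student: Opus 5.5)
The plan is to work on a single high-probability event $\mathcal E_T$, defined from the target sample alone, on which two properties hold: (i) $\mathbf B^0\in\mathcal C_T(\lambda_T,\kappa_T)$, and (ii) every $\mathbf B\in\mathcal C_T(\lambda_T,\kappa_T)$ satisfies $K^{-1/2}\lVert\mathbf B-\mathbf B^0\rVert_F\le C\sqrt{s\psi_K/n_T}$. Given both, the theorem is deterministic. By (i) and convexity, the projection is nonexpansive toward feasible points, $\lVert\widehat{\mathbf W}^{\mathrm{safe}}-\mathbf B^0\rVert_F\le\lVert\widehat{\mathbf W}-\mathbf B^0\rVert_F$. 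By (ii), since $\widehat{\mathbf W}^{\mathrm{safe}}$ lies in the set, its error is also capped. Taking the minimum gives \eqref{eq:safe-oracle-bound}. Because $\mathcal E_T$ involves only $(\mathbf X^0,\mathbf E^0)$, no independence between $\widehat{\mathbf W}$ and the target data is needed, and the source contrasts never enter.

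The event $\mathcal E_T$ will be the intersection of three parts. The first is the noise bound $\lVert(n_TK)^{-1}(\mathbf X^0)^\top\mathbf E^0\rVert_{2,\infty}\le\lambda_T/2$. For each row $j$ this is a $\chi^2_K$-type quantity scaled by the column norm $\lVert\mathbf X^0_{\cdot j}\rVert_2^2/n_T$, so a Laurent--Massart bound plus a union over $p$ rows gives $\sqrt{(K+\log p)/(n_TK^2)}\asymp\sqrt{\psi_K/(Kn_T)}$, with failure probability $p^{1-c}$. The second part is the column-norm bound itself. The third is the restricted curvature of Lemma~\ref{lemma_rscrsm} for the target design, giving the $Kp\exp(-C_2n_T)$ term.

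\emph{Property (i).} The Dantzig constraint at $\mathbf B^0$ is exactly the noise bound. For the norm constraint, I invoke the standard group-lasso analysis: with $\lambda_T\ge2\times$ the noise level, the error $\boldsymbol\Delta=\widehat{\mathbf B}^{\mathrm{tar}}-\mathbf B^0$ lies in the cone $\lVert\boldsymbol\Delta_{S^c}\rVert_{2,1}\le3\lVert\boldsymbol\Delta_S\rVert_{2,1}$. Curvature then gives $\lVert\boldsymbol\Delta\rVert_{2,1}\le c\,Ks\lambda_T/\alpha_0$; the factor $K$ arises from the $1/K$ normalization of the loss, and tracking constants should give $24$. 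Hence $\lVert\mathbf B^0\rVert_{2,1}\le\lVert\widehat{\mathbf B}^{\mathrm{tar}}\rVert_{2,1}+\lVert\boldsymbol\Delta\rVert_{2,1}\le\lVert\widehat{\mathbf B}^{\mathrm{tar}}\rVert_{2,1}+\kappa_T$. The same bound also gives $\lVert\widehat{\mathbf B}^{\mathrm{tar}}\rVert_{2,1}\le\lVert\mathbf B^0\rVert_{2,1}+\kappa_T$, which I will use below.

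\emph{Property (ii), the main obstacle.} Here the set must be shown to have small radius, not just to contain a good point. Take $\mathbf B\in\mathcal C_T$ and set $\boldsymbol\Delta=\mathbf B-\mathbf B^0$. The two Dantzig constraints combine by the triangle inequality to give $\lVert(n_TK)^{-1}(\mathbf X^0)^\top\mathbf X^0\boldsymbol\Delta\rVert_{2,\infty}\le2\lambda_T$, and hence
\[
(n_TK)^{-1}\lVert\mathbf X^0\boldsymbol\Delta\rVert_F^2\le2\lambda_T\lVert\boldsymbol\Delta\rVert_{2,1}.
\]
The norm constraint gives $\lVert\mathbf B\rVert_{2,1}\le\lVert\mathbf B^0\rVert_{2,1}+2\kappa_T$. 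Splitting on $S=\mathcal S(\mathbf B^0)$ yields a relaxed cone $\lVert\boldsymbol\Delta_{S^c}\rVert_{2,1}\le\lVert\boldsymbol\Delta_S\rVert_{2,1}+2\kappa_T$, so
\[
\lVert\boldsymbol\Delta\rVert_{2,1}\le2\sqrt s\lVert\boldsymbol\Delta\rVert_F+2\kappa_T.
\]
Lemma~\ref{lemma_rscrsm} has the form (quadratic) $\ge\alpha_0\lVert\boldsymbol\Delta\rVert_F^2/K-\tau\lVert\boldsymbol\Delta\rVert_{2,1}^2$ with $\tau\asymp\log p/(Kn_T)$ up to the normalization. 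I would apply it rather than restricted eigenvalues on an exact cone; the $\kappa_T$ slack is precisely why the RSC form is needed. Substituting and using $s\log p/n_T\le c_{\mathrm{sp}}$ to absorb the $\tau s\lVert\boldsymbol\Delta\rVert_F^2$ term leaves a quadratic inequality in $\lVert\boldsymbol\Delta\rVert_F$. It gives $\lVert\boldsymbol\Delta\rVert_F^2\lesssim K^2s\lambda_T^2+K\lambda_T\kappa_T+K\tau\kappa_T^2$. Since $\kappa_T\asymp Ks\lambda_T$, every term is $\lesssim K^2s\lambda_T^2\asymp Ks\psi_K/n_T$, where the last term again uses $s\log p/n_T$ small. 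Dividing by $K$ gives the cap. I expect the bookkeeping of $K$-normalizations between the Lemma's curvature statement and $\kappa_T$ to be the delicate part.
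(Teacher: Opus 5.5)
Your proposal is correct and follows essentially the same route as the paper. The paper also works on a target-only event (score, column norms, target curvature). It uses the pilot's cone-plus-curvature analysis to get $\lVert\widehat{\mathbf B}^{\mathrm{tar}}-\mathbf B^0\rVert_{2,1}\le 24Ks\lambda_T/\alpha_0$, and hence feasibility of the truth. It then bounds the radius of the whole set by combining a relaxed cone with $2\kappa_T$ slack, the gradient-difference bound, and the RSC tolerance. Finally, it invokes the projection's obtuse-angle property to preserve the candidate's error.

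The only difference is cosmetic: you bound the gradient difference by $2\lambda_T$ where the paper uses $\lambda_T+\lambda_T/2$. This changes only the constant $C$.
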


Applying Theorem~\ref{thm:safe-projection} to the two TMTL candidates produces
the following bounds. The transfer branch still requires the conditions and
tuning of the corresponding theorem above; projection supplies the additional
target-only branch. Here $m$ is the problem-sequence index used in
Theorems~\ref{thm:fused} and~\ref{thm:debiased}, so the large-$m$
qualification is inherited from those results.

\begin{prop}[Target-certified TMTL rates]
\label{prop:safe-tmtl-rates}
Suppose the conditions of Theorem~\ref{thm:safe-projection} hold with $c>8$.

If the conditions and tuning of Theorem~\ref{thm:fused} also hold with the same
$c$, then, for all sufficiently large $m$, with
probability at least
\begin{equation}
1-C_1(L+1)Kp\exp(-C_2n_T)-C_3(L+1)p^{1-c},
\label{eq:safe-fused-probability}
\end{equation}
\begin{equation}
\frac{1}{\sqrt K}\lVert\widehat{\mathbf W}^{F,\mathrm{safe}}-\mathbf B^0\rVert_F\lesssim\left\{\sqrt{\frac{s\psi_K}{N}}+\sqrt{\frac{\bar h}{\sqrt K}\sqrt{\frac{\psi_K}{n_S}}}+\frac{\epsilon_D}{\sqrt K}\right\}\wedge\sqrt{\frac{s\psi_K}{n_T}}.
\label{eq:safe-fused-rate}
\end{equation}

If the conditions and tuning of Theorem~\ref{thm:debiased} also hold with the
same $c$, then, for all sufficiently large $m$,
with the probability in \eqref{eq:safe-fused-probability} up to a change in its
constants,
\begin{equation}
\frac{1}{\sqrt K}\lVert\widehat{\mathbf W}^{D,\mathrm{safe}}-\mathbf B^0\rVert_F\lesssim\left\{\sqrt{\frac{s\psi_K}{N}}+\sqrt{\frac{\bar h}{\sqrt K}\sqrt{\frac{\psi_K}{n_T}}}\right\}\wedge\sqrt{\frac{s\psi_K}{n_T}}.
\label{eq:safe-debiased-rate}
\end{equation}
\end{prop}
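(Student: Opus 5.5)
The plan is to obtain Proposition~\ref{prop:safe-tmtl-rates} as a direct combination of Theorem~\ref{thm:safe-projection} with Theorems~\ref{thm:fused} and~\ref{thm:debiased}, intersecting their high-probability events. Theorem~\ref{thm:safe-projection} holds for an arbitrary random candidate $\widehat{\mathbf W}$, with no independence from the target sample required, so we may take $\widehat{\mathbf W}=\widehat{\mathbf W}^F$ or $\widehat{\mathbf W}=\widehat{\mathbf W}^D$ directly. Let $\mathcal E_{\mathrm{safe}}$ be the event on which \eqref{eq:safe-oracle-bound} holds, with probability at least $1-C_1Kp\exp(-C_2n_T)-C_3p^{1-c}$. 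Since $c>8>2$, the theorem applies with the same tail constant.

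For the fused case, let $\mathcal E_F$ be the event of Theorem~\ref{thm:fused} on which \eqref{rate_fused} holds for all sufficiently large $m$. By a union bound, $\mathbb P(\mathcal E_{\mathrm{safe}}\cap\mathcal E_F)$ is at least one minus the sum of the two failure probabilities. Since $Kp\exp(-C_2n_T)\le (L+1)Kp\exp(-C_2n_T)$ and $p^{1-c}\le(L+1)p^{1-c}$, this sum is of the form \eqref{eq:safe-fused-probability} after enlarging $C_1$ and $C_3$ and taking the smaller $C_2$. On the intersection, \eqref{eq:safe-oracle-bound} gives
\[
\frac{1}{\sqrt K}\lVert\widehat{\mathbf W}^{F,\mathrm{safe}}-\mathbf B^0\rVert_F\le\frac{1}{\sqrt K}\lVert\widehat{\mathbf W}^F-\mathbf B^0\rVert_F\wedge C\sqrt{\frac{s\psi_K}{n_T}}.
\]
We then bound the first argument of the minimum by \eqref{rate_fused}. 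The minimum is monotone in each argument, and $a\wedge Cb\le \max(1,C)(a\wedge b)$, so the constant $C$ can be absorbed into the hidden constant of $\lesssim$. This yields \eqref{eq:safe-fused-rate}.

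The debiased case is identical, with $\mathcal E_D$ taken from Theorem~\ref{thm:debiased} and \eqref{rate_debiased} used in place of \eqref{rate_fused}. The phrase ``up to a change in its constants'' allows for the possibly different $C_i$ of that theorem.

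The main step to check is compatibility of the assumptions. Theorem~\ref{thm:safe-projection} requires $s\log p/n_T\le c_{\mathrm{sp}}$ and its own tuning $(\lambda_T,\kappa_T)$. Both are assumed in the proposition, and they are separate from the TMTL tuning, because the projection set depends only on the target data and on $\lambda_T,\kappa_T$. The fused theorem assumes only $s\log p/n_S=o(1)$, but the proposition explicitly imposes both sets of conditions, so no conflict arises. The remaining point is that the constants in both theorems are fixed independently of $m$. The combined hidden constant therefore depends only on $c,c_0,M_F,M_T$ and the constant $C$ of Theorem~\ref{thm:safe-projection}.
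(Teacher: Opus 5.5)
Your proposal is correct and follows the same route as the paper. You apply \eqref{eq:safe-oracle-bound} to $\widehat{\mathbf W}^F$ and $\widehat{\mathbf W}^D$, substitute \eqref{rate_fused} or \eqref{rate_debiased} on the intersection with the corresponding theorem's event, and use a union bound in which the extra $Kp\exp(-C_2n_T)$ and $p^{1-c}$ terms are absorbed into their $(L+1)$-multiplied counterparts.
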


Each of \eqref{eq:safe-fused-rate} and \eqref{eq:safe-debiased-rate} is a
minimum of two terms: a transfer branch, which retains the behavior of the
underlying candidate, and a target-only branch $\sqrt{s\psi_K/n_T}$, which
holds with high probability under the stated sparsity and curvature
conditions. We refer to the second branch as the target-only cap, since it
bounds the rate however far the sources lie from the target. The cap is a
statement about rates. It does not assert pointwise dominance in realized
prediction loss over every target-only procedure, and because the unprojected
bounds \eqref{rate_fused} and \eqref{rate_debiased} have no such branch, it
describes the projected estimators only.

Replacing the unknown $s$ and $\alpha_0$ in \eqref{eq:safe-tuning} by a
data-driven radius is the natural route to a usable procedure, and
Section~\ref{app:safe-plugin} of the Appendices describes one such plug-in,
following \citet{he2026parameter}. Nothing above covers it: the argument rests
on $\mathbf B^0$ remaining feasible, and no result here shows that an
estimated radius preserves that. We leave this direction
for future research. 

\begin{table}[!t]
\caption{Total squared error $K^{-1}\lVert\widehat{\mathbf B}-\mathbf B^0\rVert_F^2$
on $\Theta_{\mathrm{MT}}(s,h)$, up to constants. The first row gives the lower
bound $r_{\mathrm{pool}}+r_{\mathrm{shift}}$ of
Theorem~\ref{thm:multitask-transfer-lower-bound} in each radius regime, and the
remaining rows record whether the corresponding upper bound matches it.
Entries assume $\psi_K\asymp\omega_{s,K}$ and hold under the conditions of the
theorem concerned, including $n_S\geq n_T$ throughout, the oracle regime choice
of Theorem~\ref{thm:debiased}, and the target sparsity condition of
Theorem~\ref{thm:safe-projection} for the projected estimators. A dash means that the available bounds do not
establish a match throughout that regime.}
\label{tab:rate-comparison}
\centering
\small
\renewcommand{\arraystretch}{1.4}
\setlength{\tabcolsep}{5pt}
\begin{tabular}{lccc}
\toprule
& small $h$ & intermediate $h$ & large $h$\\
& ($h\leq\tau$) & ($\tau\leq h\leq s\tau$) & ($h\geq s\tau$)\\
\midrule
Lower bound
 & $\dfrac{s\omega_{s,K}}{N}+\dfrac{h^2}{K}$
 & $\dfrac{s\omega_{s,K}}{N}+\dfrac{h}{\sqrt K}\sqrt{\dfrac{\omega_{s,K}}{n_T}}$
 & $\dfrac{s\omega_{s,K}}{n_T}$\\
\midrule
TMTL(Fused) & yes, if $L\lesssim s$ & --- & ---\\
TMTL(Debiased) & --- & yes & ---\\
TMTL(Fused), projected & yes, if $L\lesssim s$ & --- & yes\\
TMTL(Debiased), projected & --- & yes & yes\\
\bottomrule
\end{tabular}
\end{table}

We can now complete the comparison with the lower bound. On
$\Theta_{\mathrm{MT}}(s,h)$ we again have $\bar h\leq h$ and
$\epsilon_D\leq h$, so replacing $\bar h$ and $\epsilon_D$ in
\eqref{eq:safe-fused-rate} and \eqref{eq:safe-debiased-rate} by $h$ and
squaring gives the orders
\begin{equation}
R_{F,\mathrm{safe}}(s,h):=\left\{\frac{s\psi_K}{N}+\frac{h}{\sqrt K}\sqrt{\frac{\psi_K}{n_S}}+\frac{h^2}{K}\right\}\wedge\frac{s\psi_K}{n_T},
\label{eq:lower-upper-comparison}
\end{equation}
\begin{equation}
R_{D,\mathrm{safe}}(s,h):=\left\{\frac{s\psi_K}{N}+\frac{h}{\sqrt K}\sqrt{\frac{\psi_K}{n_T}}\right\}\wedge\frac{s\psi_K}{n_T}.
\label{eq:safe-debiased-squared-rate}
\end{equation}
As in Section~\ref{sec:bound-comparison}, we take $\psi_K\asymp\omega_{s,K}$.
At $h=0$ both reduce to $s\psi_K/N$, since $N\geq n_T$.

Table~\ref{tab:rate-comparison} assembles the resulting comparison. Each row
records, for one estimator and one radius regime, whether the available upper
bound matches the lower bound of
Theorem~\ref{thm:multitask-transfer-lower-bound}: the first two rows restate
Section~\ref{sec:bound-comparison} for $\widehat{\mathbf W}^F$ and
$\widehat{\mathbf W}^D$, and the last two record what
\eqref{eq:lower-upper-comparison} and \eqref{eq:safe-debiased-squared-rate}
give for their projections.

Because each projected rate is a minimum, the cap can only lower a bound, so
every match already established survives projection. What it adds differs
between the two estimators, and we take them in turn.

For TMTL(Debiased), the cap is inactive in both of the first two regimes. At
$h\leq s\tau$ the heterogeneity term satisfies
$(h/\sqrt K)\sqrt{\psi_K/n_T}\lesssim s\psi_K/n_T$, and
$s\psi_K/N\leq s\psi_K/n_T$ throughout, so
\eqref{eq:safe-debiased-squared-rate} has the order of its unprojected bound
and the intermediate-radius match carries over unchanged.

For TMTL(Fused), the cap is inactive at small radii, where $h\leq\tau$ gives
$h^2/K\leq\tau^2/K\lesssim s\psi_K/n_T$, so the small-radius match under
$L\lesssim s$ also carries over. Inside the intermediate regime, however, the
cap can bind and strictly improve the bound. Taking $h=s^{3/4}\tau$, which
satisfies $\tau\leq h\leq s\tau$ for $s\geq1$, gives
$h^2/K=\sigma^2s^{3/2}\omega_{s,K}/n_T$, larger than the cap
$s\psi_K/n_T$ by a factor of order $\sqrt s$. Projection removes that excess.
It does not produce a match throughout the regime: for $h$ in its interior the
cap exceeds the lower rate
$s\omega_{s,K}/N+(h/\sqrt K)\sqrt{\omega_{s,K}/n_T}$, so the corresponding
entry of Table~\ref{tab:rate-comparison} remains a dash. But the
projected fused bound is smaller than the unprojected one over part of the
intermediate regime.

At large radii the cap binds for both. There $h\geq s\tau$ forces both
$\sigma hK^{-1/2}(\omega_{s,K}/n_T)^{1/2}$ and $h^2/K$ above
$\sigma^2s\omega_{s,K}/n_T$, so the lower rate is of the target-only order,
whereas the unprojected bounds have no such branch and keep growing with $h$.
Both projected rates are bounded by $s\psi_K/n_T$, so both attain that order
and the large-radius gap closes.

\section{Simulation study} \label{sec:simulation_p200}

We study how the magnitude and alignment of source--target differences affect
the benefits of transfer and target correction. The simulations include source
shifts that cancel and shifts that reinforce one another. We compare
TMTL(Fused) and TMTL(Debiased) with target-only multitask learning, pooled
multitask learning, and separate single-output transfer fits, using
coefficient-estimation error, prediction error, and the frequency of negative
transfer.

\subsection{Simulation design} \label{sec:simulation-design}
The simulation design is similar to the designs of
\citet{li2022transfer, he2024transfusion}.
We generate data from the common-design model in
\eqref{transferMTL_dgp_samecovariate}. The target domain has sample size
$n_T=100$, each source domain has sample size $n_S=500$, the predictor dimension
is $p=200$, and the number of tasks is $K\in\{10,20,30\}$. Thus, the source sample
sizes are balanced across domains in every multisource setting.
The target tasks share a design matrix whose entries are generated independently
from $\mathcal N(0,1)$. The first $s=10$ rows of $\mathbf B^0$ are active, with
entries generated independently from $\mathcal N(0.2,0.2^2)$; the remaining
rows are zero. Source and target errors are generated independently from
$\mathcal N(0,0.25^2)$.

We consider $L=3$ source domains under mild covariate shift. For each
replication and source domain, the entries of the design matrix
$\mathbf{X}^{\ell}$ are generated independently from
$\mathcal N(0,\sigma_x^2)$, where, independently for each entry, $\sigma_x$
equals $0.9$ or $1.1$ with equal probability.

To generate coefficient heterogeneity, for each $\alpha$, define
$\Delta_\alpha\in\mathbb R^{p\times K}$ by drawing entries in row $j\le 15$
independently from
\[
\mathcal N\left(0.2\alpha/\sqrt j,\; (0.2\alpha/\sqrt j)^2\right),
\]
and setting rows $j>15$ to zero. The parameter $\alpha$ controls the magnitude
of the source--target coefficient shift. We generate the source coefficient
matrices as follows:
\begin{itemize}
    \item Balanced shifts:
    $\mathbf B^1=\mathbf B^2=\mathbf B^0+(3/4)\Delta_{1/2}$ and
    $\mathbf B^3=\mathbf B^0-(3/2)\Delta_{1/2}$.
    \item Aligned shifts: $\mathbf B^1=\mathbf B^2=\mathbf B^3=\mathbf B^0+\Delta_\alpha$,
    with $\alpha\in\{1/4,1/2,1\}$.
\end{itemize}
In the balanced-shift scenarios, the source deviations cancel in the weighted
average, so $\epsilon_D=0$ exactly, although the individual contrasts
$h_\ell^\star$, and hence $\bar h$, are large. In the aligned-shift scenarios,
the deviations point in the same direction and do not cancel, so $\epsilon_D$
is comparable to $\bar h$. We generate 100 Monte Carlo replications.
The target and source coefficient matrices remain fixed across replications,
whereas the covariates and responses are regenerated in each replication.

Because $\mathbf B^0$ and $\Delta_\alpha$ are drawn separately for each $K$, the
three values of $K$ are distinct realizations of the same generating mechanism
rather than nested versions of a single instance. We therefore read them as
repeated instances of the same comparison: the ordering of the methods is
assessed within each $K$, and its stability across $K$ indicates that the
conclusions are not an artifact of one draw of the coefficient matrices. The
numerical differences between values of $K$ are not interpreted as a
task-dimension asymptotic.

\subsection{Competing methods and tuning}
\label{sec:simulation-methods}
We compare the proposed estimators with single-output transfer and multitask
baselines:
\begin{itemize}
    \item \textbf{TMTL(Fused)}: the proposed fused multitask transfer estimator $\widehat{\mathbf W}^F$.
    \item \textbf{TMTL(Debiased)}: the proposed target-corrected multitask transfer estimator $\widehat{\mathbf W}^D$.
    \item \textbf{TSTL(Fused)}: the fused high-dimensional transfer estimator of \citet{he2024transfusion}.
    \item \textbf{TSTL(Debiased)}: the debiased high-dimensional transfer estimator of \citet{he2024transfusion}.
    \item \textbf{MTL(Target)}: an $\ell_{2,1}$-regularized multitask estimator
    fitted using only the target data.
    \item \textbf{MTL(Full)}: an $\ell_{2,1}$-regularized multitask estimator
    fitted after naively pooling the source and target observations and ignoring
    domain labels, thereby imposing a common coefficient matrix across domains.
\end{itemize}
We select the tuning parameters, including $\lambda_0,\ldots,\lambda_L$ and
$\widetilde\lambda$, by cross-validation to minimize the validation mean-squared prediction
error. Except for the taskwise correction, the methods use a logarithmically
spaced grid of 10 values, from $10^{-5}$ to $10^{-1}$. Cross-validating
$\lambda_0,\cdots,\lambda_3$ over a four-dimensional grid is computationally prohibitive at $p=200$. For
\textbf{TMTL(Fused)} and \textbf{TSTL(Fused)}, we therefore fix the
theory-guided ratios $a_\ell = 8\sqrt{n_S/N}$, so that only $\lambda_0$ is
cross-validated. For \textbf{TMTL(Debiased)}, $\widetilde \lambda$ is
cross-validated over the same grid. The single-task baselines are fitted separately
for each response and then aggregated across responses.
Section~\ref{app:simulation-solver} of the Appendices summarizes
method-specific tuning details.

\subsection{Evaluation metrics}
\label{sec:simulation-metric}
We report three quantities. Coefficient-estimation error for an estimator
$\widehat{\mathbf{B}}$ of the target coefficient matrix is measured by the
squared Frobenius norm normalized by the number of tasks $K$, on the log scale.
In replication $r \in [R]$, $R = 100$, we record
\begin{equation}
  e^{(r)} = \log\!\left(\|\widehat{\mathbf{B}}^{(r)}-\mathbf{B}^0\|_F^2/K\right),
  \label{eq:coeferr}
\end{equation}
where $\log$ is the natural logarithm, so that $\exp(e^{(r)}/2)$ is the root
mean squared coefficient error across tasks appearing in
\eqref{eq:normalized-error}. Table~\ref{tab:simulationresult_frobnormerror}
reports the mean and standard deviation of the log error $e^{(r)}$ over the 100
Monte Carlo replications, so the logarithm is taken within each replication and
averaged afterwards; smaller values indicate more accurate estimation.

Out-of-sample prediction error is computed on an independent target test design
$\mathbf{X}^{0,\text{test}} \in \mathbb{R}^{n_\text{test} \times p}$ by
\begin{equation}
  \mathrm{RMSE}^{(r)} = \left\{\frac{1}{n_\text{test}K}
  \bigl\|\mathbf{X}^{0,\text{test}}(\widehat{\mathbf{B}}^{(r)}-\mathbf{B}^0)\bigr\|_F^2\right\}^{1/2},
  \label{eq:predrmse}
\end{equation}
which measures error against the true regression function
$\mathbf X^{0,\text{test}}\mathbf B^0$ and therefore excludes the observation
noise. The expected squared prediction error for a new response additionally
includes the noise variance. 

Because all methods are fitted to the same data within a replication, their
coefficient errors can be compared replication by replication. Taking the
target-only estimator as the reference, we record the negative-transfer
frequency
\begin{equation}
  \mathrm{NT} = \frac{1}{R}\sum_{r=1}^{R}
  \mathbf{1}\bigl\{\|\widehat{\mathbf{B}}^{(r)}-\mathbf{B}^0\|_F >
  \|\widehat{\mathbf{B}}^{(r)}_{\mathrm{MTL(Target)}}-\mathbf{B}^0\|_F\bigr\},
  \label{eq:negative_transfer}
\end{equation}
the proportion of replications in which using the source domains is worse than target-only
multitask learning. We report $\mathrm{NT}$ for the five estimators that use source data;
MTL(Target) is the reference and is omitted.
Table~\ref{tab:simulationresult_frobnormerror} summarizes the log
coefficient-estimation errors. Tables~\ref{tab:simulationresult_predictionerror}
and~\ref{tab:simulationresult_negativetransfer} report the prediction errors
and negative-transfer frequencies for the same settings.
\begin{table}[!htbp]
\caption{Log coefficient-estimation errors $e^{(r)}$ in \eqref{eq:coeferr} over 100 Monte
Carlo replications for $K\in\{10,20,30\}$. Entries report the mean (standard deviation); smaller values indicate more accurate
estimation. Boldface marks the smallest displayed mean in each row, including ties.}
\label{tab:simulationresult_frobnormerror}
\centering
\small
\renewcommand{\arraystretch}{1.15}
\setlength{\tabcolsep}{2.4pt}
\resizebox{\linewidth}{!}{%
\begin{tabular}{llcccccc}
\toprule
\multirow{2}{*}{\textbf{K}} &
\multirow{2}{*}{\textbf{Shifts}} &
\multicolumn{2}{c}{\textbf{TMTL}} &
\multicolumn{2}{c}{\textbf{MTL}} &
\multicolumn{2}{c}{\textbf{TSTL}} \\
\cmidrule(lr){3-4}\cmidrule(lr){5-6}\cmidrule(lr){7-8}
& &
\textbf{Fused} & \textbf{Debiased} &
\textbf{Target} & \textbf{Full} &
\textbf{Fused} & \textbf{Debiased} \\
\midrule
\multirow{4}{*}{10}
& Balanced             & -5.84 (0.18) & \textbf{-5.85 (0.18)} & -3.76 (0.19) & -5.83 (0.18) & -5.60 (0.17) & -5.40 (0.23) \\
& Aligned $\alpha=1/4$ & -4.74 (0.05) & \textbf{-5.06 (0.15)} & -3.76 (0.19) & -4.41 (0.04) & -4.46 (0.09) & -4.49 (0.14) \\
& Aligned $\alpha=1/2$ & -3.53 (0.03) & \textbf{-4.32 (0.14)} & -3.76 (0.19) & -2.98 (0.02) & -3.36 (0.05) & -3.88 (0.11) \\
& Aligned $\alpha=1$   & -2.30 (0.04) & \textbf{-3.95 (0.16)} & -3.76 (0.19) & -1.56 (0.04) & -2.06 (0.05) & -3.45 (0.12) \\
\midrule
\multirow{4}{*}{20}
& Balanced             & \textbf{-6.21 (0.18)} & \textbf{-6.21 (0.18)} & -4.08 (0.17) & -5.50 (0.30) & -5.60 (0.15) & -5.38 (0.18) \\
& Aligned $\alpha=1/4$ & -4.60 (0.03) & \textbf{-4.79 (0.17)} & -4.08 (0.17) & -4.35 (0.03) & -4.30 (0.06) & -4.36 (0.09) \\
& Aligned $\alpha=1/2$ & -3.32 (0.03) & \textbf{-4.57 (0.10)} & -4.08 (0.17) & -2.87 (0.02) & -3.18 (0.03) & -3.82 (0.11) \\
& Aligned $\alpha=1$   & -1.96 (0.03) & \textbf{-4.21 (0.12)} & -4.08 (0.17) & -1.43 (0.01) & -2.02 (0.04) & -3.45 (0.10) \\
\midrule
\multirow{4}{*}{30}
& Balanced             & \textbf{-6.27 (0.18)} & \textbf{-6.27 (0.18)} & -4.10 (0.15) & -6.11 (0.20) & -5.56 (0.11) & -5.38 (0.12) \\
& Aligned $\alpha=1/4$ & -4.61 (0.04) & \textbf{-5.11 (0.13)} & -4.10 (0.15) & -4.45 (0.02) & -4.35 (0.05) & -4.37 (0.06) \\
& Aligned $\alpha=1/2$ & -3.24 (0.03) & \textbf{-4.57 (0.10)} & -4.10 (0.15) & -2.96 (0.02) & -3.26 (0.03) & -3.80 (0.07) \\
& Aligned $\alpha=1$   & -2.10 (0.02) & \textbf{-4.36 (0.10)} & -4.10 (0.15) & -1.50 (0.01) & -1.99 (0.03) & -3.39 (0.08) \\
\bottomrule
\end{tabular}}

\end{table}

\subsection{Results} \label{sec:simulation-results}

{\emergencystretch=1.5em Fusion performs well when source shifts cancel, while
target correction becomes important when they align
(Table~\ref{tab:simulationresult_frobnormerror}). In the balanced-shift
settings, \textbf{TMTL(Fused)} and \textbf{TMTL(Debiased)} have nearly identical
mean log coefficient-estimation errors and both improve on target-only multitask
learning. Their mean test RMSE is about one third of that of
\textbf{MTL(Target)} (Table~\ref{tab:simulationresult_predictionerror}). Simple pooling is also competitive here: although individual
sources differ from the target, their shifts cancel in the population pooled
coefficient. \par}

When the shifts align, this cancellation disappears. As their magnitude grows,
the advantage of \textbf{TMTL(Debiased)} over \textbf{TMTL(Fused)} widens,
while simple pooling deteriorates sharply. \textbf{TMTL(Debiased)} has the
smallest mean log coefficient-estimation error in every aligned-shift setting
and a smaller mean log error than \textbf{MTL(Target)} in every setting.
The negative-transfer frequencies in
Table~\ref{tab:simulationresult_negativetransfer} make the benefit of
correction particularly clear. With balanced shifts, no source-based method
performs worse than \textbf{MTL(Target)} in any replication. With aligned
shifts and $\alpha\geq1/2$, however, \textbf{TMTL(Fused)} has larger
coefficient error than \textbf{MTL(Target)} in $86$--$100\%$ of replications.
Correction reduces that frequency to zero at $\alpha=1/2$ and to $3$--$19\%$
at $\alpha=1$: it substantially reduces negative transfer in these settings,
without eliminating it.

Sharing information across responses provides an additional benefit. The TMTL
estimators generally have smaller mean log errors than their taskwise TSTL
counterparts. The exceptions are two aligned-shift settings, $K=20$ with
$\alpha=1$ and $K=30$ with $\alpha=1/2$, where \textbf{TSTL(Fused)} has
slightly smaller mean log error than \textbf{TMTL(Fused)}. Overall, these
comparisons support combining transfer across domains with shared predictor
structure across responses.

\begin{table}[!t]
\caption{Target test RMSE \eqref{eq:predrmse} for the main study.
Entries give the mean (standard deviation) over 100 replications, rounded to
two decimal places. Smaller values indicate better prediction; boldface
marks the smallest mean before rounding, including ties.}
\label{tab:simulationresult_predictionerror}
\centering
\small
\renewcommand{\arraystretch}{1.15}
\setlength{\tabcolsep}{3pt}
\begin{tabular}{llcccccc}
\toprule
\multirow{2}{*}{\textbf{K}} &
\multirow{2}{*}{\textbf{Shifts}} &
\multicolumn{2}{c}{\textbf{TMTL}} &
\multicolumn{2}{c}{\textbf{MTL}} &
\multicolumn{2}{c}{\textbf{TSTL}} \\
\cmidrule(lr){3-4}\cmidrule(lr){5-6}\cmidrule(lr){7-8}
& &
\textbf{Fused} & \textbf{Debiased} &
\textbf{Target} & \textbf{Full} &
\textbf{Fused} & \textbf{Debiased} \\
\midrule
\multirow{4}{*}{10}
& Balanced             & \textbf{0.05 (0.01)} & \textbf{0.05 (0.01)} & 0.15 (0.02) & \textbf{0.05 (0.01)} & 0.06 (0.01) & 0.07 (0.01) \\
& Aligned $\alpha=1/4$ & 0.09 (0.01) & \textbf{0.08 (0.01)} & 0.15 (0.02) & 0.11 (0.01) & 0.11 (0.01) & 0.11 (0.01) \\
& Aligned $\alpha=1/2$ & 0.17 (0.01) & \textbf{0.12 (0.01)} & 0.15 (0.02) & 0.23 (0.01) & 0.19 (0.01) & 0.14 (0.01) \\
& Aligned $\alpha=1$   & 0.32 (0.01) & \textbf{0.14 (0.01)} & 0.15 (0.02) & 0.46 (0.03) & 0.36 (0.02) & 0.18 (0.01) \\
\midrule
\multirow{4}{*}{20}
& Balanced             & \textbf{0.05 (0.00)} & \textbf{0.05 (0.00)} & 0.13 (0.01) & 0.07 (0.01) & 0.06 (0.01) & 0.07 (0.01) \\
& Aligned $\alpha=1/4$ & 0.10 (0.00) & \textbf{0.09 (0.01)} & 0.13 (0.01) & 0.11 (0.01) & 0.12 (0.01) & 0.11 (0.01) \\
& Aligned $\alpha=1/2$ & 0.19 (0.01) & \textbf{0.10 (0.01)} & 0.13 (0.01) & 0.24 (0.01) & 0.20 (0.01) & 0.15 (0.01) \\
& Aligned $\alpha=1$   & 0.38 (0.01) & \textbf{0.12 (0.01)} & 0.13 (0.01) & 0.49 (0.02) & 0.36 (0.02) & 0.18 (0.01) \\
\midrule
\multirow{4}{*}{30}
& Balanced             & \textbf{0.04 (0.00)} & \textbf{0.04 (0.00)} & 0.13 (0.01) & 0.05 (0.01) & 0.06 (0.00) & 0.07 (0.01) \\
& Aligned $\alpha=1/4$ & 0.10 (0.00) & \textbf{0.08 (0.01)} & 0.13 (0.01) & 0.11 (0.00) & 0.11 (0.00) & 0.11 (0.00) \\
& Aligned $\alpha=1/2$ & 0.20 (0.01) & \textbf{0.10 (0.01)} & 0.13 (0.01) & 0.23 (0.01) & 0.20 (0.01) & 0.15 (0.01) \\
& Aligned $\alpha=1$   & 0.35 (0.01) & \textbf{0.11 (0.01)} & 0.13 (0.01) & 0.47 (0.02) & 0.37 (0.01) & 0.18 (0.01) \\
\bottomrule
\end{tabular}

\end{table}
\begin{table}[!t]
\caption{Negative-transfer frequency \eqref{eq:negative_transfer} for the main
study: the proportion of 100 replications in which coefficient error exceeds
that of \textbf{MTL(Target)}. Boldface marks the smallest frequency, except
in rows where all methods tie. The target-only reference is omitted.}
\label{tab:simulationresult_negativetransfer}
\centering
\small
\renewcommand{\arraystretch}{1.15}
\setlength{\tabcolsep}{5pt}
\begin{tabular}{llccccc}
\toprule
\multirow{2}{*}{\textbf{K}} &
\multirow{2}{*}{\textbf{Shifts}} &
\multicolumn{2}{c}{\textbf{TMTL}} &
\multicolumn{1}{c}{\textbf{MTL}} &
\multicolumn{2}{c}{\textbf{TSTL}} \\
\cmidrule(lr){3-4}\cmidrule(lr){5-5}\cmidrule(lr){6-7}
& &
\textbf{Fused} & \textbf{Debiased} &
\textbf{Full} &
\textbf{Fused} & \textbf{Debiased} \\
\midrule
\multirow{4}{*}{10}
& Balanced             & 0.00 & 0.00 & 0.00 & 0.00 & 0.00 \\
& Aligned $\alpha=1/4$ & 0.00 & 0.00 & 0.00 & 0.00 & 0.00 \\
& Aligned $\alpha=1/2$ & 0.86 & \textbf{0.00} & 1.00 & 0.99 & 0.31 \\
& Aligned $\alpha=1$   & 1.00 & \textbf{0.18} & 1.00 & 1.00 & 0.96 \\
\midrule
\multirow{4}{*}{20}
& Balanced             & 0.00 & 0.00 & 0.00 & 0.00 & 0.00 \\
& Aligned $\alpha=1/4$ & \textbf{0.00} & \textbf{0.00} & 0.03 & 0.09 & 0.04 \\
& Aligned $\alpha=1/2$ & 1.00 & \textbf{0.00} & 1.00 & 1.00 & 0.95 \\
& Aligned $\alpha=1$   & 1.00 & \textbf{0.19} & 1.00 & 1.00 & 0.99 \\
\midrule
\multirow{4}{*}{30}
& Balanced             & 0.00 & 0.00 & 0.00 & 0.00 & 0.00 \\
& Aligned $\alpha=1/4$ & \textbf{0.00} & \textbf{0.00} & \textbf{0.00} & 0.03 & 0.02 \\
& Aligned $\alpha=1/2$ & 1.00 & \textbf{0.00} & 1.00 & 1.00 & 0.97 \\
& Aligned $\alpha=1$   & 1.00 & \textbf{0.03} & 1.00 & 1.00 & 1.00 \\
\bottomrule
\end{tabular}

\end{table}

A supplementary simulation study in Section~\ref{app:simulation_p100} of the Appendices varies the number of source domains and uses full-grid penalty parameter tuning. It exhibits broadly similar relative performance patterns for \textbf{TMTL(Fused)} and \textbf{TMTL(Debiased)}. That section also reports the ADMM iteration counts of \textbf{TMTL(Fused)}.

\section{Application to single-cell multi-omics data} \label{sec:realdata}
We study whether borrowing information across related cell types improves
prediction of surface-protein abundance from gene expression. Cell types share
molecular features, but their RNA--protein relationships may differ, making
this a setting in which transfer can help while simple pooling can introduce
bias. We examine whether target correction improves prediction and whether
the gains from transfer are greater for smaller target populations.

\subsection{Dataset and preprocessing} \label{sec:realdata-data}
We analyze the CITE-seq benchmark of \citet{luecken2021sandbox}, comprising
26985 bone marrow T-cells with paired measurements of 13953 genes and 134
surface-protein tags. The cells come from nine donors and four sequencing
sites. We use the fine-grained T-cell subtype annotations as domain labels
and analyze each site separately to account for between-site variation.
Within each site, cells are pooled across donors; this defines domains by
cell type but does not remove possible donor effects.

Detailed preprocessing procedures, including filtering, normalization, and
mean-centering, are given in
Section~\ref{app:realdata-preprocessing} of the Appendices.
Table~\ref{tab:realdata_sites_dims} gives the resulting dimensions.

\begin{table}[!t]
\caption{Dimensions after preprocessing separately within each sequencing
site. Each cell type serves in turn as the target, with the remaining cell
types at that site supplying source data.}
\label{tab:realdata_sites_dims}
\centering
\small
\setlength{\tabcolsep}{8pt}
\begin{tabular}{lcccc}
\toprule
& Site 1 & Site 2 & Site 3 & Site 4 \\
\midrule
Total number of cells   & 6447 & 6036 & 5902 & 6723 \\
Number of cell types    & 12   & 10   & 12   & 14   \\
Number of predictors ($p$)          & 1554 & 641  & 446  & 1269 \\
Number of protein responses ($K$)   & 77   & 51   & 68   & 105  \\
\bottomrule
\end{tabular}

\end{table}

\subsection{Transfer-learning setup} \label{sec:realdata-transfer-setup}
Each cell type at a site serves in turn as the target, with the remaining
cell types at that site supplying source data. Because the target changes from
one analysis to the next, $\ell$ indexes cell types in this section and labels
whichever domain is being reported, rather than following the convention of
Section~\ref{sec:problem-setup-tmtl}, under which $\ell=0$ is always the
target and $\ell\in[L]$ the sources. Within each analysis that convention
still applies, after relabeling the current target as domain $0$. 
This rotating-target scheme is adapted from \citet{li2022transfer}, who treat each tissue in turn as the target and the remaining tissues as sources. 
In our analysis, target populations vary substantially in sample size, allowing us to examine how the benefit of transfer varies with the number of target cells.

For each target, we repeat the analysis over 20 independent random splits.
Within each domain, half of the cells, rounded down, are allocated to training,
one quarter, rounded down, to validation, and the remainder to testing.
The fused estimators use training cells from the target and source domains;
validation and prediction assessment use the target domain.

We evaluate prediction by test mean-squared error relative to an intercept-only
model. For target domain $\ell$, and split $r$, define
\begin{equation}
Q_{\text{method},r}^{(\ell)}
=\frac{\operatorname{MSE}_{\text{method},r}^{(\ell)}}
{\operatorname{MSE}_{\mathrm{null},r}^{(\ell)}}.
\label{eq:realdata-relative-mse}
\end{equation}
Both MSEs average squared errors over all proteins and the same target test
cells. The null model predicts each protein by its target training-sample
mean, so values below one indicate improvement over prediction without RNA features. 
We report the mean and standard deviation of this ratio over the 20 splits for each method and target domain.

\subsection{Competing methods and tuning} \label{sec:realdata-methods}
We compare the six methods introduced in Section~\ref{sec:simulation_p200},
using target-domain prediction error for parameter selection. The grid contains 15
logarithmically spaced values in $[10^{-6},1]$ for the fused penalty
$\lambda_0$, the TMTL correction penalty $\widetilde\lambda$, and the
penalties of \textbf{MTL(Target)} and \textbf{MTL(Full)}.
For \textbf{TMTL(Fused)} and \textbf{TSTL(Fused)}, we set
$\lambda_\ell=a_\ell\lambda_0$ with
$a_\ell=8\sqrt{n_\ell^{\mathrm{train}}/N_{\mathrm{train}}}$,
where $n_\ell^{\mathrm{train}}$ is the number of training cells in domain
$\ell$ and $N_{\mathrm{train}}$ is the total across domains.
This uses the formulation for unequal domain sizes in
Section~\ref{sec:generalization} of the Appendices.
Detailed explanations of the implementation settings are given in Section~\ref{app:realdata-implementation} of the
Appendices.

\subsection{Prediction results} \label{sec:realdata-prediction}

\textbf{TMTL(Debiased)} has the smallest mean relative MSE in 35 of the 48
target domains, including ties. Compared with \textbf{MTL(Target)}, it improves
prediction in 43 domains. The median reduction in mean relative MSE is $0.029$
among the 31 domains with fewer than 400 cells, compared with $0.007$ among the
other 17 domains with larger sample sizes.
Figure~\ref{fig:realdata-transfer-gain} shows the larger predictive gains
achieved for the majority of small target populations, as well as the five
domains in which \textbf{TMTL(Debiased)} performs worse than
\textbf{MTL(Target)}. Thus, borrowing information across cell types is
particularly useful when target data are limited, although it does not improve
prediction in every domain.

\textbf{TMTL(Debiased)} improves on \textbf{TMTL(Fused)} in 42 domains and
matches it in the remaining six.
Figure~\ref{fig:realdata-correction-gain} shows that these improvements occur
across a wide range of target sample sizes, although their magnitude
varies by site.

\textbf{TMTL(Debiased)} also has smaller mean relative MSE than
\textbf{MTL(Full)} in 40 domains and \textbf{TSTL(Debiased)} in 42 domains.
These advantages do not hold in every domain, but the overall results support
accounting for differences between cell types, while sharing information across
protein responses. Table~\ref{tab:realdatarmse} of the Appendices reports the
complete results.

\begin{figure}[!htbp]
\centering
\begin{subfigure}{\textwidth}
    \centering
    \includegraphics[width=.93\linewidth]{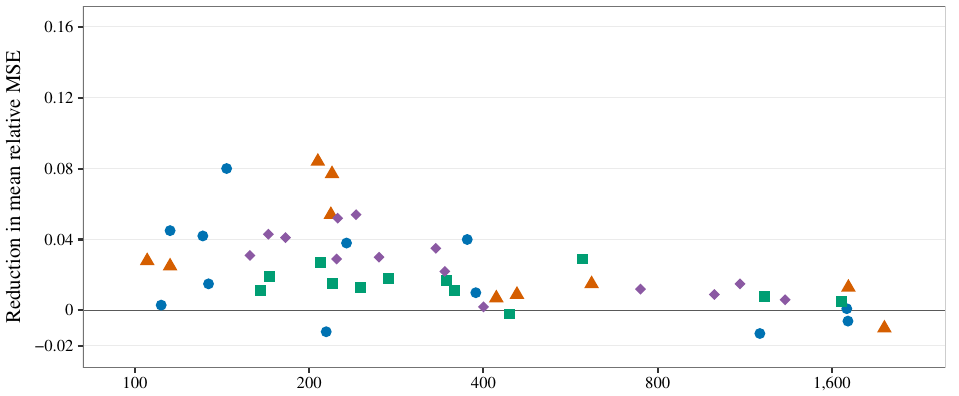}
    \caption{Benefit of transfer: comparison with \textbf{MTL(Target)}.}
    \label{fig:realdata-transfer-gain}
\end{subfigure}

\medskip

\begin{subfigure}{\textwidth}
    \centering
    \includegraphics[width=.93\linewidth]{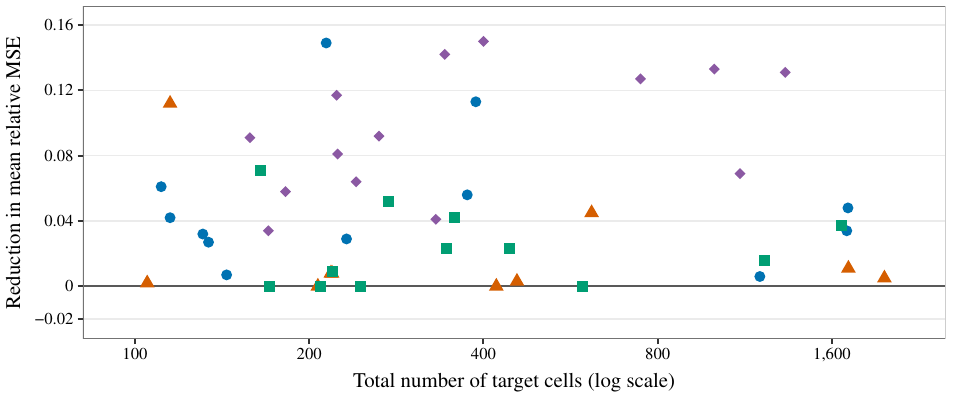}
    \caption{Benefit of target correction: comparison with \textbf{TMTL(Fused)}.}
    \label{fig:realdata-correction-gain}
\end{subfigure}

\medskip
{\small
\textcolor[HTML]{0072B2}{$\bullet$}\ Site 1\qquad
\textcolor[HTML]{D55E00}{$\blacktriangle$}\ Site 2\qquad
\textcolor[HTML]{009E73}{$\blacksquare$}\ Site 3\qquad
\textcolor[HTML]{8B59A3}{$\blacklozenge$}\ Site 4
\par}

\caption{Prediction gains from transfer and target correction across the 48
target domains. Each point represents one cell type at one site. The vertical
axis is the mean relative MSE of the comparison method minus that of
\textbf{TMTL(Debiased)}, so positive values favor \textbf{TMTL(Debiased)}.
The horizontal axis gives the total number of target cells.}
\label{fig:realdata-benefits}
\end{figure}

\clearpage
\section{Discussion} \label{sec:discussion}
This paper develops a transfer-learning framework for multitask regression
with multiple source domains and a data-limited target domain. It combines
shared predictor structure across responses with similarity between source
and target regression matrices. The two estimators offer complementary ways
to use this information: fusion borrows strength across domains, while target
correction adjusts for systematic source-induced shifts. A central message is
that the usefulness of correction depends on how these shifts combine.
Fusion performs well when they are small or tend to cancel, whereas correction
becomes more valuable when they align.

The theoretical and empirical results support this perspective. The error
bounds identify regimes in which transfer across domains and sharing across
responses jointly improve estimation. In the simulations, correction adds
little when source shifts cancel but improves accuracy and reduces negative
transfer when the shifts align. The single-cell analysis illustrates the value
of accounting for cell-type heterogeneity, with the clearest prediction gains
for smaller cell types. These benefits are not uniform: the simulations also
include settings in which using source data is worse than relying on the
target sample alone.

Several directions merit further study. Adaptive tuning procedures with
provable guarantees would connect the theoretical penalty choices to practical
parameter selection. Valid confidence intervals and hypothesis tests would
extend the framework beyond point estimation, while support-recovery guarantees
would clarify when relevant predictors can be identified. The present theory
already allows domain-specific, well-conditioned sub-Gaussian designs;
extensions to heavy-tailed covariates and more severe differences between
designs would broaden its applicability.

\section*{Appendices}
\input{appendix_overview}

\begin{appendix}
\makeatletter
\@addtoreset{theorem}{section}
\@addtoreset{assu}{section}
\@addtoreset{step}{section}
\let\c@lemma\c@theorem
\let\c@prop\c@theorem
\let\c@exam\c@theorem
\let\c@remark\c@theorem
\makeatother
\renewcommand{\thetheorem}{\thesection.\arabic{theorem}}
\renewcommand{\thelemma}{\thetheorem}
\renewcommand{\theprop}{\thetheorem}
\renewcommand{\theexam}{\thetheorem}
\renewcommand{\theremark}{\thetheorem}
\renewcommand{\theassu}{\thesection.\arabic{assu}}
\renewcommand{\thestep}{\thesection.\arabic{step}}
\providecommand{\theHtheorem}{}
\renewcommand{\theHtheorem}{appendix.\thesection.\arabic{theorem}}
\providecommand{\theHlemma}{}
\renewcommand{\theHlemma}{appendix.\thesection.\arabic{theorem}}
\providecommand{\theHprop}{}
\renewcommand{\theHprop}{appendix.\thesection.\arabic{theorem}}
\providecommand{\theHexam}{}
\renewcommand{\theHexam}{appendix.\thesection.\arabic{theorem}}
\providecommand{\theHremark}{}
\renewcommand{\theHremark}{appendix.\thesection.\arabic{theorem}}
\providecommand{\theHassu}{}
\renewcommand{\theHassu}{appendix.\thesection.\arabic{assu}}
\providecommand{\theHstep}{}
\renewcommand{\theHstep}{appendix.\thesection.\arabic{step}}

\input{supp_computation}
\input{supp_generalization}
\input{supp_proofs_upper}

\input{appendix_d_lower_bound}

\input{appendix_c_safe_tmtl}
\input{supp_simulation}
\FloatBarrier
\input{supp_realdata}
\FloatBarrier
\end{appendix}

\bibliographystyle{imsart-nameyear}
\bibliography{references}

\end{document}

%% file: appendix_overview.tex
These appendices provide computational details, theoretical developments, and
additional empirical results. Sections~\ref{app:computation}--\ref{sec:generalization}
describe the optimization algorithm and extensions to more general designs.
Sections~\ref{app:theory-proofs}--\ref{app:safe-tmtl} contain proofs and further
theoretical results, including projected estimators and comparisons with the
minimax benchmark. Sections~\ref{app:simulation_p100}--\ref{app:realdata}
provide additional simulation results and details of the single-cell
application. Structural Assumptions~\ref{assumption_joint_sparsity}
and~\ref{assumption_source_target_contrast} are those of the main text.
The appendices are lettered A--G; equation and table numbering continues from the main text, and formal results are numbered by appendix.

%% file: supp_computation.tex
\section{Computational details for TMTL}
\label{app:computation}
This section gives the updates, diagnostics, and costs of the two-block ADMM
algorithm described in Section~\ref{sec:opt-tmtl}. Throughout, the notation is
that of \eqref{eq:admm-main-constrained}: the first block is
$(\mathbf B^0,\ldots,\mathbf B^L)$ and the second is all auxiliary matrices
$\boldsymbol\Gamma$, the parallel updates within either block being components
of one exact block minimization rather than separate multiblock steps.

\subsection{Augmented Lagrangian}
\label{app:computation-lagrangian}
The augmented Lagrangian of \eqref{eq:admm-main-constrained} is
\begin{align*}
\begin{split}
&\frac{1}{2NK} \sum\limits_{\ell=0}^{L} \lVert \mathbf{Y}^\ell - \mathbf{X}^{\ell} \mathbf{B}^\ell \rVert_F^2 + \lambda_0 \lVert \boldsymbol{\Gamma}^{00} \rVert_{2,1} + \sum\limits_{\ell=1}^{L} \lambda_\ell \lVert \boldsymbol{\Gamma}^\ell - \boldsymbol{\Gamma}^{0\ell} \rVert_{2,1} \\
&\quad + \sum\limits_{\ell=0}^{L} \frac{\rho_{0 \ell}}{2}  \lVert  \mathbf{B}^0 - \boldsymbol{\Gamma}^{0\ell} + \mathbf{U}^\ell \rVert_F^2 - \sum\limits_{\ell=0}^{L} \frac{\rho_{0\ell}}{2} \lVert \mathbf{U}^\ell \rVert_F^2 \\
&\quad + \sum\limits_{\ell=1}^{L} \frac{\rho_\ell}{2} \lVert  \mathbf{B}^\ell - \boldsymbol{\Gamma}^\ell + \mathbf{V}^\ell \rVert_F^2 - \sum\limits_{\ell=1}^{L} \frac{\rho_\ell}{2} \lVert \mathbf{V}^\ell \rVert_F^2,
\end{split}
\end{align*}
where $\rho_{00},\rho_{01},\ldots,\rho_{0L},\rho_1,\ldots,\rho_L$ are positive
penalty parameters and $\mathbf U^\ell,\mathbf V^\ell$ are scaled dual
variables. We initialize $\rho_{0\ell}^{(0)}=1$ for $0\leq\ell\leq L$,
$\rho_\ell^{(0)}=1$ for $\ell\in[L]$, and all auxiliary matrices and scaled
dual variables at zero. Each iteration updates the $\mathbf B$ block, then the
$\boldsymbol\Gamma$ block, and finally the scaled dual variables.

\subsection{Block updates}
\label{app:computation-updates}
The $\mathbf B$ updates of Section~\ref{sec:opt-tmtl} admit the closed forms
\begin{align*}
    \mathbf{B}^0 &\leftarrow \left( (NK)^{-1} (\mathbf{X}^{0})^\top \mathbf{X}^{0} + \sum\limits_{\ell=0}^{L} \rho_{0\ell} \mathbf{I}_p\right)^{-1} \left( (NK)^{-1} (\mathbf{X}^{0})^\top \mathbf{Y}^0 + \sum\limits_{\ell=0}^{L} \rho_{0\ell}(\boldsymbol{\Gamma}^{0\ell} - \mathbf{U}^\ell) \right) \\
    \mathbf{B}^\ell &\leftarrow \left( (NK)^{-1} (\mathbf{X}^{\ell})^\top \mathbf{X}^{\ell} + \rho_\ell \mathbf{I}_p\right)^{-1} \left( (NK)^{-1} (\mathbf{X}^{\ell})^\top \mathbf{Y}^\ell + \rho_\ell (\boldsymbol{\Gamma}^\ell - \mathbf{V}^\ell) \right), \quad \ell \in [L].
\end{align*}
The auxiliary-variable updates exploit the block separability of the
$\ell_{2,1}$ norm, with one block for each feature row. They use the group
soft-thresholding operator $S_a$ of \eqref{eq:group-soft-threshold}; see
Section~6.5.1 of \citet{parikh2014proximal}. For each $j\in[p]$, the
target-copy update is
$\boldsymbol{\Gamma}_j^{00}\leftarrow S_{\lambda_0/\rho_{00}}(\mathbf{B}_j^0+\mathbf{U}_j^0)$.
For each source $\ell\in[L]$, set
\[
\mathbf q_j^{0\ell}:=\mathbf B_j^0+\mathbf U_j^\ell,
\qquad
\mathbf q_j^\ell:=\mathbf B_j^\ell+\mathbf V_j^\ell,
\qquad
\mu_\ell:=\frac{\rho_{0\ell}\rho_\ell}{\rho_{0\ell}+\rho_\ell},
\]
so that the row-wise joint subproblem is
\begin{equation}
\min_{\boldsymbol \Gamma_j^\ell,\boldsymbol \Gamma_j^{0\ell}}
\lambda_\ell\lVert\boldsymbol \Gamma_j^\ell-\boldsymbol \Gamma_j^{0\ell}\rVert_2
+
\frac{\rho_{0\ell}}{2}\lVert\boldsymbol \Gamma_j^{0\ell}-\mathbf q_j^{0\ell}\rVert_2^2
+
\frac{\rho_\ell}{2}\lVert\boldsymbol \Gamma_j^\ell- \mathbf q_j^\ell\rVert_2^2.
\label{eq:admm-row-subproblem}
\end{equation}
Reparameterizing \eqref{eq:admm-row-subproblem} by
$\mathbf r_j^\ell=\boldsymbol \Gamma_j^\ell-\boldsymbol\Gamma_j^{0\ell}$ and
minimizing the quadratic part over $\boldsymbol \Gamma_j^{0\ell}$ at fixed
$\mathbf r_j^\ell$ leaves a single proximal problem in the difference, with
solution
\begin{equation}
\mathbf r_j^\ell
=
S_{\lambda_\ell/\mu_\ell}\left(\mathbf q_j^\ell- \mathbf q_j^{0\ell}\right).
\label{eq:admm-row-prox}
\end{equation}
Back-substituting \eqref{eq:admm-row-prox} gives the auxiliary-variable updates
\newpage
\begin{align*}
    \boldsymbol{\Gamma}_j^{0\ell}
    &\leftarrow
    \frac{\rho_{0\ell} \mathbf q_j^{0\ell}+\rho_\ell \mathbf q_j^\ell-\rho_\ell \mathbf r_j^\ell}
    {\rho_{0\ell}+\rho_\ell}, \quad \ell\in[L],\\
    \boldsymbol{\Gamma}_j^{\ell}
    &\leftarrow
    \frac{\rho_{0\ell} \mathbf q_j^{0\ell}+\rho_\ell \mathbf q_j^\ell+\rho_{0\ell} \mathbf r_j^\ell}
    {\rho_{0\ell}+\rho_\ell}, \quad \ell\in[L].
\end{align*}

\subsection{Residual diagnostics and penalty adaptation}
\label{app:computation-residuals}
Let $t$ index the iterations. After the primal and auxiliary updates at
iteration $t+1$, define the constraint-specific primal residuals and dual
residual contributions by
\begin{equation}
\begin{aligned}
\mathbf R_{\mathrm p,0\ell}^{(t+1)}&:=\mathbf B^{0,(t+1)}-\boldsymbol\Gamma^{0\ell,(t+1)}, & \mathbf R_{\mathrm d,0\ell}^{(t+1)}&:=\rho_{0\ell}^{(t)}\bigl(\boldsymbol\Gamma^{0\ell,(t)}-\boldsymbol\Gamma^{0\ell,(t+1)}\bigr), \quad 0\leq\ell\leq L,\\
\mathbf R_{\mathrm p,\ell}^{(t+1)}&:=\mathbf B^{\ell,(t+1)}-\boldsymbol\Gamma^{\ell,(t+1)}, & \mathbf R_{\mathrm d,\ell}^{(t+1)}&:=\rho_{\ell}^{(t)}\bigl(\boldsymbol\Gamma^{\ell,(t)}-\boldsymbol\Gamma^{\ell,(t+1)}\bigr), \quad \ell\in[L].
\end{aligned}
\label{eq:admm-residuals}
\end{equation}
The primal residual concatenates the $\mathbf R_{\mathrm p}$ terms in
\eqref{eq:admm-residuals}. The corresponding two-block dual residual has target
component $\sum_{\ell=0}^L\mathbf R_{\mathrm d,0\ell}^{(t+1)}$ and source components
$\mathbf R_{\mathrm d,1}^{(t+1)},\ldots,\mathbf R_{\mathrm d,L}^{(t+1)}$. We stop when the maximum
Frobenius norm among the constraint-specific primal residuals and dual residual
contributions in \eqref{eq:admm-residuals} falls below the prescribed tolerance,
or when the iteration cap is reached. To adapt each penalty
separately, we use the corresponding constraint-specific pair in
\eqref{eq:admm-residuals} as a residual diagnostic. For a pair and its penalty,
write $\mathbf R_{\mathrm p}^{(t+1)}$, $\mathbf R_{\mathrm d}^{(t+1)}$, and $\rho^{(t)}$
for the associated quantities, and set
\begin{equation}
\rho^{(t+1)}:=\begin{cases}
\min\{\tau_{\mathrm{incr}}\rho^{(t)},\rho_{\max}\},
& t<T_{\mathrm{adapt}}\ \text{and}\
\lVert\mathbf R_{\mathrm p}^{(t+1)}\rVert_F
>\mu\lVert\mathbf R_{\mathrm d}^{(t+1)}\rVert_F,\\
\rho^{(t)}/\tau_{\mathrm{decr}},
& t<T_{\mathrm{adapt}}\ \text{and}\
\lVert\mathbf R_{\mathrm d}^{(t+1)}\rVert_F
>\mu\lVert\mathbf R_{\mathrm p}^{(t+1)}\rVert_F,\\
\rho^{(t)}, & \text{otherwise},
\end{cases}
\label{eq:admm-penalty-update}
\end{equation}
where $T_{\mathrm{adapt}}\in\mathbb N$ is a fixed adaptation horizon,
$\mu>1$, $\tau_{\mathrm{incr}},\tau_{\mathrm{decr}}>1$, and
$1\leq\rho_{\max}<\infty$. After the additive scaled-dual update, we
multiply the corresponding $\mathbf U^\ell$ or $\mathbf V^\ell$ by
$\rho^{(t)}/\rho^{(t+1)}$ whenever the penalty changes. This rescaling preserves
the associated unscaled dual variable; see Sections~3.3.1 and~3.4.1 of
\citet{boyd2011distributed}.

\subsection{Convergence and computational cost}
\label{app:computation-convergence}
The convergence claim in Section~\ref{sec:opt-tmtl} follows from standard
theory. After iteration $T_{\mathrm{adapt}}$ the penalties in
\eqref{eq:admm-penalty-update} are fixed and remain positive, with lower bound
$\tau_{\mathrm{decr}}^{-T_{\mathrm{adapt}}}$. Rescaling each equality
constraint by the square root of its fixed penalty then gives the standard
two-block formulation, initialized at the end of the adaptive phase, to which
\citet[Section~3.2]{boyd2011distributed} applies.

For the dense implementation, forming the Gram and design--response
cross-product matrices costs $O(Np^2+NpK)$ operations, followed by
$O((L+1)p^3)$ operations to eigendecompose the Gram matrices. The cached
quantities and ADMM variables require $O((L+1)p^2+LpK)$ memory, excluding the
data. Each subsequent iteration costs $O((L+1)p^2K)$ for the coefficient
updates and $O(LpK)$ for the auxiliary and dual updates.

%% file: supp_generalization.tex
\section{Task-specific designs and unequal domain sizes}
\label{sec:generalization}
The main text presents TMTL under the common-design assumption
$\mathbf X^{\ell 1}=\cdots=\mathbf X^{\ell K}$ within each domain and develops
the theory for sources of equal size. Neither restriction is needed to define
the estimator. Let $n_0=n_T$, allow $n_1,\ldots,n_L$ to differ, and set
\[
N:=\sum_{\ell=0}^L n_\ell,
\qquad
\omega_\ell:=\frac{n_\ell}{N},\quad 0\leq\ell\leq L.
\]
Thus, $\sum_{\ell=0}^L\omega_\ell=1$.
All tasks and domains use the same $p$ predictors, indexed consistently,
although their design matrices may differ. Throughout this section
$\mathbf X^{\ell k}\in\mathbb R^{n_\ell\times p}$ denotes the design matrix
for task $k\in[K]$ in domain $\ell$, and
$\mathbf y^{\ell k},\boldsymbol\epsilon^{\ell k}\in\mathbb R^{n_\ell}$ the
corresponding response and error vectors. The common-design model of the main
text is the special case $\mathbf X^{\ell1}=\cdots=\mathbf X^{\ell K}$.
For each domain $0\leq \ell \leq L$, define the stacked response, error, coefficient vector, and block-diagonal design by
\[
\mathbf y^\ell
=
\begin{pmatrix}
\mathbf y^{\ell 1}\\
\vdots\\
\mathbf y^{\ell K}
\end{pmatrix},
\qquad
\boldsymbol\epsilon^\ell
=
\begin{pmatrix}
\boldsymbol\epsilon^{\ell 1}\\
\vdots\\
\boldsymbol\epsilon^{\ell K}
\end{pmatrix},
\qquad
\boldsymbol\beta^\ell
=
\begin{pmatrix}
\boldsymbol\beta^{\ell 1}\\
\vdots\\
\boldsymbol\beta^{\ell K}
\end{pmatrix},
\qquad
\mathcal X^\ell
=
\operatorname{diag}(\mathbf X^{\ell 1},\ldots,\mathbf X^{\ell K}).
\]
Then $\mathbf y^\ell,\boldsymbol\epsilon^\ell\in\mathbb R^{n_\ell K}$,
$\boldsymbol\beta^\ell\in\mathbb R^{pK}$, and
$\mathcal X^\ell\in\mathbb R^{n_\ell K\times pK}$. The $\ell$-th domain satisfies the vectorized linear model
\begin{align*}
\mathbf{y}^{\ell} &= \mathcal X^{\ell} \boldsymbol{\beta}^{\ell} + \boldsymbol{\epsilon}^{\ell}, \quad 0 \leq \ell \leq L.
\end{align*}
Under the common-design assumption, $\mathbf X^{\ell k}=\mathbf X^\ell$
for all $k\in[K]$, so $\mathcal X^\ell=\mathbf I_K\otimes\mathbf X^\ell$.
Thus, $\mathbf X^\ell$ denotes the ordinary design matrix, while
$\mathcal X^\ell$ denotes the block-diagonal design in the vectorized model.
The row-wise grouping is inherited from the matrix formulation. For
$\boldsymbol\beta^\ell\in\mathbb R^{pK}$, let $\beta_{jk}^\ell$ denote the
$(p(k-1)+j)$-th coordinate, corresponding to feature $j$ and task $k$, and set
\[
\boldsymbol\beta_j^\ell
=
(\beta_{j1}^\ell,\ldots,\beta_{jK}^\ell)^\top\in\mathbb R^K.
\]
The group penalty is
\[
\lVert\boldsymbol\beta^\ell\rVert_{2,1}
:=\sum_{j=1}^p\lVert\boldsymbol\beta_j^\ell\rVert_2,
\]
which agrees with the row-sparsity norm in the matrix formulation.
The fused estimator solves
\begin{align*}
\widehat{\boldsymbol\beta}^0,\ldots,\widehat{\boldsymbol\beta}^L
&\in
\argminA_{\boldsymbol\beta^0,\ldots,\boldsymbol\beta^L}
\left[
\sum_{\ell=0}^L
\frac{\omega_\ell}{2n_\ell K}
\lVert
\mathbf y^\ell-\mathcal X^\ell\boldsymbol\beta^\ell
\rVert_2^2
\right.\\
&\qquad\qquad\left.
+
\lambda_0
\left\{
\lVert\boldsymbol\beta^0\rVert_{2,1}
+
\sum_{\ell=1}^L
a_\ell
\lVert\boldsymbol\beta^\ell-\boldsymbol\beta^0\rVert_{2,1}
\right\}
\right],
\\
\widehat w^F
&:=
\sum_{\ell=0}^L\omega_\ell\widehat{\boldsymbol\beta}^\ell.
\end{align*}
Because $\omega_\ell/n_\ell=1/N$, the loss gives equal weight to every
observation while allowing the domain sizes to differ.
For unequal source sizes, the balanced-source weight choices
$a_\ell=8\sqrt{n_S/N}$ and $a_\ell=8n_S/N$ suggest the empirical candidates
$a_\ell=8\sqrt{\omega_\ell}$ and $a_\ell=8\omega_\ell$, respectively.
The debiased estimator then solves the target-correction problem
\begin{align*}
\widehat d
&\in
\argminA_{d\in\mathbb R^{pK}}
\frac{1}{2n_0K}
\lVert
\mathbf y^0-\mathcal X^0(\widehat w^F+d)
\rVert_2^2
+
\widetilde\lambda\lVert d\rVert_{2,1},
\\
\widehat w^D
&:=
\widehat w^F+\widehat d.
\end{align*}
Setting $n_1=\cdots=n_L=n_S$ and
$\mathbf X^{\ell1}=\cdots=\mathbf X^{\ell K}$ recovers the estimator in the
main text. The error bounds proved in Section~\ref{app:theory-proofs} retain
equal source sample sizes; no error bounds for unequal source sizes are
claimed here.

%% file: supp_proofs_upper.tex
\section{Proofs of Theorems~\ref{thm:fused} and~\ref{thm:debiased}}
\label{app:theory-proofs}
We use the vectorized notation of Section~\ref{sec:generalization}. Stacking
the columns of a coefficient matrix identifies the two, so that
$\beta^\ell=\operatorname{vec}(\mathbf B^\ell)$ and
$w^F=\operatorname{vec}(\mathbf W^F)\in\mathbb R^{pK}$, with the fitted
quantities $\widehat\beta^\ell=\operatorname{vec}(\widehat{\mathbf B}^\ell)$,
$\widehat w^F=\operatorname{vec}(\widehat{\mathbf W}^F)$ and
$\widehat d=\operatorname{vec}(\widehat{\mathbf D})$ identified in the same
way; $\mathcal X^\ell$ and $y^\ell$ denote the correspondingly stacked design
and response. Under this identification $\lVert\cdot\rVert_{2,2}$ is the
Frobenius norm and the mixed row norms agree with those of the main text, so
boldface is dropped without ambiguity. The fused
proof first bounds estimation error around the population coefficient $w^F$
and then adds its discrepancy from the target coefficient $\beta^0$. The
correction proof shows how target data can remove that population discrepancy,
while accounting for the error inherited from estimating $w^F$. For both
arguments, score bounds control the empirical correlation between the noise and the design and curvature bounds relate
prediction error to coefficient error. Once these bounds hold, the remaining
steps are deterministic.

\subsection{Assumptions and notation}
\label{app:proof-preliminaries}
Throughout, $n_0=n_T$ and $n_1=\cdots=n_L=n_S$. For task-specific designs,
we use the following extension of Assumption~\ref{assumption_design}.
\begin{assu}[Task-specific sub-Gaussian design]
\label{assumption_task_specific_design}
For each $0\leq \ell\leq L$ and $k\in[K]$, let $\mathbf x_i^{\ell k}$
denote the transpose of the $i$th row of $\mathbf X^{\ell k}$.
The vectors $\mathbf x_1^{\ell k},\ldots,\mathbf x_{n_\ell}^{\ell k}$ are
independent, mean-zero, and sub-Gaussian with covariance
$\boldsymbol\Sigma^{\ell k}$. There exists $\kappa_X<\infty$ such that
\[
\lVert u^\top\mathbf x_i^{\ell k}\rVert_{\psi_2}
\leq
\kappa_X(u^\top\boldsymbol\Sigma^{\ell k}u)^{1/2}
\quad\text{for all }u\in\mathbb R^p.
\]
The eigenvalues of $\boldsymbol\Sigma^{\ell k}$ are uniformly bounded above
and away from zero.
\end{assu}
We impose Assumption~\ref{assumption_error} with the errors jointly
independent of the full collection
$\{\mathbf X^{\ell k}:0\leq\ell\leq L,\quad k\in[K]\}$.
No independence across task-specific design matrices is required. When
$\mathbf X^{\ell1}=\cdots=\mathbf X^{\ell K}$ within each domain, these
conditions reduce to the design and error assumptions in the main paper.

For $v\in\mathbb R^{pK}$, let $v_j\in\mathbb R^K$ collect the coefficients
of feature $j$ across tasks, and define
\newpage
\[
\lVert v\rVert_{2,1}=\sum_{j=1}^p\lVert v_j\rVert_2,
\qquad
\lVert v\rVert_{2,2}=\left(\sum_{j=1}^p\lVert v_j\rVert_2^2\right)^{1/2}.
\]
We also use $\lVert v\rVert_{2,\infty}:=\max_{j\in[p]}\lVert v_j\rVert_2$.
The $\ell_{2,2}$ norm is the Frobenius norm of the corresponding $p\times K$
coefficient matrix. Write
$\mathcal S(\beta^0):=\{j\in[p]:\beta_j^0\neq0\}$ for the target support.
The heterogeneity quantities and population fused coefficient are
\[
h_\ell^\star:=\lVert\beta^\ell-\beta^0\rVert_{2,1},
\qquad
\bar h:=\frac{n_S}{N}\sum_{\ell=1}^Lh_\ell^\star,
\qquad
w^F:=\frac{n_S}{N}\sum_{\ell=1}^L\beta^\ell+\frac{n_T}{N}\beta^0,
\]
so
\[
\epsilon_D:=\lVert w^F-\beta^0\rVert_{2,1}
=\left\lVert\frac{n_S}{N}\sum_{\ell=1}^L(\beta^\ell-\beta^0)\right\rVert_{2,1}
\leq\bar h.
\]
These definitions agree with Section~\ref{sec:problem-setup-tmtl} under
vectorization.

The fusion penalty acts on differences between source and target coefficients.
It is therefore convenient to use the target coefficient and these differences
as coordinates. Set
\[
\theta^\ell=\beta^\ell-\beta^0\quad(\ell\in[L]),
\qquad \theta^0=\beta^0,
\qquad
\theta=((\theta^1)^\top,\ldots,(\theta^L)^\top,(\theta^0)^\top)^\top.
\]
Here $\theta^0$ is the target coefficient and each $\theta^\ell$ is a source
contrast, so the coefficient in source domain $\ell$ is
$\theta^0+\theta^\ell$. In these coordinates, the penalty separates over the
target and contrast blocks. The loss and penalized estimator are
\[
\mathcal L(\theta)=\frac{1}{2NK}\left\{
\sum_{\ell=1}^L\lVert y^\ell-\mathcal X^\ell(\theta^\ell+\theta^0)\rVert_2^2
+\lVert y^0-\mathcal X^0\theta^0\rVert_2^2
\right\},
\]
\[
\widehat\theta\in\argminA_{\theta}
\left\{\mathcal L(\theta)+\lambda_0\lVert\theta^0\rVert_{2,1}
+\sum_{\ell=1}^L\lambda_\ell\lVert\theta^\ell\rVert_{2,1}\right\}.
\]
Throughout, script letters denote losses and the plain letter $L$ denotes the
number of source domains.
Define the estimation errors by
\[
\widehat\Delta^\ell:=\widehat\theta^\ell-\theta^\ell\quad(0\leq\ell\leq L),
\qquad
\widehat\Delta:=((\widehat\Delta^1)^\top,\ldots,
(\widehat\Delta^L)^\top,(\widehat\Delta^0)^\top)^\top.
\]
For a source domain, $\widehat\Delta^\ell$ is the error in its contrast;
the error in its fitted coefficient is
$\widehat\Delta^\ell+\widehat\Delta^0$. Averaging the fitted domain
coefficients gives the fused estimation error
\[
\widehat\Delta^F:=\frac{n_S}{N}\sum_{\ell=1}^L\widehat\Delta^\ell
+\widehat\Delta^0=\widehat w^F-w^F.
\]
Thus $\widehat\Delta^F$ measures estimation around $w^F$. The error of
$\widehat w^F$ relative to the target also includes the population discrepancy
$w^F-\beta^0$, whose mixed norm is $\epsilon_D$.

For $0\leq\ell\leq L$, let
$\widehat\Sigma^\ell:=(\mathcal X^\ell)^\top\mathcal X^\ell/(n_\ell K)
\in\mathbb R^{pK\times pK}$, the empirical Gram matrix of the stacked design.
It is not the $p\times p$ population covariance
$\boldsymbol\Sigma^{\ell k}$ of Assumption~\ref{assumption_task_specific_design};
under a common design within domain $\ell$ the two are related by
\newpage
\[
\widehat\Sigma^\ell
=\frac{1}{K}\,\mathbf I_K\otimes
\frac{(\mathbf X^\ell)^\top\mathbf X^\ell}{n_\ell}.
\]
For $\Delta^0,\ldots,\Delta^L\in\mathbb R^{pK}$, write
\[
\Delta=((\Delta^1)^\top,\ldots,(\Delta^L)^\top,(\Delta^0)^\top)^\top.
\]
The Hessian $\widehat\Sigma$ of $\mathcal L$ satisfies
\begin{align*}
\Delta^\top\widehat\Sigma\Delta
&=\frac{1}{NK}\left\{
\sum_{\ell=1}^L\lVert\mathcal X^\ell(\Delta^\ell+\Delta^0)\rVert_2^2
+\lVert\mathcal X^0\Delta^0\rVert_2^2\right\}\\
&=\sum_{\ell=1}^L\frac{n_S}{N}
(\Delta^\ell+\Delta^0)^\top\widehat\Sigma^\ell(\Delta^\ell+\Delta^0)
+\frac{n_T}{N}(\Delta^0)^\top\widehat\Sigma^0\Delta^0.
\end{align*}
The penalty contribution from source heterogeneity is summarized by
\[
H_\lambda:=\sum_{\ell=1}^L\lambda_\ell h_\ell^\star.
\]
It vanishes when all source coefficients equal the target coefficient.
Unlike $\epsilon_D$, it sums the sizes of the source contrasts before any
directional cancellation can occur.
Constants in concentration and curvature bounds depend only on the fixed
design and noise constants; $c_0$ may also depend on the tail constant $c$.
Constants in rate bounds may additionally depend on fixed tuning constants
and the envelopes $M_F$ and $M_T$ in Section~\ref{sec:theory}.
Sample-size thresholds may depend on how quickly the stated $o(1)$ terms vanish.

\subsection{High-probability ingredients}
\label{app:high-prob-events}
Lemma~\ref{lemma_score_bounds} bounds the noise terms in the gradients by
the corresponding penalty levels. This lets the penalties control the random
linear terms when the loss is expanded around the true coefficients.
Lemma~\ref{lemma_rscrsm} then relates empirical quadratic loss to coefficient
error, with a tolerance involving the mixed norm. Sparsity and the cone
inequalities below will control this tolerance. All subsequent arguments are
deterministic on the intersection of these events.
\begin{lemma}[Score bounds]
\label{lemma_score_bounds}
Under Assumptions~\ref{assumption_task_specific_design} and~\ref{assumption_error}, suppose that $n_S\ge n_T$. Fix $c>0$. There are constants $c_0,C,c_1>0$, with $c_0$ also depending on $c$, such that, if
\[
\lambda_0\ge c_0N^{-1}\sqrt{\frac{(L+1)n_S}{K}\psi_K},
\qquad
\lambda_\ell\ge c_0N^{-1}\sqrt{\frac{n_S}{K}\psi_K},\quad \ell\in[L],
\]
then the event
\[
\mathcal E_{\rm score}:=
\left\{
\left\lVert\frac{1}{NK}(\mathcal X^\ell)^\top\epsilon^\ell\right\rVert_{2,\infty}
\leq\frac{\lambda_\ell}{2}
\text{ for all }\ell\in[L],
\quad
\left\lVert\frac{1}{NK}\sum_{\ell=0}^L(\mathcal X^\ell)^\top\epsilon^\ell\right\rVert_{2,\infty}
\leq\frac{\lambda_0}{2}
\right\}
\]
satisfies
\newpage
\[
\mathbb P(\mathcal E_{\rm score})
\geq
1-C(L+1)p^{1-c}
-C(L+1)Kp\exp(-c_1n_T).
\]
If $\widetilde\lambda\ge c_0\sqrt{\psi_K/(K n_T)}$, then the target-score event
\[
\mathcal E_{\rm tscore}:=
\left\{
\left\lVert\frac{1}{n_TK}(\mathcal X^0)^\top\epsilon^0\right\rVert_{2,\infty}
\leq\frac{\widetilde\lambda}{2}
\right\}
\]
also has probability at least
\[
1-Cp^{1-c}-CKp\exp(-c_1n_T).
\]
\end{lemma}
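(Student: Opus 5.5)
The plan is to condition on the designs and exploit Gaussianity of the errors. Then reduce each row-norm bound to a chi-square-type tail, and control the random design column norms on a separate event. That second event is what produces the $\exp(-c_1 n_T)$ term.

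Fix a row index $j\in[p]$ and a source $\ell\in[L]$. Under a common design, the $j$th row of $(NK)^{-1}(\mathcal X^\ell)^\top\epsilon^\ell$ is the $K$-vector
\[
g_j^\ell = \frac{1}{NK}\bigl((\mathbf x^{\ell}_{\cdot j})^\top\boldsymbol\epsilon^{\ell 1},\ldots,(\mathbf x^{\ell}_{\cdot j})^\top\boldsymbol\epsilon^{\ell K}\bigr)^\top .
\]
For task-specific designs the column $\mathbf x^{\ell k}_{\cdot j}$ is used instead. Conditionally on the designs, the coordinates are independent Gaussians, with variance $\sigma_\ell^2\lVert \mathbf x^{\ell k}_{\cdot j}\rVert_2^2/(NK)^2$. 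On the column-norm event
\[
\mathcal E_{\rm col}=\bigl\{\lVert \mathbf x^{\ell k}_{\cdot j}\rVert_2^2\le 2C_X' n_\ell\ \text{for all }\ell,j,k\bigr\},
\]
this variance is at most $C n_S/(NK)^2$. Here $C_X'$ is a constant multiple of $C_X\kappa_X^2$.

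The key tool is the Laurent--Massart bound for a weighted $\chi^2_K$:
\[
\mathbb P\bigl(\lVert g_j^\ell\rVert_2^2 \ge v(K+2\sqrt{Kt}+2t)\bigr)\le e^{-t},
\]
where $v$ is the maximal coordinate variance. Take $t=c\log p$ and use $K+2\sqrt{Kt}+2t\le 2K+3t\lesssim K\psi_K$. This gives
\[
\lVert g_j^\ell\rVert_2\lesssim \frac{\sqrt{n_S K\psi_K}}{NK}=N^{-1}\sqrt{\frac{n_S}{K}\psi_K}
\]
with probability $1-p^{-c}$. A union bound over $j$ and $\ell$ contributes $Lp^{1-c}$. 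For $c_0$ large, this is at most $\lambda_\ell/2$.

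For the pooled term $\sum_{\ell=0}^L(\mathcal X^\ell)^\top\epsilon^\ell$, the same structure holds. By independence across domains, each coordinate is still Gaussian, with variance $\sum_\ell \sigma_\ell^2\lVert\mathbf x^{\ell k}_{\cdot j}\rVert^2/(NK)^2$. On $\mathcal E_{\rm col}$ this is $\lesssim (n_T+Ln_S)/(NK)^2\le (L+1)n_S/(NK)^2$, using $n_S\ge n_T$. Repeating the argument gives the $\lambda_0$ threshold, with one more union over $j$.

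The target-score event is identical, with normalization $n_TK$ instead of $NK$. The variance is then $\lesssim 1/(n_T K^2)$, so the row norm is $\lesssim \sqrt{\psi_K/(Kn_T)}$, which matches $\widetilde\lambda/2$.

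The remaining step is the column-norm event. For each $(\ell,j,k)$, the quantity $\lVert\mathbf x^{\ell k}_{\cdot j}\rVert_2^2/n_\ell$ is an average of $n_\ell$ independent sub-exponential variables. Their mean is $\Sigma^{\ell k}_{jj}\le C_X$ and their $\psi_1$ norm is $\lesssim\kappa_X^2C_X$. Bernstein's inequality at a constant deviation gives failure probability $2\exp(-c_1 n_\ell)\le 2\exp(-c_1 n_T)$. A union over the $(L+1)Kp$ triples yields the $C(L+1)Kp\exp(-c_1n_T)$ term, or $CKp\exp(-c_1 n_T)$ for the target alone.

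Combining everything, the probability of the complement of $\mathcal E_{\rm score}\cap\mathcal E_{\rm col}$ is bounded by conditioning on $\mathcal E_{\rm col}$, since the errors are independent of the designs. The main obstacle is bookkeeping rather than any single estimate. First, the constant $c_0$ must absorb $\sigma_{\max}$, $C_X$, $\kappa_X$ and the factor coming from $t=c\log p$; this is why it depends on $c$. Second, the deviation term $2\sqrt{Kt}+2t$ must be dominated by a multiple of $K\psi_K=K+\log p$ uniformly in $K$. The inequality $2\sqrt{K c\log p}\le K+c\log p$ handles this.
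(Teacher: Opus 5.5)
Your proposal is correct and follows the paper's proof essentially step for step. You condition on the designs and apply a Laurent--Massart chi-square tail at $t=c\log p$, using $K+2\sqrt{Kt}+2t\lesssim K\psi_K$; on a column-norm event this bounds each row of the source, pooled and target scores at the scales $N^{-1}\sqrt{n_S\psi_K/K}$, $N^{-1}\sqrt{(L+1)n_S\psi_K/K}$ and $\sqrt{\psi_K/(Kn_T)}$, and sub-exponential concentration of the column norms supplies the $(L+1)Kp\exp(-c_1n_T)$ term.
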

\begin{proof}
For $0\le\ell\le L$, define
\begin{equation}
\mathcal E_{\rm col}^{\ell}:=\left\{\max_{j\in[p],k\in[K]}\frac{1}{n_\ell}\sum_{i=1}^{n_\ell}(X_{ij}^{\ell k})^2\le C_{\rm col}\right\},
\label{eq:score-column-event}
\end{equation}
where $C_{\rm col}$ is sufficiently large.
Assumption~\ref{assumption_task_specific_design} and sub-exponential
concentration give
\begin{equation}
\mathbb P\bigl((\mathcal E_{\rm col}^{\ell})^c\bigr)\le CKp\exp(-c_1n_\ell).
\label{eq:score-column-probability}
\end{equation}
Conditional on all design matrices, each score block is a centered Gaussian
vector with independent coordinates. If such a vector $Z\in\mathbb R^K$
has coordinate variances at most $\nu^2$, its squared norm is stochastically
bounded by $\nu^2$ times a chi-square variable with $K$ degrees of freedom.
The tail inequality of \citep[Lemma~1]{laurent2000adaptive} therefore gives
\begin{equation}
\mathbb P\left\{\lVert Z\rVert_2>\nu\sqrt{C_cK\psi_K}\right\}\le p^{-c},\qquad C_c=\max\{2,3c\},
\label{eq:score-gaussian-tail}
\end{equation}
since $K+2\sqrt{Kc\log p}+2c\log p\le C_c(K+\log p)$.

On $\bigcap_{\ell=0}^L\mathcal E_{\rm col}^{\ell}$, the conditional
coordinate variances of the source and aggregate score blocks are at most
$Cn_S/(N^2K^2)$ and $CN/(N^2K^2)$, respectively. Since $N\le(L+1)n_S$,
\eqref{eq:score-gaussian-tail} bounds their row norms at the scales
$N^{-1}\sqrt{n_S\psi_K/K}$ and
$N^{-1}\sqrt{(L+1)n_S\psi_K/K}$. Choosing $c_0$ sufficiently large and taking
a union bound over the $(L+1)p$ blocks, then applying
\eqref{eq:score-column-probability}, yields
\begin{equation}
\mathbb P(\mathcal E_{\rm score}^c)\le C(L+1)p^{1-c}+C(L+1)Kp\exp(-c_1n_T).
\label{eq:score-failure-probability}
\end{equation}
For the target score, only $\mathcal E_{\rm col}^0$ is needed. On this event
its conditional coordinate variances are at most $C/(n_TK^2)$, so
\eqref{eq:score-gaussian-tail} gives the row scale
$\sqrt{\psi_K/(Kn_T)}$. A union bound over the $p$ target blocks, together
with \eqref{eq:score-column-probability} for $\ell=0$, gives
$\mathbb P(\mathcal E_{\rm tscore}^c)\le Cp^{1-c}+CKp\exp(-c_1n_T)$ after
enlarging $c_0$ if necessary.
\end{proof}
\begin{lemma}[Restricted curvature]
\label{lemma_rscrsm}
Under Assumption~\ref{assumption_task_specific_design}, for each $0\leq\ell\leq L$, with probability at least $1-C_1K\exp(-C_2n_\ell)$, the following inequalities hold simultaneously for all $\Delta^\ell\in\mathbb R^{pK}$:
\begin{align*}
(\Delta^\ell)^\top\widehat\Sigma^\ell\Delta^\ell
&\ge
\frac{\alpha_\ell}{K}\lVert\Delta^\ell\rVert_{2,2}^2
-\frac{\varphi_\ell\log p}{K n_\ell}\lVert\Delta^\ell\rVert_{2,1}^2,\\
(\Delta^\ell)^\top\widehat\Sigma^\ell\Delta^\ell
&\le
\frac{\gamma_\ell}{K}\lVert\Delta^\ell\rVert_{2,2}^2
+\frac{\tau_\ell\log p}{K n_\ell}\lVert\Delta^\ell\rVert_{2,1}^2.
\end{align*}
Here $\alpha_\ell$ and $\gamma_\ell$ are curvature constants and $\varphi_\ell$
and $\tau_\ell$ the corresponding mixed-norm tolerances; none is a coefficient
vector. The constants satisfy $0<\alpha_{\min}:=\min_\ell\alpha_\ell\leq\max_\ell\gamma_\ell<\infty$ and $\max_\ell(\varphi_\ell\vee\tau_\ell)<\infty$.
\end{lemma}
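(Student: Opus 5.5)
The plan is to reduce the stacked-design inequality to the classical scalar restricted eigenvalue bound, applied task by task, and then recombine the tasks with a Minkowski-type inequality that turns per-task $\ell_1$ norms into the mixed norm $\lVert\cdot\rVert_{2,1}$.

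\textbf{Reduction to tasks.} Fix $\ell$ and write $\Delta^\ell\in\mathbb R^{pK}$ as columns $\delta_1,\ldots,\delta_K\in\mathbb R^p$, where $\delta_k$ collects the task-$k$ coordinates $(\Delta^\ell_{jk})_{j\in[p]}$. Since $\mathcal X^\ell=\operatorname{diag}(\mathbf X^{\ell1},\ldots,\mathbf X^{\ell K})$, we have
\[
(\Delta^\ell)^\top\widehat\Sigma^\ell\Delta^\ell=\frac1K\sum_{k=1}^K\delta_k^\top\widehat\Gamma^{\ell k}\delta_k,
\qquad
\widehat\Gamma^{\ell k}:=\frac{(\mathbf X^{\ell k})^\top\mathbf X^{\ell k}}{n_\ell}.
\]
Also $\sum_k\lVert\delta_k\rVert_2^2=\lVert\Delta^\ell\rVert_{2,2}^2$. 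No independence across tasks is needed, since we only take a union bound over $k$.

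\textbf{Scalar step.} For each $(\ell,k)$, the rows of $\mathbf X^{\ell k}$ are i.i.d.\ mean-zero sub-Gaussian with parameter $\kappa_X$ and covariance $\boldsymbol\Sigma^{\ell k}$, whose eigenvalues lie in $[C_X^{-1},C_X]$. I would invoke the standard restricted eigenvalue result for sub-Gaussian designs (Raskutti--Wainwright--Yu; Rudelson--Zhou; in the form of Loh and Wainwright's Lemma for sub-Gaussian rows). It states that, with probability at least $1-C_1\exp(-C_2n_\ell)$, the following hold simultaneously for all $v\in\mathbb R^p$:
\[
v^\top\widehat\Gamma^{\ell k}v\ \ge\ \alpha\lVert v\rVert_2^2-\varphi\frac{\log p}{n_\ell}\lVert v\rVert_1^2,
\qquad
v^\top\widehat\Gamma^{\ell k}v\ \le\ \gamma\lVert v\rVert_2^2+\tau\frac{\log p}{n_\ell}\lVert v\rVert_1^2.
\]
Here $\alpha=\lambda_{\min}(\boldsymbol\Sigma^{\ell k})/2$ and $\gamma=3\lambda_{\max}(\boldsymbol\Sigma^{\ell k})/2$, and $\varphi,\tau$ depend only on $\kappa_X$ and $C_X$. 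This is the step I expect to need the most care. The constants must be uniform over $(\ell,k)$, and the probability must be of the form $\exp(-C_2n_\ell)$ with no sample-size precondition. In Loh--Wainwright's proof, the exponent is $c\,n\min\{1,\lambda_{\min}^2/\kappa_X^4\}$, and this is of order $n$ under our bounded-eigenvalue assumption. Where their statement carries a restriction such as $n\gtrsim\log p$, I would note that the inequality becomes trivial otherwise, because the tolerance term then dominates. A union bound over $k\in[K]$ gives the probability $1-C_1K\exp(-C_2n_\ell)$.

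\textbf{Recombination.} Summing the scalar lower bounds over $k$ and dividing by $K$ gives
\[
(\Delta^\ell)^\top\widehat\Sigma^\ell\Delta^\ell\ \ge\ \frac{\alpha}{K}\lVert\Delta^\ell\rVert_{2,2}^2-\frac{\varphi\log p}{Kn_\ell}\sum_{k=1}^K\lVert\delta_k\rVert_1^2.
\]
It remains to show $\sum_k\lVert\delta_k\rVert_1^2\le\lVert\Delta^\ell\rVert_{2,1}^2$. For each row $j$, let $u_j\in\mathbb R^K$ have coordinates $|\Delta^\ell_{jk}|$. Then $\lVert\delta_k\rVert_1$ is the $k$th coordinate of $\sum_ju_j$, so
\[
\Bigl(\sum_k\lVert\delta_k\rVert_1^2\Bigr)^{1/2}=\Bigl\lVert\sum_ju_j\Bigr\rVert_2\le\sum_j\lVert u_j\rVert_2=\lVert\Delta^\ell\rVert_{2,1}
\]
by the triangle inequality in $\mathbb R^K$. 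The same argument gives the upper bound with $\gamma_\ell,\tau_\ell$. Finally, I would set $\alpha_\ell,\gamma_\ell,\varphi_\ell,\tau_\ell$ to the worst case over $k$, which yields the uniform constants claimed in Lemma~\ref{lemma_rscrsm}.
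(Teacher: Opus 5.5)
Your proposal is correct and matches the paper's proof essentially step for step. The paper also takes the per-task restricted lower and upper curvature bounds from Loh and Wainwright (via Lemma~1 of He et al.), applies a union bound over $k\in[K]$, sums and divides by $K$, and uses the same Minkowski/triangle-inequality argument to obtain $\sum_{k}\lVert\delta_k\rVert_1^2\le\lVert\Delta^\ell\rVert_{2,1}^2$.

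One small caveat concerns your side remark about small $n_\ell$. The lower inequality does become vacuous once $\varphi\log p/n_\ell\ge\alpha$. The upper inequality, however, is not automatically trivial in that regime: it still needs an event bounding $\max_j(\widehat\Gamma^{\ell k})_{jj}$. The paper does not treat this case separately either.
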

\begin{proof}
Fix $0\leq\ell\leq L$ and $k\in[K]$. Lemma~1 of
\citet{he2024transfusion}, based on Supplementary Lemmas~13 and~15 of
\citet{loh2012highdimensional}, gives, simultaneously for all
$u\in\mathbb R^p$,
\begin{align}
\frac{1}{n_\ell}\lVert X^{\ell k}u\rVert_2^2
&\ge c\lVert u\rVert_2^2-C\frac{\log p}{n_\ell}\lVert u\rVert_1^2,
\notag\\
\frac{1}{n_\ell}\lVert X^{\ell k}u\rVert_2^2
&\le C\lVert u\rVert_2^2+C\frac{\log p}{n_\ell}\lVert u\rVert_1^2.
\label{eq:taskwise-rsc-rsm}
\end{align}
The probability is at least $1-C_1\exp(-C_2n_\ell)$. The constants are
uniform because the sub-Gaussian parameters and covariance eigenvalues in
Assumption~\ref{assumption_task_specific_design} are uniformly bounded.
Write $\Delta^\ell=(u^{(1)\top},\ldots,u^{(K)\top})^\top$. A union bound over
$k\in[K]$ and summation of \eqref{eq:taskwise-rsc-rsm} give the Euclidean terms
in the lemma. For the tolerance terms, Minkowski's inequality yields
\begin{align}
\sum_{k=1}^K\lVert u^{(k)}\rVert_1^2
&=
\left\lVert\sum_{j=1}^p
\bigl(|u_j^{(1)}|,\ldots,|u_j^{(K)}|\bigr)^\top\right\rVert_2^2
\notag\\
&\leq
\left\{\sum_{j=1}^p
\left(\sum_{k=1}^K|u_j^{(k)}|^2\right)^{1/2}\right\}^2
\notag\\
&=\lVert\Delta^\ell\rVert_{2,1}^2.
\label{eq:taskwise-minkowski}
\end{align}
Dividing the summed inequalities by $K$ proves the claim.
\end{proof}
Let $\mathcal E_{\rm rsc}$ denote the event on which Lemma~\ref{lemma_rscrsm} holds for every $0\leq \ell\leq L$. Define
\[
\mathcal E_F:=\mathcal E_{\rm score}\cap\mathcal E_{\rm rsc},
\qquad
\mathcal E_D:=\mathcal E_F\cap\mathcal E_{\rm tscore}.
\]
By Lemmas~\ref{lemma_score_bounds} and~\ref{lemma_rscrsm},
\[
\mathbb P(\mathcal E_F)
\ge
1-C_1(L+1)Kp\exp(-C_2n_T)-C_3(L+1)p^{1-c}.
\]
When the debiasing tuning parameter is chosen as in Theorem~\ref{thm:debiased}, the same probability bound holds for $\mathcal E_D$, with possibly different constants.
Both first-stage tunings satisfy Lemma~\ref{lemma_score_bounds}. Indeed, relative to its respective lower bounds, the ratios for $\lambda_0$ are $1$ and $\sqrt{N/n_S}\ge1$, and the ratio for each $\lambda_\ell$ is $8\sqrt{(L+1)n_S/N}\ge8$ in both regimes. Also $a_\ell\ge8n_S/N$ in both regimes because $n_S/N\le1$.
The remainder of the fused proof is deterministic on $\mathcal E_F$, and that of the debiased proof is deterministic on $\mathcal E_D$. The event probabilities in Lemmas~\ref{lemma_score_bounds} and~\ref{lemma_rscrsm} yield the probability bounds in Theorems~\ref{thm:fused} and~\ref{thm:debiased}.
\subsection{Deterministic inequalities for the fused estimator}
\label{app:fused-deterministic}
On $\mathcal E_F$, optimality first gives a basic inequality comparing the
quadratic loss with the penalty contributions. The resulting cone inequality bounds
the total weighted error by the fused error on the target support, together
with the heterogeneity term $H_\lambda$. This restriction makes the curvature
bound useful even when the empirical design matrices are singular. Combining
the two inequalities then bounds the coefficient error of the fused average.
Subscripts $S$ and $S^c$ restrict a vector to the feature groups inside and
outside the target support, respectively.
\begin{lemma}[Fused basic inequality and cone]
\label{lemma_fused_basic_cone}
Assume $\mathcal E_{\rm score}$ holds and $a_\ell=\lambda_\ell/\lambda_0\ge 8n_S/N$ for every $\ell\in[L]$. Let $S=\mathcal S(\beta^0)$. Then the fused error satisfies
\begin{equation}
0
\ge
\frac12\widehat\Delta^\top\widehat\Sigma\widehat\Delta
-\frac32\lambda_0\lVert\widehat\Delta^F_S\rVert_{2,1}
+\frac12\lambda_0\lVert\widehat\Delta^F_{S^c}\rVert_{2,1}
-2H_\lambda,
\label{eq:fused-basic-lower}
\end{equation}
and
\begin{equation}
\sum_{\ell=1}^L\lambda_\ell\lVert\widehat\Delta^\ell\rVert_{2,1}
+
\lambda_0\lVert\widehat\Delta^0\rVert_{2,1}
\leq
8\left(\lambda_0\lVert\widehat\Delta^F_S\rVert_{2,1}+H_\lambda\right).
\label{eq:fused-cone}
\end{equation}
\end{lemma}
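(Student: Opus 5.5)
The plan is to work in the $\theta$-coordinates of Section~\ref{app:proof-preliminaries}, compare the penalized objective at $\widehat\theta$ and at the truth $\theta$, and only at the end rewrite the target-block error $\widehat\Delta^0$ in terms of the fused error $\widehat\Delta^F$. The identity that links them is
\[
\widehat\Delta^0=\widehat\Delta^F-\frac{n_S}{N}\sum_{\ell=1}^L\widehat\Delta^\ell .
\]
The weight condition $a_\ell\ge 8n_S/N$ lets each contrast error $(n_S/N)\lambda_0\lVert\widehat\Delta^\ell\rVert_{2,1}$ generated by this substitution be absorbed into $\lambda_\ell\lVert\widehat\Delta^\ell\rVert_{2,1}/8$.

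\textbf{Step 1 (basic inequality).} Because $\mathcal L$ is quadratic with Hessian $\widehat\Sigma$, we have $\mathcal L(\widehat\theta)-\mathcal L(\theta)=\tfrac12\widehat\Delta^\top\widehat\Sigma\widehat\Delta-\langle g,\widehat\Delta\rangle$. Here $-g=\nabla\mathcal L(\theta)$, so $g$ has contrast blocks $(NK)^{-1}(\mathcal X^\ell)^\top\epsilon^\ell$ and target block $(NK)^{-1}\sum_{\ell=0}^L(\mathcal X^\ell)^\top\epsilon^\ell$. On $\mathcal E_{\rm score}$, Hölder's inequality for the dual pair $\lVert\cdot\rVert_{2,1}$ and $\lVert\cdot\rVert_{2,\infty}$ gives
\[
|\langle g,\widehat\Delta\rangle|\le\sum_{\ell=1}^L\frac{\lambda_\ell}{2}\lVert\widehat\Delta^\ell\rVert_{2,1}+\frac{\lambda_0}{2}\lVert\widehat\Delta^0\rVert_{2,1}.
\]
For the penalties we use two standard bounds:
\[
\lVert\theta^\ell+\widehat\Delta^\ell\rVert_{2,1}-\lVert\theta^\ell\rVert_{2,1}\ge\lVert\widehat\Delta^\ell\rVert_{2,1}-2h_\ell^\star,
\qquad
\lVert\theta^0+\widehat\Delta^0\rVert_{2,1}-\lVert\theta^0\rVert_{2,1}\ge\lVert\widehat\Delta^0_{S^c}\rVert_{2,1}-\lVert\widehat\Delta^0_S\rVert_{2,1},
\]
the latter by row decomposability, since $\theta^0$ is supported on $S$. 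Optimality of $\widehat\theta$ then yields
\[
0\ge\tfrac12\widehat\Delta^\top\widehat\Sigma\widehat\Delta+\tfrac12\sum_{\ell}\lambda_\ell\lVert\widehat\Delta^\ell\rVert_{2,1}+\tfrac12\lambda_0\lVert\widehat\Delta^0_{S^c}\rVert_{2,1}-\tfrac32\lambda_0\lVert\widehat\Delta^0_S\rVert_{2,1}-2H_\lambda .
\]

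\textbf{Step 2 (pass to $\widehat\Delta^F$).} By the triangle inequality,
\[
\lVert\widehat\Delta^0_S\rVert_{2,1}\le\lVert\widehat\Delta^F_S\rVert_{2,1}+\tfrac{n_S}{N}\sum_\ell\lVert\widehat\Delta^\ell_S\rVert_{2,1},
\qquad
\lVert\widehat\Delta^0_{S^c}\rVert_{2,1}\ge\lVert\widehat\Delta^F_{S^c}\rVert_{2,1}-\tfrac{n_S}{N}\sum_\ell\lVert\widehat\Delta^\ell_{S^c}\rVert_{2,1}.
\]
Hence the substitution costs at most $\tfrac32\lambda_0\tfrac{n_S}{N}\sum_\ell\lVert\widehat\Delta^\ell\rVert_{2,1}$, which is at most $\tfrac{3}{16}\sum_\ell\lambda_\ell\lVert\widehat\Delta^\ell\rVert_{2,1}$ by $a_\ell\ge8n_S/N$. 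This leaves
\[
0\ge\tfrac12\widehat\Delta^\top\widehat\Sigma\widehat\Delta+\tfrac{5}{16}A+\tfrac12 B-\tfrac32 F-2H_\lambda,
\]
where $A=\sum_\ell\lambda_\ell\lVert\widehat\Delta^\ell\rVert_{2,1}$, $B=\lambda_0\lVert\widehat\Delta^F_{S^c}\rVert_{2,1}$ and $F=\lambda_0\lVert\widehat\Delta^F_S\rVert_{2,1}$. Dropping $\tfrac5{16}A\ge0$ gives \eqref{eq:fused-basic-lower}.

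\textbf{Step 3 (cone).} Again by the identity and $a_\ell\ge8n_S/N$,
\[
\lambda_0\lVert\widehat\Delta^0\rVert_{2,1}\le F+B+\tfrac18A,
\]
so the left side of \eqref{eq:fused-cone} is at most $F+B+\tfrac98A$. Dropping the quadratic term in Step 2 gives the linear constraint $\tfrac5{16}A+\tfrac12B\le\tfrac32F+2H_\lambda$ with $A,B\ge0$. Maximizing $B+\tfrac98A$ under this constraint, the better exchange rate is $\tfrac{9/8}{5/16}=3.6$ (versus $2$ for $B$). This gives
\[
F+B+\tfrac98A\le F+3.6\bigl(\tfrac32F+2H_\lambda\bigr)=6.4F+7.2H_\lambda\le8(F+H_\lambda),
\]
which is \eqref{eq:fused-cone}.

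The main obstacle is the bookkeeping in Steps 2--3: the constants must close, with enough of $\tfrac12 A$ surviving the substitution to absorb the $\tfrac18A$ reappearing in Step 3. A naive term-by-term combination gives $9.4$ rather than $8$; the linear-combination argument above is what secures the constant.
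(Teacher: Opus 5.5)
Your proof is correct, and Steps 1 and 2 match the paper's argument. The paper is a little finer in Step 2: it keeps the on- and off-support costs separate, with coefficients $\lambda_\ell/2-3(n_S/N)\lambda_0/2$ and $\lambda_\ell/2-(n_S/N)\lambda_0/2$, whereas you bound both together by $\tfrac{3}{16}A$. Either is enough.

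The two routes differ in how they reach the cone inequality. The paper proves \eqref{eq:fused-cone} \emph{before} passing to $\widehat\Delta^F$. It drops the quadratic form from the target-block inequality to get $P_\Delta\le4\lambda_0\lVert\widehat\Delta^0_S\rVert_{2,1}+4H_\lambda$, where $P_\Delta$ is exactly the left side of \eqref{eq:fused-cone}. It then converts only the on-support term, $\lambda_0\lVert\widehat\Delta^0_S\rVert_{2,1}\le F+P_\Delta/8$. The self-absorption $P_\Delta\le4F+P_\Delta/2+4H_\lambda$ then gives the constant $8$ in one line.

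You convert first, so you have to re-bound all of $\lambda_0\lVert\widehat\Delta^0\rVert_{2,1}$. That brings in the off-support term $B$, which is why you must keep the $\tfrac{5}{16}A$ slack and use the weighted (exchange-rate) argument to close the constants. Your route is more delicate bookkeeping, but it yields slightly sharper constants, $6.4F+7.2H_\lambda$. The paper's ordering avoids controlling $B$ altogether.
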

\begin{proof}
Set $\eta_S:=n_S/N$ and
$P_\Delta:=\sum_{\ell=1}^L\lambda_\ell
\lVert\widehat\Delta^\ell\rVert_{2,1}
+\lambda_0\lVert\widehat\Delta^0\rVert_{2,1}$.
By optimality of $\widehat\theta$, expansion of the quadratic loss at $\theta$,
and the definition of $\mathcal E_{\rm score}$,
\begin{align}
0
&\ge
\frac12\widehat\Delta^\top\widehat\Sigma\widehat\Delta
-\frac12\sum_{\ell=1}^L\lambda_\ell\lVert\widehat\Delta^\ell\rVert_{2,1}
-\frac12\lambda_0\lVert\widehat\Delta^0\rVert_{2,1}
\notag\\
&\quad+
\sum_{\ell=1}^L\lambda_\ell
\{\lVert\theta^\ell+\widehat\Delta^\ell\rVert_{2,1}-\lVert\theta^\ell\rVert_{2,1}\}
+
\lambda_0\{\lVert\theta^0+\widehat\Delta^0\rVert_{2,1}-\lVert\theta^0\rVert_{2,1}\}.
\label{eq:fused-basic-score}
\end{align}
For each source contrast, the triangle inequality and
$\lVert\theta^\ell\rVert_{2,1}=h_\ell^\star$ give the first inequality below.
Since $\theta^0_{S^c}=0$, decomposability gives the second:
\begin{align}
\lVert\theta^\ell+\widehat\Delta^\ell\rVert_{2,1}
-\lVert\theta^\ell\rVert_{2,1}
&\geq \lVert\widehat\Delta^\ell\rVert_{2,1}-2h_\ell^\star,
\qquad \ell\in[L],
\notag\\
\lVert\theta^0+\widehat\Delta^0\rVert_{2,1}-\lVert\theta^0\rVert_{2,1}
&\geq
-\lVert\widehat\Delta^0_S\rVert_{2,1}
+\lVert\widehat\Delta^0_{S^c}\rVert_{2,1}.
\label{eq:fused-penalty-lower-bounds}
\end{align}
Substitution into \eqref{eq:fused-basic-score} yields
\begin{align}
0
&\geq
\frac12\widehat\Delta^\top\widehat\Sigma\widehat\Delta
+
\frac12\sum_{\ell=1}^L\lambda_\ell\lVert\widehat\Delta^\ell\rVert_{2,1}
-\frac32\lambda_0\lVert\widehat\Delta^0_S\rVert_{2,1}
+\frac12\lambda_0\lVert\widehat\Delta^0_{S^c}\rVert_{2,1}
-2H_\lambda.
\label{eq:fused-basic-target}
\end{align}
Because $\widehat\Sigma$ is positive semidefinite, dropping its quadratic form
from \eqref{eq:fused-basic-target}, multiplying by two, and then adding
$\lambda_0\lVert\widehat\Delta^0_S\rVert_{2,1}$ to both sides give
\begin{equation}
P_\Delta
\leq
4\lambda_0\lVert\widehat\Delta^0_S\rVert_{2,1}+4H_\lambda.
\label{eq:fused-total-penalty}
\end{equation}
The identity
$\widehat\Delta^0=\widehat\Delta^F-\eta_S
\sum_{\ell=1}^L\widehat\Delta^\ell$ and
$\lambda_\ell\geq8\eta_S\lambda_0$ imply
\begin{align}
\lambda_0\lVert\widehat\Delta^0_S\rVert_{2,1}
&\leq
\lambda_0\lVert\widehat\Delta^F_S\rVert_{2,1}
+\eta_S\lambda_0\sum_{\ell=1}^L
\lVert\widehat\Delta^\ell_S\rVert_{2,1}
\notag\\
&\leq
\lambda_0\lVert\widehat\Delta^F_S\rVert_{2,1}
+\frac18P_\Delta.
\label{eq:fused-support-transfer}
\end{align}
Combining \eqref{eq:fused-total-penalty} and
\eqref{eq:fused-support-transfer} gives
\newpage
\begin{equation}
P_\Delta
\leq
4\lambda_0\lVert\widehat\Delta^F_S\rVert_{2,1}
+P_\Delta/2+4H_\lambda.
\label{eq:fused-cone-absorption}
\end{equation}
Rearranging \eqref{eq:fused-cone-absorption} proves
\eqref{eq:fused-cone}.
It remains to express \eqref{eq:fused-basic-target} in terms of
$\widehat\Delta^F$. The same identity and the triangle inequality give
\begin{align}
&\frac12\sum_{\ell=1}^L
\lambda_\ell\lVert\widehat\Delta^\ell\rVert_{2,1}
-\frac32\lambda_0\lVert\widehat\Delta^0_S\rVert_{2,1}
+\frac12\lambda_0\lVert\widehat\Delta^0_{S^c}\rVert_{2,1}
\notag\\
&\quad\geq
-\frac32\lambda_0\lVert\widehat\Delta^F_S\rVert_{2,1}
+\frac12\lambda_0\lVert\widehat\Delta^F_{S^c}\rVert_{2,1}
\notag\\
&\qquad+
\sum_{\ell=1}^L
\left\{
\left(\frac{\lambda_\ell}{2}-\frac{3\eta_S\lambda_0}{2}\right)
\lVert\widehat\Delta^\ell_S\rVert_{2,1}
+
\left(\frac{\lambda_\ell}{2}-\frac{\eta_S\lambda_0}{2}\right)
\lVert\widehat\Delta^\ell_{S^c}\rVert_{2,1}
\right\}
\notag\\
&\quad\geq
-\frac32\lambda_0\lVert\widehat\Delta^F_S\rVert_{2,1}
+\frac12\lambda_0\lVert\widehat\Delta^F_{S^c}\rVert_{2,1},
\label{eq:fused-penalty-transfer}
\end{align}
where the last inequality follows from
$\lambda_\ell\geq8\eta_S\lambda_0$. Substitution of
\eqref{eq:fused-penalty-transfer} into \eqref{eq:fused-basic-target} proves
\eqref{eq:fused-basic-lower}.
\end{proof}
The role of \eqref{eq:fused-basic-lower} is especially clear when
$H_\lambda=0$. Dropping its nonnegative quadratic term gives the familiar
cone restriction
$\lVert\widehat\Delta^F_{S^c}\rVert_{2,1}\leq
3\lVert\widehat\Delta^F_S\rVert_{2,1}$: the error outside the target support
cannot dominate the error on that support. For nonzero $H_\lambda$, the same
argument allows an additional $4H_\lambda/\lambda_0$ on the right-hand side.

The next lemma converts curvature for the individual domain losses into
curvature for their weighted coefficient average. To control the accumulated
mixed-norm tolerances, define
\begin{equation}
\underline\lambda:=\min_{\ell\in[L]}\lambda_\ell,\qquad D_\lambda:=\underline\lambda^2\wedge\frac{\lambda_0^2}{L+1}.
\label{eq:fused-penalty-curvature-scale}
\end{equation}
The factor $L+1$ accounts for the target block appearing in every domain loss.
The choice of $D_\lambda$ lets the squared tolerances be bounded by the square
of the weighted penalty error in \eqref{eq:fused-cone}.
\begin{lemma}[Fused average curvature]
\label{lemma_fused_average_curvature}
On $\mathcal E_{\rm rsc}$, if \eqref{eq:fused-cone} holds, then
\[
\widehat\Delta^\top\widehat\Sigma\widehat\Delta
\ge
\frac{\alpha_{\min}(1-u_m)}{K}
\lVert\widehat\Delta^F\rVert_{2,2}^2
-
\frac{v_m}{K}H_\lambda,
\]
where
\[
u_m=C_u
\frac{\lambda_0^2}{D_\lambda}
\frac{s\log p}{N},
\qquad
v_m=C_v
\frac{1}{D_\lambda}
\frac{\log p}{N}H_\lambda.
\]
Here $C_u,C_v>0$ are fixed constants depending only on the fixed design and
curvature constants.
For the fused-stage tunings in Theorems~\ref{thm:fused} and~\ref{thm:debiased}, $u_m=o(1)$. Under the corresponding theorem assumptions, there is a fixed $M_v<\infty$ such that $v_m/K\le M_v$ for all sufficiently large $m$.
\end{lemma}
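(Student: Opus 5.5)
The argument applies Lemma~\ref{lemma_rscrsm} domain by domain inside the Hessian decomposition and then passes from the domain errors to the fused error by convexity. Write $\eta_S=n_S/N$ and $\eta_T=n_T/N$. The Hessian identity gives
\[
\widehat\Delta^\top\widehat\Sigma\widehat\Delta=\sum_{\ell=1}^L\eta_S(\widehat\Delta^\ell+\widehat\Delta^0)^\top\widehat\Sigma^\ell(\widehat\Delta^\ell+\widehat\Delta^0)+\eta_T(\widehat\Delta^0)^\top\widehat\Sigma^0\widehat\Delta^0 .
\]
We apply the lower curvature inequality of Lemma~\ref{lemma_rscrsm} to each summand, with $\alpha_\ell\ge\alpha_{\min}$. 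The weights $\eta_S,\ldots,\eta_S,\eta_T$ sum to one. By convexity of $\lVert\cdot\rVert_{2,2}^2$ (Jensen),
\[
\sum_\ell\eta_S\lVert\widehat\Delta^\ell+\widehat\Delta^0\rVert_{2,2}^2+\eta_T\lVert\widehat\Delta^0\rVert_{2,2}^2\ge\lVert\widehat\Delta^F\rVert_{2,2}^2,
\]
because the weighted average of the domain errors is exactly $\widehat\Delta^F$. This produces the main term $\alpha_{\min}K^{-1}\lVert\widehat\Delta^F\rVert_{2,2}^2$. What remains is the tolerance
\[
T:=\frac{C\log p}{KN}\Bigl\{\sum_{\ell=1}^L\lVert\widehat\Delta^\ell+\widehat\Delta^0\rVert_{2,1}^2+\lVert\widehat\Delta^0\rVert_{2,1}^2\Bigr\},
\]
where we used $\eta_\ell/n_\ell=1/N$.

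Next we bound $T$ through the cone \eqref{eq:fused-cone}. By $(a+b)^2\le2a^2+2b^2$, the braced sum is at most
\[
2\sum_\ell\lVert\widehat\Delta^\ell\rVert_{2,1}^2+(2L+1)\lVert\widehat\Delta^0\rVert_{2,1}^2 .
\]
Using $\lambda_\ell\ge\underline\lambda$ and the definition of $D_\lambda$, this is at most $C\,D_\lambda^{-1}P_\Delta^2$, where
\[
P_\Delta=\sum_\ell\lambda_\ell\lVert\widehat\Delta^\ell\rVert_{2,1}+\lambda_0\lVert\widehat\Delta^0\rVert_{2,1}.
\]
The source part uses $\sum_\ell x_\ell^2\le(\sum_\ell x_\ell)^2$, and the target part uses the factor $L+1$ built into $D_\lambda$. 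The cone gives $P_\Delta\le8(\lambda_0\lVert\widehat\Delta^F_S\rVert_{2,1}+H_\lambda)$. Cauchy--Schwarz over the $s$ support rows gives $\lVert\widehat\Delta^F_S\rVert_{2,1}\le\sqrt s\lVert\widehat\Delta^F\rVert_{2,2}$. Hence
\[
T\le\frac{C\log p}{KND_\lambda}\bigl(\lambda_0^2 s\lVert\widehat\Delta^F\rVert_{2,2}^2+H_\lambda^2\bigr).
\]
The first piece equals $u_m\alpha_{\min}K^{-1}\lVert\widehat\Delta^F\rVert_{2,2}^2$ after choosing $C_u$, and the second equals $(v_m/K)H_\lambda$ after choosing $C_v$. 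This gives the displayed inequality.

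Finally we check the two claims about the tunings. With $\lambda_\ell=8\sqrt{n_S/N}\,\lambda_0$, we get $\underline\lambda^2=64(n_S/N)\lambda_0^2$. Since $N\le(L+1)n_S$, this is at least $64\lambda_0^2/(L+1)$, so $D_\lambda\asymp\lambda_0^2/(L+1)$. Then
\[
u_m\asymp\frac{(L+1)s\log p}{N}\le\frac{2s\log p}{n_S}\cdot\frac{(L+1)n_S}{2N}\lesssim\frac{s\log p}{n_S}=o(1),
\]
because $N\ge Ln_S$ and $L+1\le 2L$. In the alternative regime, $\lambda_\ell=8(n_S/N)\lambda_0$, so $\underline\lambda^2=64(n_S/N)^2\lambda_0^2$. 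This is comparable to $\lambda_0^2/(L+1)^2$ when $N\asymp Ln_S$. Then $u_m\lesssim L^2s\log p/N\lesssim Ls\log p/n_S$, which is $o(1)$ by the extra assumption of Theorem~\ref{thm:debiased}.

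For $v_m/K$, note that
\[
H_\lambda=\sum_\ell\lambda_\ell h_\ell^\star=\frac{N}{n_S}\Bigl(\min_\ell\lambda_\ell\Bigr)\bar h
\]
for equal $\lambda_\ell$. Therefore
\[
\frac{v_m}{K}\asymp\frac{\log p\,\underline\lambda\,\bar h}{KD_\lambda n_S}.
\]
In the fused tuning this is
\[
\frac{(L+1)\log p\,\bar h}{K\lambda_0\sqrt{n_SN}}\asymp\frac{(L+1)\bar h\log p\sqrt N}{\sqrt{(L+1)K\psi_K}\,n_S}\lesssim\frac{L\bar h\log p}{\sqrt{Kn_S}},
\]
which is bounded by $M_F$. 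The analogous computation in the debiased tuning uses $\bar h\log p/\sqrt{Kn_T}=o(1)$.

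The main obstacle is this bookkeeping of $D_\lambda$ across the two regimes, especially the $L$-dependence in the second regime, rather than the curvature argument itself. If the exponents of $L$ do not close exactly, I would fall back on the conditions $N\asymp Ln_S$ and $Ls\log p/n_S=o(1)$, which are the natural tools for absorbing the extra factors.
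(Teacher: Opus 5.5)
Your derivation of the curvature inequality is the paper's argument: domainwise restricted curvature, weighted Jensen to pass to $\widehat\Delta^F$, absorbing the tolerances into $D_\lambda^{-1}P_\Delta^2$, then the cone \eqref{eq:fused-cone} and Cauchy--Schwarz on $S$. Your checks of $u_m=o(1)$ in both regimes, and of $v_m/K\lesssim M_F$ under the first tuning, are also correct.

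The gap is in the last step. For the second tuning, $\lambda_\ell=8(n_S/N)\lambda_0$, you say the bound on $v_m/K$ follows from $\bar h\log p/\sqrt{Kn_T}=o(1)$, and that does not close. Here $H_\lambda=8\lambda_0\bar h$ and $D_\lambda=\lambda_0^2\{64(n_S/N)^2\wedge(L+1)^{-1}\}$. Once $L$ exceeds a fixed constant (about $65$), the first term binds, so $D_\lambda\asymp\lambda_0^2/L^2$ and
\[
\frac{v_m}{K}\asymp\frac{L^2\bar h\log p}{KN\lambda_0}\asymp\frac{L\bar h\log p}{\sqrt{Kn_S\psi_K}}.
\]
The hypothesis you cite only gives $v_m/K\lesssim L\sqrt{n_T/n_S}\cdot o(1)$. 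That is unbounded when $L\to\infty$ with $n_T\asymp n_S$. Your fallback $Ls\log p/n_S=o(1)$ constrains $s$, not $\bar h$, so it cannot absorb this factor of $L$ either.

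The missing ingredient is that Theorem~\ref{thm:debiased} inherits the envelope $L\bar h\log p/\sqrt{Kn_S}\le M_F$ from Theorem~\ref{thm:fused}. The display above is then at most a constant times $M_F$, which is exactly how the paper closes this case in \eqref{eq:fused-curvature-second-uv}. With that substitution your proof is complete and coincides with the paper's.
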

Here $u_m$ measures the fraction of the quadratic curvature lost to the
sparsity tolerance. The condition $u_m=o(1)$ therefore leaves a positive
curvature coefficient for all sufficiently large $m$. The term
$v_mH_\lambda/K$ is an additive remainder from source heterogeneity. Bounding
$v_m/K$ keeps this remainder of the same order as the heterogeneity term in
\eqref{eq:fused-basic-lower}.
\begin{proof}
Since $Ln_S+n_T=N$, weighted Jensen's inequality gives
\begin{align}
\lVert\widehat\Delta^F\rVert_{2,2}^2
&=\left\lVert\frac{n_S}{N}\sum_{\ell=1}^L(\widehat\Delta^\ell+\widehat\Delta^0)+\frac{n_T}{N}\widehat\Delta^0\right\rVert_{2,2}^2\notag\\
&\le\frac{n_S}{N}\sum_{\ell=1}^L\lVert\widehat\Delta^\ell+\widehat\Delta^0\rVert_{2,2}^2+\frac{n_T}{N}\lVert\widehat\Delta^0\rVert_{2,2}^2.
\label{eq:fused-weighted-jensen}
\end{align}
Lemma~\ref{lemma_rscrsm} and \eqref{eq:fused-weighted-jensen} imply
\begin{equation}
\widehat\Delta^\top\widehat\Sigma\widehat\Delta\ge\frac{\alpha_{\min}}{K}\lVert\widehat\Delta^F\rVert_{2,2}^2-G(\widehat\Delta),
\label{eq:fused-curvature-tolerance}
\end{equation}
where the curvature tolerance satisfies
\begin{align}
G(\widehat\Delta)
&\le C\frac{\log p}{KN}\left\{\sum_{\ell=1}^L\lVert\widehat\Delta^\ell\rVert_{2,1}^2+(L+1)\lVert\widehat\Delta^0\rVert_{2,1}^2\right\}\notag\\
&\le C\frac{\log p}{KND_\lambda}\left(\sum_{\ell=1}^L\lambda_\ell\lVert\widehat\Delta^\ell\rVert_{2,1}+\lambda_0\lVert\widehat\Delta^0\rVert_{2,1}\right)^2.
\label{eq:fused-tolerance-penalty}
\end{align}
The first inequality follows from the triangle inequality, and the second
from the definition of $D_\lambda$. Applying \eqref{eq:fused-cone} and
$\lVert\widehat\Delta^F_S\rVert_{2,1}\le\sqrt{s}\lVert\widehat\Delta^F\rVert_{2,2}$ gives
\begin{equation}
G(\widehat\Delta)\le C\frac{\log p}{KND_\lambda}\left(\lambda_0^2s\lVert\widehat\Delta^F\rVert_{2,2}^2+H_\lambda^2\right).
\label{eq:fused-tolerance-cone}
\end{equation}
Substituting \eqref{eq:fused-tolerance-cone} into
\eqref{eq:fused-curvature-tolerance} proves the curvature inequality with
the stated $u_m$ and $v_m$.

For the first fused-stage tuning, $\lambda_\ell=8\sqrt{n_S/N}\lambda_0$
and $N\le(L+1)n_S$, so $D_\lambda\asymp\lambda_0^2/(L+1)$. Hence
\begin{equation}
u_m\lesssim\frac{(L+1)s\log p}{N}\lesssim\frac{s\log p}{n_S}=o(1).
\label{eq:fused-curvature-first-u}
\end{equation}
Also, $H_\lambda\asymp\bar h\sqrt{(L+1)\psi_K/(KN)}$, giving
\begin{equation}
\frac{v_m}{K}\lesssim\frac{L\bar h\log p}{\sqrt{K n_S\psi_K}}\lesssim M_F.
\label{eq:fused-curvature-first-v}
\end{equation}
For the second fused-stage tuning in Theorem~\ref{thm:debiased},
$\lambda_\ell=8(n_S/N)\lambda_0$, and
\begin{equation}
D_\lambda\asymp\lambda_0^2\left\{\left(\frac{n_S}{N}\right)^2\wedge\frac{1}{L+1}\right\}.
\label{eq:fused-curvature-second-D}
\end{equation}
Together with $H_\lambda\asymp\lambda_0\bar h$, this yields
\begin{equation}
u_m\lesssim\frac{Ls\log p}{n_S}=o(1),\qquad\frac{v_m}{K}\lesssim\frac{L\bar h\log p}{\sqrt{K n_S\psi_K}}\lesssim M_F.
\label{eq:fused-curvature-second-uv}
\end{equation}
Since $\psi_K\ge1$, the theorem assumptions therefore give
$v_m/K\le M_v:=C_M M_F$ eventually in both regimes, where $C_M$ depends
only on the fixed design, curvature, and tuning constants.
\end{proof}
Combining Lemmas~\ref{lemma_fused_basic_cone}
and~\ref{lemma_fused_average_curvature} gives the two norms needed below.
The $\ell_{2,2}$ bound controls the coefficient error reported in
Theorem~\ref{thm:fused}. The $\ell_{2,1}$ bound will also be needed to control
the target curvature tolerance in the correction proof.
\begin{lemma}[Fused auxiliary norms]
\label{lemma_fused_auxiliary_norms}
Suppose that $\mathcal E_{\rm score}\cap\mathcal E_{\rm rsc}$ holds,
$a_\ell\ge 8n_S/N$ for every $\ell\in[L]$, and $u_m<1$. Then
\begin{align*}
\frac1{\sqrt K}\lVert\widehat\Delta^F\rVert_{2,2}
&\le
\frac{3}{\alpha_{\min}(1-u_m)}\sqrt{Ks}\lambda_0
+
\left\{
\frac{4+v_m/K}{\alpha_{\min}(1-u_m)}
\right\}^{1/2}\sqrt{H_\lambda},\\
\frac1{\sqrt K}\lVert\widehat\Delta^F\rVert_{2,1}
&\le
\frac{12}{\alpha_{\min}(1-u_m)}\sqrt K\,s\lambda_0
+4\left\{
\frac{s(4+v_m/K)}{\alpha_{\min}(1-u_m)}
\right\}^{1/2}\sqrt{H_\lambda}
+\frac{4H_\lambda}{\sqrt K\lambda_0}.
\end{align*}
In particular, if $u_m\le1/2$ and $v_m/K\le M_v$ for some fixed
$M_v<\infty$, then
\begin{align*}
\frac1{\sqrt K}\lVert\widehat\Delta^F\rVert_{2,2}
&\le
C\left(\sqrt{Ks}\lambda_0+\sqrt{H_\lambda}\right),\\
\frac1{\sqrt K}\lVert\widehat\Delta^F\rVert_{2,1}
&\le
C\left(
\sqrt K\,s\lambda_0
+\sqrt{s}\sqrt{H_\lambda}
+\frac{H_\lambda}{\sqrt K\lambda_0}
\right),
\end{align*}
where $C$ may depend on $M_v$ and the fixed curvature constants.
\end{lemma}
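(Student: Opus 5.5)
We will combine the basic inequality \eqref{eq:fused-basic-lower} of Lemma~\ref{lemma_fused_basic_cone} with the curvature bound of Lemma~\ref{lemma_fused_average_curvature}. Write $x:=\lVert\widehat\Delta^F\rVert_{2,2}$ and $a:=\alpha_{\min}(1-u_m)$.

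\textbf{Step 1: a quadratic inequality in $x$.} On $\mathcal E_{\rm score}$, with $a_\ell\ge 8n_S/N$, \eqref{eq:fused-basic-lower} holds, and so does the cone inequality \eqref{eq:fused-cone}. Hence Lemma~\ref{lemma_fused_average_curvature} applies on $\mathcal E_{\rm rsc}$. Substituting its lower bound for $\widehat\Delta^\top\widehat\Sigma\widehat\Delta$ into \eqref{eq:fused-basic-lower} and dropping the nonnegative term $\frac12\lambda_0\lVert\widehat\Delta^F_{S^c}\rVert_{2,1}$ gives
\[
\frac{a}{2K}x^2\le\frac32\lambda_0\lVert\widehat\Delta^F_S\rVert_{2,1}+\Bigl(2+\frac{v_m}{2K}\Bigr)H_\lambda .
\]
Since $|S|\le s$, we have $\lVert\widehat\Delta^F_S\rVert_{2,1}\le\sqrt s\,x$. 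Multiplying through by $2K/a$ yields
\[
x^2\le\frac{3K\sqrt s\lambda_0}{a}\,x+\frac{K(4+v_m/K)}{a}H_\lambda .
\]

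\textbf{Step 2: the $\ell_{2,2}$ bound.} Apply the elementary fact that $x^2\le bx+c$ implies $x\le b+\sqrt c$. Dividing by $\sqrt K$ gives exactly the first displayed bound,
\[
\frac{x}{\sqrt K}\le\frac{3\sqrt{Ks}\lambda_0}{a}+\Bigl\{\frac{4+v_m/K}{a}\Bigr\}^{1/2}\sqrt{H_\lambda}.
\]

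\textbf{Step 3: the $\ell_{2,1}$ bound.} Return to \eqref{eq:fused-basic-lower}, now dropping the quadratic term instead:
\[
\lVert\widehat\Delta^F_{S^c}\rVert_{2,1}\le 3\lVert\widehat\Delta^F_S\rVert_{2,1}+\frac{4H_\lambda}{\lambda_0}.
\]
Therefore
\[
\lVert\widehat\Delta^F\rVert_{2,1}\le 4\sqrt s\,x+\frac{4H_\lambda}{\lambda_0}.
\]
Inserting the bound from Step 2 and dividing by $\sqrt K$ gives the constants $12$, $4\{s(4+v_m/K)/a\}^{1/2}$ and $4H_\lambda/(\sqrt K\lambda_0)$, as stated.

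\textbf{Step 4: the simplified forms.} If $u_m\le 1/2$, then $a\ge\alpha_{\min}/2$. If also $v_m/K\le M_v$, all prefactors are bounded by constants depending on $\alpha_{\min}$ and $M_v$, which gives the simplified forms.

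The only delicate point is bookkeeping in Step 1. The curvature remainder $-v_mH_\lambda/K$, multiplied by $\frac12$, must be absorbed into the $H_\lambda$ coefficient. This produces $4+v_m/K$ after the rescaling by $2K/a$, which requires $u_m<1$ so that $a>0$.
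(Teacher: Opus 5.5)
Your proposal is correct and follows the paper's proof essentially step for step. You substitute the average-curvature bound of Lemma~\ref{lemma_fused_average_curvature} into \eqref{eq:fused-basic-lower} and solve the resulting quadratic inequality in $\lVert\widehat\Delta^F\rVert_{2,2}$ via $x^2\le bx+c\Rightarrow x\le b+\sqrt{c}$ (valid since $b,c\ge0$, which gives exactly the stated constants), and then drop the positive semidefinite term from \eqref{eq:fused-basic-lower} to get $\lVert\widehat\Delta^F\rVert_{2,1}\le 4\sqrt{s}\lVert\widehat\Delta^F\rVert_{2,2}+4H_\lambda/\lambda_0$.
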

\begin{proof}
Lemma~\ref{lemma_fused_basic_cone} gives \eqref{eq:fused-cone} and
\eqref{eq:fused-basic-lower}. Combining \eqref{eq:fused-basic-lower} with
Lemma~\ref{lemma_fused_average_curvature}, dropping the nonnegative
off-support term, and using
$\lVert\widehat\Delta^F_S\rVert_{2,1}\le
\sqrt{s}\lVert\widehat\Delta^F\rVert_{2,2}$ gives
\[
0\ge
\frac{\alpha_{\min}(1-u_m)}{2K}
\lVert\widehat\Delta^F\rVert_{2,2}^2
-\frac32\lambda_0\sqrt{s}\lVert\widehat\Delta^F\rVert_{2,2}
-\left(2+\frac{v_m}{2K}\right)H_\lambda.
\]
Because $u_m<1$, solving this quadratic inequality gives the first bound.
For the mixed norm, \eqref{eq:fused-basic-lower} and the positive
semidefiniteness of $\widehat\Sigma$ imply
\[
\lVert\widehat\Delta^F\rVert_{2,1}
\le
4\sqrt{s}\lVert\widehat\Delta^F\rVert_{2,2}
+\frac{4H_\lambda}{\lambda_0}.
\]
Substituting the first bound proves the second. The simplified bounds follow
by applying $u_m\le1/2$ and $v_m/K\le M_v$.
\end{proof}
\subsection{Proof of Theorem~\ref{thm:fused}}
\label{app:proof-fused}
\begin{proof}
Work on $\mathcal E_F$. We combine the deterministic fused bound with the
tuning in Theorem~\ref{thm:fused} and then account for the deterministic bias
$\epsilon_D$.
Under the assumptions of Theorem~\ref{thm:fused},
Lemma~\ref{lemma_fused_average_curvature} gives $u_m\le1/2$ and
$v_m/K\le M_v$ eventually along the sequence, for some fixed $M_v<\infty$.
Thus, for all sufficiently large $m$, the simplified conclusion of
Lemma~\ref{lemma_fused_auxiliary_norms} gives
\begin{equation}
\frac1{\sqrt K}\lVert\widehat\Delta^F\rVert_{2,2}
\le
C\left(\sqrt{Ks}\lambda_0+\sqrt{H_\lambda}\right).
\label{eq:fused-proof-auxiliary}
\end{equation}
The prescribed tuning gives
\begin{align}
\sqrt{Ks}\lambda_0
&\lesssim
\sqrt{\frac{s\psi_K}{N}},
\notag\\
H_\lambda
&=\sum_{\ell=1}^L\lambda_\ell h_\ell^\star
\lesssim
\bar h\sqrt{\frac{\psi_K}{K n_S}}.
\label{eq:fused-proof-tuning}
\end{align}
Therefore, \eqref{eq:fused-proof-auxiliary} and
\eqref{eq:fused-proof-tuning} imply
\begin{equation}
\frac1{\sqrt K}\lVert\widehat w^F-w^F\rVert_{2,2}
\le
C\left(
\sqrt{\frac{s\psi_K}{N}}
+
\sqrt{\frac{\bar h}{\sqrt K}\sqrt{\frac{\psi_K}{n_S}}}
\right).
\label{eq:fused-proof-estimation}
\end{equation}
Finally,
\begin{equation}
\lVert\widehat w^F-\beta^0\rVert_{2,2}
\le
\lVert\widehat w^F-w^F\rVert_{2,2}
+
\lVert w^F-\beta^0\rVert_{2,2}
\le
\lVert\widehat\Delta^F\rVert_{2,2}+\epsilon_D,
\label{eq:fused-proof-triangle}
\end{equation}
where $\lVert w^F-\beta^0\rVert_{2,2}\leq
\lVert w^F-\beta^0\rVert_{2,1}=\epsilon_D$. Equations
\eqref{eq:fused-proof-estimation} and \eqref{eq:fused-proof-triangle} prove
\eqref{rate_fused} because the vectorized $\ell_{2,2}$ norm equals the matrix
Frobenius norm. The stated probability bound follows from
Section~\ref{app:high-prob-events}.
\end{proof}
\subsection{Debiasing inequalities and quadratic control}
\label{app:debiasing-auxiliary}
The population correction $d^\star=\beta^0-w^F$ exactly removes the discrepancy
between $w^F$ and the target. The fitted procedure instead starts from
$\widehat w^F$, so even adding $d^\star$ leaves the estimation error
$\widehat\Delta^F$. Define the correction error by
\[
d^\star=\beta^0-w^F,
\qquad
\widehat\Delta^D=\widehat d-d^\star.
\]
The population correction has mixed norm
$\lVert d^\star\rVert_{2,1}=\epsilon_D$. The error of the final debiased
estimator is the sum of the fused and correction errors:
\begin{equation}
\widehat w^D-\beta^0=\widehat\Delta^F+\widehat\Delta^D.
\label{eq:debiased-error-decomposition}
\end{equation}
The correction proof must therefore account for the prediction discrepancy
inherited from estimating $w^F$. On the target sample, this discrepancy is
\begin{equation}
Q_F:=(\widehat\Delta^F)^\top\widehat\Sigma^0\widehat\Delta^F=\frac{1}{n_TK}\lVert\mathcal X^0(\widehat w^F-w^F)\rVert_2^2.
\label{eq:debiased-inherited-prediction}
\end{equation}
All arguments hold on $\mathcal E_D$ and do not require independence between
the fused fit and the correction, which use the same target sample.

The four lemmas below separate the roles of the two stages.
Lemma~\ref{lemma_debias_basic} bounds the correction's quadratic error and mixed norm
in terms of $Q_F$ and the size $\epsilon_D$ of the population correction.
Lemma~\ref{lemma_debias_correction} converts this bound into coefficient
error using target curvature. Lemma~\ref{lemma_qf_control} controls $Q_F$
through the fused error norms, and Lemma~\ref{lemma_debiased_rate_algebra}
checks the resulting bounds under the prescribed tunings.

Let
\[
\widetilde{\mathcal L}(d)=\frac{1}{2n_TK}\lVert y^0-\mathcal X^0(\widehat w^F+d)\rVert_2^2
\]
denote the target-domain loss at the corrected coefficient $\widehat w^F+d$.
To compare the fitted correction with $d^\star$, write the penalized objective
difference as
\[
\mathcal J(\Delta)
=
\widetilde{\mathcal L}(d^\star+\Delta)-\widetilde{\mathcal L}(d^\star)
+\widetilde\lambda\{\lVert d^\star+\Delta\rVert_{2,1}-\lVert d^\star\rVert_{2,1}\}.
\]
\begin{lemma}[Debiasing basic inequality]
\label{lemma_debias_basic}
On $\mathcal E_{\rm tscore}$,
\[
\frac{\widetilde\lambda}{2}\lVert\widehat\Delta^D\rVert_{2,1}
+
\frac14(\widehat\Delta^D)^\top\widehat\Sigma^0\widehat\Delta^D
\le
Q_F+2\widetilde\lambda\epsilon_D.
\]
\end{lemma}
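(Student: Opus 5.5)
The plan is to run a standard basic-inequality argument for the correction problem \eqref{debiasedobjective}, centred at the population correction $d^\star=\beta^0-w^F$ rather than at zero. By optimality of $\widehat d=d^\star+\widehat\Delta^D$, we have $\mathcal J(\widehat\Delta^D)\le\mathcal J(0)=0$. The work is to lower bound $\mathcal J(\Delta)$ for an arbitrary $\Delta$ by the expression on the left of the lemma, minus $Q_F+2\widetilde\lambda\epsilon_D$. Everything is deterministic on $\mathcal E_{\rm tscore}$, and no independence between $\widehat w^F$ and the target sample is needed.

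\emph{Step 1 (residual identity).} Since $\widehat w^F+d^\star=\widehat\Delta^F+w^F+\beta^0-w^F=\beta^0+\widehat\Delta^F$ and $y^0=\mathcal X^0\beta^0+\epsilon^0$, the residual at $d^\star+\Delta$ is
\[
\epsilon^0-\mathcal X^0(\widehat\Delta^F+\Delta).
\]
Expanding the two squares gives
\[
\widetilde{\mathcal L}(d^\star+\Delta)-\widetilde{\mathcal L}(d^\star)
=\tfrac12\Delta^\top\widehat\Sigma^0\Delta+\Delta^\top\widehat\Sigma^0\widehat\Delta^F-\tfrac{1}{n_TK}(\epsilon^0)^\top\mathcal X^0\Delta .
\]

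\emph{Step 2 (bounding the linear terms).}
\begin{itemize}
\item[(a)] The noise term is handled by the row-wise H\"older inequality $|\langle a,\Delta\rangle|\le\lVert a\rVert_{2,\infty}\lVert\Delta\rVert_{2,1}$. On $\mathcal E_{\rm tscore}$ this gives
\[
\Bigl|\tfrac{1}{n_TK}(\epsilon^0)^\top\mathcal X^0\Delta\Bigr|\le\tfrac{\widetilde\lambda}{2}\lVert\Delta\rVert_{2,1}.
\]
\item[(b)] The inherited cross term is handled by Cauchy--Schwarz in the semi-inner product induced by the positive semidefinite matrix $\widehat\Sigma^0$, followed by Young's inequality $xy\le x^2/4+y^2$. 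This gives
\[
|\Delta^\top\widehat\Sigma^0\widehat\Delta^F|\le\tfrac14\Delta^\top\widehat\Sigma^0\Delta+Q_F .
\]
\end{itemize}

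\emph{Step 3 (penalty).} The triangle inequality gives
\[
\lVert d^\star+\Delta\rVert_{2,1}-\lVert d^\star\rVert_{2,1}\ge\lVert\Delta\rVert_{2,1}-2\lVert d^\star\rVert_{2,1}=\lVert\Delta\rVert_{2,1}-2\epsilon_D .
\]
No decomposability argument is needed here, because $d^\star$ is not assumed sparse; only its mixed norm $\epsilon_D$ enters.

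\emph{Step 4 (combine).} Putting Steps 1--3 together at $\Delta=\widehat\Delta^D$ yields
\[
0\ge\mathcal J(\widehat\Delta^D)\ge\tfrac14(\widehat\Delta^D)^\top\widehat\Sigma^0\widehat\Delta^D-Q_F+\tfrac{\widetilde\lambda}{2}\lVert\widehat\Delta^D\rVert_{2,1}-2\widetilde\lambda\epsilon_D .
\]
Rearranging gives the claim.

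The only step needing care is the cross term in Step 2(b). It must be absorbed using exactly half of the $\tfrac12$ quadratic coefficient, so that $\tfrac14$ of the curvature survives and $Q_F$ appears with coefficient one. I expect no real obstacle beyond checking these constants.
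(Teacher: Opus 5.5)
Your proposal is correct and follows the paper's proof essentially step for step. The paper likewise starts from $\mathcal J(\widehat\Delta^D)\le 0$ and bounds $\langle\nabla\widetilde{\mathcal L}(d^\star),\widehat\Delta^D\rangle$, with $\nabla\widetilde{\mathcal L}(d^\star)=-\frac{1}{n_TK}(\mathcal X^0)^\top\epsilon^0+\widehat\Sigma^0\widehat\Delta^F$, using the target-score event, H\"older's inequality and Young's inequality for the positive semidefinite $\widehat\Sigma^0$; it then bounds the penalty difference below by $\lVert\widehat\Delta^D\rVert_{2,1}-2\epsilon_D$. Your explicit expansion of the squares is the same quadratic-loss identity the paper uses implicitly, and your constants check out.
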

\begin{proof}
Since $\widehat d=d^\star+\widehat\Delta^D$ minimizes the debiasing objective, $\mathcal J(\widehat\Delta^D)\le0$. At $d^\star$, the residual is $\epsilon^0-\mathcal X^0\widehat\Delta^F$, so
\newpage
\[
\nabla\widetilde{\mathcal L}(d^\star)
=-\frac{1}{n_TK}(\mathcal X^0)^\top\epsilon^0+
\widehat\Sigma^0\widehat\Delta^F.
\]
Using $\mathcal E_{\rm tscore}$, H\"older's inequality, and Young's inequality for the positive semidefinite matrix $\widehat\Sigma^0$ gives
\[
\langle\nabla\widetilde{\mathcal L}(d^\star),\widehat\Delta^D\rangle
\ge
-\frac{\widetilde\lambda}{2}\lVert\widehat\Delta^D\rVert_{2,1}
-\frac14(\widehat\Delta^D)^\top\widehat\Sigma^0\widehat\Delta^D
-Q_F.
\]
The penalty difference is bounded below by
\[
\lVert d^\star+\widehat\Delta^D\rVert_{2,1}-\lVert d^\star\rVert_{2,1}
\ge
\lVert\widehat\Delta^D\rVert_{2,1}-2\epsilon_D.
\]
Substitution into $\mathcal J(\widehat\Delta^D)\le0$ proves the claim.
\end{proof}
Lemma~\ref{lemma_debias_basic} controls an empirical quadratic error. To bound
the coefficient error, the next lemma uses the lower curvature bound of
Lemma~\ref{lemma_rscrsm}. Its smallness condition ensures that the mixed-norm
tolerance contributes at most a constant multiple of
$Q_F+2\widetilde\lambda\epsilon_D$; otherwise, small prediction error alone
would not give the stated coefficient bound.
\begin{lemma}[Debiased correction]
\label{lemma_debias_correction}
Set $\widetilde\lambda=c_0\sqrt{\psi_K/(K n_T)}$ with a fixed sufficiently large $c_0$ as in Lemma~\ref{lemma_score_bounds}. On $\mathcal E_D$, suppose that
\[
\frac{\log p}{\psi_K}\{Q_F+2\widetilde\lambda\epsilon_D\}
\le c_\star
\]
for a sufficiently small constant $c_\star>0$ depending only on the target
curvature constants and the fixed tuning constant $c_0$. Then
\[
\frac1{\sqrt K}\lVert\widehat\Delta^D\rVert_{2,2}
\le
C\left(\sqrt{Q_F}+\sqrt{\widetilde\lambda\epsilon_D}\right).
\]
\end{lemma}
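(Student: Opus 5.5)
The plan is to combine Lemma~\ref{lemma_debias_basic} with the lower curvature bound of Lemma~\ref{lemma_rscrsm} for the target domain $\ell=0$. Write $R:=Q_F+2\widetilde\lambda\epsilon_D$. Lemma~\ref{lemma_debias_basic} gives two facts on $\mathcal E_{\rm tscore}\subset\mathcal E_D$:
\[
(\widehat\Delta^D)^\top\widehat\Sigma^0\widehat\Delta^D\le 4R,
\qquad
\lVert\widehat\Delta^D\rVert_{2,1}\le \frac{2R}{\widetilde\lambda}.
\]
The first controls the empirical quadratic form and the second the mixed norm. Neither needs a cone condition, and that is what lets the argument proceed without a sparsity structure on $\widehat\Delta^D$.

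Next I apply the lower inequality of Lemma~\ref{lemma_rscrsm} with $\ell=0$, which holds on $\mathcal E_{\rm rsc}\subset\mathcal E_D$:
\[
\frac{\alpha_0}{K}\lVert\widehat\Delta^D\rVert_{2,2}^2
\le (\widehat\Delta^D)^\top\widehat\Sigma^0\widehat\Delta^D+\frac{\varphi_0\log p}{Kn_T}\lVert\widehat\Delta^D\rVert_{2,1}^2
\le 4R+\frac{\varphi_0\log p}{Kn_T}\cdot\frac{4R^2}{\widetilde\lambda^2}.
\]
Substituting $\widetilde\lambda^2=c_0^2\psi_K/(Kn_T)$, the tolerance term becomes $4\varphi_0 R^2\log p/(c_0^2\psi_K)$. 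Under the hypothesis $(\log p/\psi_K)R\le c_\star$, this is at most $4\varphi_0c_\star R/c_0^2$, so it is a constant multiple of $R$. Therefore
\[
\frac1K\lVert\widehat\Delta^D\rVert_{2,2}^2\le \frac{4(1+\varphi_0c_\star/c_0^2)}{\alpha_0}\,R .
\]

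Finally I take square roots and use $\sqrt{a+b}\le\sqrt a+\sqrt b$ to get
\[
\frac1{\sqrt K}\lVert\widehat\Delta^D\rVert_{2,2}\le C\bigl(\sqrt{Q_F}+\sqrt{\widetilde\lambda\epsilon_D}\bigr),
\]
with $C$ depending on $\alpha_0$, $\varphi_0$, $c_0$ and $c_\star$. The only delicate point is the tolerance step: the mixed-norm bound scales as $R/\widetilde\lambda$, so its square must be tamed by the factor $\log p/(Kn_T)$. The exact cancellation of $Kn_T$ against $\widetilde\lambda^2$ leaves $R\log p/\psi_K$, which is precisely the quantity the smallness condition controls. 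Because of this cancellation, $c_\star$ need not be small; any fixed value gives a constant $C$, but smallness keeps $C$ clean.
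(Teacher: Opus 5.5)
Your proof is correct and follows the paper's argument step for step. You split the basic inequality of Lemma~\ref{lemma_debias_basic} into a quadratic-form bound and a mixed-norm bound, insert both into the target lower curvature bound, and use $\widetilde\lambda^2=c_0^2\psi_K/(Kn_T)$ to cancel $Kn_T$, so the tolerance becomes a multiple of $R\cdot R\log p/\psi_K$, which the hypothesis reduces to a constant times $R$. Your closing remark is also right: the paper's requirement $\varphi_0c_\star/c_0^2\le1$ only pins down the constant ($8/\alpha_0$ there), so any fixed $c_\star$ would give some constant $C$.
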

\begin{proof}
Let $q=Q_F+2\widetilde\lambda\epsilon_D$. Lemma~\ref{lemma_debias_basic} gives $\widetilde\lambda\lVert\widehat\Delta^D\rVert_{2,1}/2\le q$, and the target lower curvature bound in Lemma~\ref{lemma_rscrsm} gives
\[
\frac{\alpha_0}{4K}\lVert\widehat\Delta^D\rVert_{2,2}^2
\le
q+\frac{\varphi_0\log p}{4K n_T}\lVert\widehat\Delta^D\rVert_{2,1}^2.
\]
Using $\widetilde\lambda^2=c_0^2\psi_K/(K n_T)$ and $\widetilde\lambda\lVert\widehat\Delta^D\rVert_{2,1}/2\le q$, the tolerance term is at most
\[
\frac{\varphi_0}{c_0^2}q\left\{\frac{q\log p}{\psi_K}\right\}.
\]
Choose $c_\star>0$ so that $\varphi_0c_\star/c_0^2\le1$. Under the stated condition, the tolerance term is at most $q$, and hence
$K^{-1}\lVert\widehat\Delta^D\rVert_{2,2}^2\le8q/\alpha_0$, which proves the result.
\end{proof}
It remains to control $Q_F$ using the fused fit. Here the upper curvature
bound is needed: it turns the coefficient norms from
Lemma~\ref{lemma_fused_auxiliary_norms} into an upper bound on the prediction
discrepancy in \eqref{eq:debiased-inherited-prediction}.
\begin{lemma}[Target-domain quadratic error]
\label{lemma_qf_control}
On $\mathcal E_{\rm rsc}$,
\[
Q_F
\le
C\left\{
\frac1K\lVert\widehat\Delta^F\rVert_{2,2}^2
+
\frac{\log p}{K n_T}\lVert\widehat\Delta^F\rVert_{2,1}^2
\right\}.
\]
Consequently, on $\mathcal E_{\rm score}\cap\mathcal E_{\rm rsc}$, suppose
that the assumptions of Lemma~\ref{lemma_fused_auxiliary_norms} hold and that,
for some fixed $M_v<\infty$, $u_m\le1/2$ and $v_m/K\le M_v$. With
$H_\lambda=\sum_{\ell=1}^L\lambda_\ell h_\ell^\star$,
\begin{align*}
Q_F
\le C\Bigg[&
Ks\lambda_0^2+H_\lambda
+\frac{\log p}{n_T}
\left\{
K s^2\lambda_0^2+sH_\lambda+\frac{H_\lambda^2}{K\lambda_0^2}
\right\}
\Bigg].
\end{align*}
The constant $C$ in the displayed upper bound may depend on $M_v$ and the fixed
curvature constants.
\end{lemma}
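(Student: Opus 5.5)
The first claim is a direct application of the upper curvature inequality of Lemma~\ref{lemma_rscrsm} for the target domain $\ell=0$. By \eqref{eq:debiased-inherited-prediction}, $Q_F=(\widehat\Delta^F)^\top\widehat\Sigma^0\widehat\Delta^F$. On $\mathcal E_{\rm rsc}$ the second display of Lemma~\ref{lemma_rscrsm}, applied with $\Delta^0=\widehat\Delta^F$, gives
\[
Q_F\le\frac{\gamma_0}{K}\lVert\widehat\Delta^F\rVert_{2,2}^2+\frac{\tau_0\log p}{Kn_T}\lVert\widehat\Delta^F\rVert_{2,1}^2 .
\]
This inequality holds simultaneously for all vectors, so it applies even though $\widehat\Delta^F$ is data dependent. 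Taking $C=\gamma_0\vee\tau_0$ proves the first inequality. No independence between the fused fit and the target sample is required.

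For the second claim I substitute the simplified conclusions of Lemma~\ref{lemma_fused_auxiliary_norms}, which are available because $u_m\le1/2$ and $v_m/K\le M_v$. Squaring the $\ell_{2,2}$ bound and using $(a+b)^2\le2a^2+2b^2$ gives
\[
\frac1K\lVert\widehat\Delta^F\rVert_{2,2}^2\le C\bigl(Ks\lambda_0^2+H_\lambda\bigr).
\]
Squaring the $\ell_{2,1}$ bound and using $(a+b+c)^2\le3(a^2+b^2+c^2)$ gives
\[
\frac1K\lVert\widehat\Delta^F\rVert_{2,1}^2\le C\left(Ks^2\lambda_0^2+sH_\lambda+\frac{H_\lambda^2}{K\lambda_0^2}\right).
\]
Multiplying the second display by $\log p/n_T$ and inserting both into the first inequality yields exactly the stated bound, with $C$ depending on $M_v$ and the curvature constants.

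There is no real obstacle. The step needing the most care is the normalization bookkeeping: the factor $1/K$ in Lemma~\ref{lemma_rscrsm} must be matched with the $1/\sqrt K$ normalizations of Lemma~\ref{lemma_fused_auxiliary_norms}. Once this is done, the $\ell_{2,1}$ term splits into three pieces of order $Ks^2\lambda_0^2$, $sH_\lambda$ and $H_\lambda^2/(K\lambda_0^2)$, with no leftover powers of $K$.
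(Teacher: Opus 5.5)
Your proposal is correct and follows essentially the same route as the paper. The paper also gets the first claim from the target-domain upper curvature bound of Lemma~\ref{lemma_rscrsm} applied to $\widehat\Delta^F$, and the second by squaring the simplified bounds of Lemma~\ref{lemma_fused_auxiliary_norms} and substituting them, with the same normalization bookkeeping you describe.
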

\begin{proof}
The first claim is the target-domain upper curvature bound in Lemma~\ref{lemma_rscrsm}. Substituting the two bounds from Lemma~\ref{lemma_fused_auxiliary_norms} into that claim and applying $(a+b+c)^2\le3(a^2+b^2+c^2)$ proves the displayed upper bound.
\end{proof}
\begin{lemma}[Rate calculations for the debiased estimator]
\label{lemma_debiased_rate_algebra}
Under the assumptions of Theorem~\ref{thm:debiased} and with the two-stage
tuning specified there, the following bounds hold on $\mathcal E_F$ for all
sufficiently large $m$:
\[
\frac1{\sqrt K}\lVert\widehat\Delta^F\rVert_{2,2}+\sqrt{Q_F}
\le
C\left[
\sqrt{\frac{s\psi_K}{N}}
+
\sqrt{\frac{\bar h}{\sqrt K}\sqrt{\frac{\psi_K}{n_T}}}
\right],
\]
and
\[
\sqrt{\widetilde\lambda\epsilon_D}
\le
C\sqrt{\frac{\bar h}{\sqrt K}\sqrt{\frac{\psi_K}{n_T}}}.
\]
Moreover,
\[
\frac{\log p}{\psi_K}\{Q_F+2\widetilde\lambda\epsilon_D\}=o(1).
\]
The constant $C$ may depend on the fixed design and noise constants and on
$c$, $c_0$, $M_F$, and $M_T$, but not on $m$.
\end{lemma}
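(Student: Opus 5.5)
The plan is to substitute the two fused-stage tunings of Theorem~\ref{thm:debiased} into the simplified conclusions of Lemma~\ref{lemma_fused_auxiliary_norms} and Lemma~\ref{lemma_qf_control}, and to treat regime~A and its complement separately. Lemma~\ref{lemma_fused_average_curvature} already gives $u_m\le1/2$ and $v_m/K\le M_v$ eventually under both tunings, so on $\mathcal E_F$ I can use
\[
\frac1{\sqrt K}\lVert\widehat\Delta^F\rVert_{2,2}\lesssim\sqrt{Ks}\lambda_0+\sqrt{H_\lambda},
\qquad
Q_F\lesssim Ks\lambda_0^2+H_\lambda+\frac{\log p}{n_T}\Bigl\{Ks^2\lambda_0^2+sH_\lambda+\frac{H_\lambda^2}{K\lambda_0^2}\Bigr\}.
\]
Since $s\log p/n_T\le M_T$, the middle two $\log p/n_T$ terms are at most $M_T$ times $Ks\lambda_0^2$ and $H_\lambda$. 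The work therefore reduces to three quantities: $Ks\lambda_0^2$, $H_\lambda$, and $R:=H_\lambda^2\log p/(K\lambda_0^2n_T)$.

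\emph{Complement of regime~A.} Here $\lambda_\ell=8(n_S/N)\lambda_0$, so $H_\lambda=8\lambda_0\bar h$. Using $(L+1)/N\le2/n_S\le2/n_T$, we get $H_\lambda\lesssim\bar h\sqrt{\psi_K/(Kn_T)}$. Also $Ks\lambda_0^2\asymp s(L+1)\psi_K/N\lesssim s\psi_K/n_S$, and failure of regime~A bounds this by $(\bar h/\sqrt K)\sqrt{\psi_K/n_T}$. For the remaining term, $R\asymp\bar h^2\log p/(Kn_T)$, which factors as
\[
R\asymp\frac{\bar h}{\sqrt{Kn_T}}\cdot\frac{\bar h\log p}{\sqrt{Kn_T}}.
\]
The second factor is $o(1)$ by assumption, so $R$ is dominated by the heterogeneity term. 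Together these give the first display of the lemma.

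\emph{Regime~A.} Here $\lambda_\ell=8\sqrt{n_S/N}\,\lambda_0$. Then $Ks\lambda_0^2\lesssim s\psi_K/N$, using $(L+1)n_S\le2N$, and $H_\lambda\lesssim\bar h\sqrt{\psi_K/(Kn_S)}\le\bar h\sqrt{\psi_K/(Kn_T)}$ because $n_S\ge n_T$. The delicate term is $R\asymp\bar h^2N\log p/(Kn_Sn_T)$, which carries an extra factor $N/n_S\asymp L$. I will split $R$ into a product of two factors:
\begin{itemize}
\item one factor $\bar h/\sqrt{Kn_T}$, bounded through the defining inequality of regime~A by $s\sqrt{\psi_K}/n_S$;
\item the other factor, bounded through $L\bar h\log p/\sqrt{Kn_S}\le M_F$ together with $Ls\log p/n_S=o(1)$,
\end{itemize}
with the aim of showing $R\lesssim s\psi_K/N$. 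This is the step I expect to be the main obstacle, because a naive combination loses a power of $L$. If the splitting above does not close, I would fall back on bounding $\lVert\widehat\Delta^F\rVert_{2,1}$ directly from the cone inequality~\eqref{eq:fused-cone}, since the $H_\lambda/\lambda_0$ contribution there is weighted by $\lambda_\ell$ rather than $\lambda_0$.

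The second claim follows from $\epsilon_D\le\bar h$ and $\widetilde\lambda=c_0\sqrt{\psi_K/(Kn_T)}$, which give $\widetilde\lambda\epsilon_D\lesssim(\bar h/\sqrt K)\sqrt{\psi_K/n_T}$ directly. For the last claim I multiply the squared bounds by $\log p/\psi_K$. The pooled part gives $s\log p/N\le s\log p/n_S=o(1)$. The heterogeneity part gives
\[
\frac{\bar h\log p}{\sqrt{K\psi_Kn_T}}\le\frac{\bar h\log p}{\sqrt{Kn_T}}=o(1),
\]
and the term $2\widetilde\lambda\epsilon_D$ is handled by the same bound.
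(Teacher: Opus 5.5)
Your handling of the complement of regime~A, of the bound on $\widetilde\lambda\epsilon_D$, and of the smallness claim matches the paper's. The regime~A step you flagged does not close as planned, because you compare against the wrong term. You try to show that the third tolerance term $R\asymp L\bar h^2\log p/(Kn_T)$ satisfies $R\lesssim s\psi_K/N$, and this is false in general under the hypotheses of Theorem~\ref{thm:debiased}. Take $n_S=n_T=n$, $K\ge\log p$ (so $\psi_K\asymp1$), and $\bar h=s\sqrt{K/n}$, which lies in regime~A. Every assumption of the theorem then holds as soon as $Ls\log p=o(n)$. Yet $R/(s\psi_K/N)\asymp L^2s\log p/n$, which diverges, for instance, when $L=n^{0.6}$ and $s\log p=n^{0.3}$. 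So no splitting of $R$ can deliver that bound, whether through the $M_F$ envelope or through $Ls\log p/n_S=o(1)$. Your fallback via \eqref{eq:fused-cone} does not obviously help either: the target block there is weighted by $\lambda_0$, so $\lVert\widehat\Delta^0\rVert_{2,1}$ still carries $8H_\lambda/\lambda_0=64\sqrt{N/n_S}\,\bar h$.

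The lemma only requires $R\lesssim R_s+R_h$, with $R_s=s\psi_K/N$ and $R_h=(\bar h/\sqrt K)\sqrt{\psi_K/n_T}$. In regime~A the heterogeneity term can be as large as $s\psi_K/n_S\asymp L\,s\psi_K/N$, which is exactly the room needed for the extra factor $L$. So compare $R$ with $R_h$ instead: $R\asymp R_h\cdot L\bar h\log p/\sqrt{Kn_T\psi_K}$. Apply the defining inequality of regime~A, in the form $\bar h/\sqrt{Kn_T}\le s\sqrt{\psi_K}/n_S$, to the single remaining factor of $\bar h$; this bounds the multiplier by $Ls\log p/n_S=o(1)$. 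Hence $R=o(R_h)$, with no use of $M_F$. This is the paper's argument, and your final smallness claim needs it too, since in regime~A that claim rests on $Q_F\lesssim R_s+R_h$.
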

\begin{proof}
Set
\begin{equation}
R_s:=\frac{s\psi_K}{N},\qquad R_h:=\frac{\bar h}{\sqrt K}\sqrt{\frac{\psi_K}{n_T}}.
\label{eq:debiased-rate-scales}
\end{equation}
These are the squared pooled-estimation and heterogeneity contributions to
the desired rate. Bounds of order $R_s+R_h$ for $Q_F$ and the squared
normalized fused error give the first conclusion. The same bounds will also
verify the smallness condition in Lemma~\ref{lemma_debias_correction}.
Since $\epsilon_D\le\bar h$ and
$\widetilde\lambda=c_0\sqrt{\psi_K/(Kn_T)}$,
\begin{equation}
\widetilde\lambda\epsilon_D\le c_0R_h.
\label{eq:debiased-rate-bias}
\end{equation}
Lemma~\ref{lemma_fused_average_curvature} gives $u_m\le1/2$ and
$v_m/K\le M_v$ for all sufficiently large $m$, so
Lemmas~\ref{lemma_fused_auxiliary_norms} and~\ref{lemma_qf_control} apply.

In regime~A of Theorem~\ref{thm:debiased}, its tuning and
$N=n_T+Ln_S\asymp Ln_S$ give
\begin{equation}
Ks\lambda_0^2\lesssim R_s,\qquad H_\lambda\lesssim\bar h\sqrt{\frac{\psi_K}{Kn_S}}\le R_h.
\label{eq:debiased-rate-A}
\end{equation}
Outside regime~A, $s\psi_K/n_S<R_h$, and the alternative tuning gives
\begin{equation}
Ks\lambda_0^2\lesssim\frac{s\psi_K}{n_S}<R_h,\qquad H_\lambda\lesssim\bar h\sqrt{\frac{\psi_K}{Kn_S}}\le R_h.
\label{eq:debiased-rate-outside-A}
\end{equation}
For either tuning, the first two tolerance terms in
Lemma~\ref{lemma_qf_control} satisfy
\begin{equation}
\frac{\log p}{n_T}\left(Ks^2\lambda_0^2+sH_\lambda\right)\le M_T\left(Ks\lambda_0^2+H_\lambda\right).
\label{eq:debiased-rate-first-tolerances}
\end{equation}
The third tolerance term depends on the relative source penalties. Since
$\sum_{\ell=1}^Lh_\ell^\star=(N/n_S)\bar h$, their two choices give
\begin{equation}
\frac{H_\lambda}{\lambda_0}=\begin{cases}8\sqrt{N/n_S}\bar h,&\text{in regime A},\\8\bar h,&\text{otherwise}.\end{cases}
\label{eq:debiased-heterogeneity-penalty-ratio}
\end{equation}
Squaring \eqref{eq:debiased-heterogeneity-penalty-ratio} introduces the factor
$N/n_S\asymp L$ in regime~A. Consequently,
\begin{equation}
\frac{\log p}{n_T}\frac{H_\lambda^2}{K\lambda_0^2}\lesssim\begin{cases}L\bar h^2\log p/(Kn_T),&\text{in regime A},\\\bar h^2\log p/(Kn_T),&\text{otherwise}.\end{cases}
\label{eq:debiased-rate-third-tolerance}
\end{equation}
If $R_h=0$, then $\bar h=H_\lambda=0$, and this term vanishes. If $R_h>0$,
its ratio to $R_h$ is bounded, up to a fixed constant, by
\begin{equation}
\begin{cases}\displaystyle\frac{L\bar h\log p}{\sqrt{Kn_T\psi_K}}\le\frac{Ls\log p}{n_S}=o(1),&\text{in regime A},\\[2mm]\displaystyle\frac{\bar h\log p}{\sqrt{Kn_T\psi_K}}\le\frac{\bar h\log p}{\sqrt{Kn_T}}=o(1),&\text{otherwise}.\end{cases}
\label{eq:debiased-rate-third-ratios}
\end{equation}
The first inequality uses $R_h\le s\psi_K/n_S$: the defining restriction of
regime~A is what controls its extra factor of $L$. The second uses
$\psi_K\ge1$ and the heterogeneity condition in Theorem~\ref{thm:debiased}.
Thus Lemma~\ref{lemma_qf_control} and
\eqref{eq:debiased-rate-A}--\eqref{eq:debiased-rate-third-ratios} yield
\begin{equation}
Ks\lambda_0^2+H_\lambda+Q_F\le C(R_s+R_h).
\label{eq:debiased-rate-combined}
\end{equation}
Applying Lemma~\ref{lemma_fused_auxiliary_norms} to
\eqref{eq:debiased-rate-combined} proves the first bound in the lemma;
\eqref{eq:debiased-rate-bias} proves the second. Finally,
\begin{equation}
\frac{\log p}{\psi_K}(R_s+R_h)=\frac{s\log p}{N}+\frac{\bar h\log p}{\sqrt{Kn_T\psi_K}}=o(1)
\label{eq:debiased-rate-smallness}
\end{equation}
by the assumptions of Theorem~\ref{thm:debiased}. Combining
\eqref{eq:debiased-rate-bias}, \eqref{eq:debiased-rate-combined}, and
\eqref{eq:debiased-rate-smallness} proves the final assertion.
\end{proof}
\subsection{Proof of Theorem~\ref{thm:debiased}}
\label{app:proof-debiased}
\begin{proof}
Work on $\mathcal E_D\subseteq\mathcal E_F$.
Lemma~\ref{lemma_debiased_rate_algebra} verifies the condition of
Lemma~\ref{lemma_debias_correction} for all sufficiently large $m$.
Applying these two lemmas to \eqref{eq:debiased-error-decomposition} gives
\begin{align}
\frac1{\sqrt K}\lVert\widehat w^D-\beta^0\rVert_{2,2}
&\le\frac1{\sqrt K}\lVert\widehat\Delta^F\rVert_{2,2}+C\left(\sqrt{Q_F}+\sqrt{\widetilde\lambda\epsilon_D}\right)\notag\\
&\le C\left[\sqrt{\frac{s\psi_K}{N}}+\sqrt{\frac{\bar h}{\sqrt K}\sqrt{\frac{\psi_K}{n_T}}}\right].
\label{eq:debiased-proof-combined}
\end{align}
Translating the vectorized $\ell_{2,2}$ norm to the Frobenius norm proves
\eqref{rate_debiased}. The stated probability bound follows from
Section~\ref{app:high-prob-events}.
\end{proof}

\subsection{Comparison with the single-output transfer rate}
\label{app:single-output-comparison}
Section~\ref{sec:theory} compares the leading estimation term of
\eqref{rate_fused} with its single-output counterpart. To make that
comparison precise, write the scalar-response analogues of the two
heterogeneity quantities for coefficient vectors
$\boldsymbol\beta^\ell\in\mathbb R^p$ as
\[
\bar h'
:=
\frac{n_S}{N}\sum_{\ell=1}^L
\lVert\boldsymbol{\beta}^\ell-\boldsymbol{\beta}^0\rVert_1,
\qquad
\epsilon_D'
:=
\left\lVert
\sum_{\ell=1}^L\frac{n_S}{N}
(\boldsymbol{\beta}^\ell-\boldsymbol{\beta}^0)
\right\rVert_1,
\]
and, translating the source-count notation of \citet{he2024transfusion} to $L$,
set $\varrho_{\mathrm{TF}}:=L\bar h'\sqrt{\log p/n_S}$. Theorem~1 of
\citet{he2024transfusion} gives the squared $\ell_2$-error rate
$s\log p/N+(1+\varrho_{\mathrm{TF}})\bar h'\sqrt{\log p/n_S}+(\epsilon_D')^2$, so
the corresponding $\ell_2$-norm rate is
\begin{equation}
\sqrt{\frac{s\log p}{N}}+\sqrt{1+\varrho_{\mathrm{TF}}}\sqrt{\bar h'\sqrt{\frac{\log p}{n_S}}}+\epsilon_D'.
\label{rate_fused_he}
\end{equation}
The factor $\sqrt{1+\varrho_{\mathrm{TF}}}$ can be absorbed into the hidden
constant only when $\varrho_{\mathrm{TF}}=O(1)$, equivalently when
$n_S\gtrsim L^2(\bar h')^2\log p$; under the stronger condition
$n_S\gg L^2(\bar h')^2\log p$ it tends to one. In the scalar-response case
$K=1$, the conditions of Theorem~\ref{thm:fused} imply
$\varrho_{\mathrm{TF}}=O((\log p)^{-1/2})$, so the simplified rate without this
factor is valid in the regime of that theorem. Comparing
\eqref{rate_fused_he} with \eqref{rate_fused} therefore isolates the effect of
$\psi_K$: the mixed row norms replace the $\ell_1$ norms, and the
support-selection logarithm is divided among the responses.

%% file: appendix_d_lower_bound.tex
\section{Proof of the minimax lower bound}
\label{app:multitask-lower-bound}

{\emergencystretch=1.5em
This section proves Theorem~\ref{thm:multitask-transfer-lower-bound} in the
homogeneous Gaussian submodel \eqref{eq:lower-gaussian-model}. Throughout,
$\Theta_{\mathrm{MT}}(s,h)$, $\omega_{q,K}$, $r_{\mathrm{pool}}$, and
$r_{\mathrm{shift}}$ are as in Section~\ref{sec:lower-bound}. The integer $q$
denotes the number of active rows in a coefficient matrix. We use the
abbreviation
\begin{equation}
d_{q,K}:=K\omega_{q,K}=K+\log\left(\frac{\mathrm e p}{q}\right),\qquad 1\leq q\leq p.
\label{eq:lower-block-complexity}
\end{equation}
As $qd_{q,K} = qK + q\log(ep/q)$, the first term reflects the $qK$ coefficients within the $q$ active rows, and the second reflects the choice of those rows among the $p$ predictors.
\par}

The proof compares estimation to distinguishing among finitely many coefficient matrices. We construct matrices separated in Frobenius norm but with pairwise Kullback--Leibler divergences small relative to the logarithm of the family size. 
Fano's inequality then shows that no estimator can reliably identify the correct matrix. We apply this argument to the two separate constructions, both using the packing lemma below.
In the first, all $L+1$ domains share the same candidate matrix, so all $N$ observations help distinguish the candidates; this yields $r_{\mathrm{pool}}$. In the second construction, every source matrix is fixed at zero and only the target varies, so only the $n_T$ target observations help distinguish the candidates. Choosing the number of active target rows yields the three branches of $r_{\mathrm{shift}}$. 
The two constructions give separate lower bounds; taking the larger gives $(r_{\mathrm{pool}}+r_{\mathrm{shift}})/2$. We now state the packing lemma.

\begin{samepage}
\begin{lemma}[Row-sparse packing]
\label{lemma:lower-row-packing}
Suppose $p$ and $K$ are positive integers with $p\geq8$, and let
$q\in[p]$ satisfy $q\leq p/8$. For every $a>0$, there is a finite set
$\mathcal M(q,a)\subset\mathbb R^{p\times K}$ with
$|\mathcal M(q,a)|\ge16$ and universal constants
$c_{\mathrm p},C_{\mathrm p}>0$ such that
\begin{equation}
\log|\mathcal M(q,a)|\geq c_{\mathrm p}qd_{q,K}.
\label{eq:lower-packing-cardinality}
\end{equation}
Every $\mathbf A\in\mathcal M(q,a)$ has exactly $q$ nonzero rows, each with
Euclidean norm $a$, and all distinct $\mathbf A,\mathbf A'\in\mathcal M(q,a)$
satisfy
\begin{equation}
c_{\mathrm p}qa^2\leq\lVert\mathbf A-\mathbf A'\rVert_F^2\leq C_{\mathrm p}qa^2.
\label{eq:lower-packing-properties}
\end{equation}
In particular, $\lVert\mathbf A\rVert_{2,1}=qa$ for every
$\mathbf A\in\mathcal M(q,a)$.
\end{lemma}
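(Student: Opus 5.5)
We build $\mathcal M(q,a)$ as a product of two packings: one over supports and one over the row directions within a fixed support. The cardinality bound needs $\log|\mathcal M|\gtrsim qK+q\log(\mathrm ep/q)$. Since $\max\{x,y\}\ge (x+y)/2$, it suffices to produce, for each of the two summands, a family whose log-cardinality is of that order and whose pairwise separation satisfies \eqref{eq:lower-packing-properties}. We then keep whichever family is larger. The upper bound $\lVert\mathbf A-\mathbf A'\rVert_F^2\le 4qa^2$ holds automatically for any two matrices with $q$ rows of norm $a$, since each has Frobenius norm $\sqrt q\,a$; so $C_{\mathrm p}=4$. The identity $\lVert\mathbf A\rVert_{2,1}=qa$ is also immediate. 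Only the lower separation and the cardinality need work.

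\emph{Support branch.} We use the sparse Varshamov--Gilbert bound (e.g.\ Lemma~4.10 of Massart, or Lemma~A.3 of Rigollet--Tsybakov). Because $q\le p/8$, it gives a family $\mathcal T$ of subsets of $[p]$ of size exactly $q$ with pairwise symmetric differences at least $q/2$ and $\log|\mathcal T|\ge c\,q\log(\mathrm ep/q)$. To each $T\in\mathcal T$ we associate the matrix whose rows indexed by $T$ equal $a\mathbf e_1^\top$ and whose other rows are zero. Two such matrices differ in at least $q/2$ rows, each contributing $a^2$, so $\lVert\mathbf A-\mathbf A'\rVert_F^2\ge qa^2/2$.

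\emph{Direction branch.} We fix the support $[q]$ and use the ordinary Varshamov--Gilbert bound on $\{-1,1\}^{qK}$. This yields sign patterns with pairwise Hamming distance at least $qK/8$ and $\log$-cardinality at least $qK\log 2/8$. We set $A_{jk}=a\,\xi_{jk}/\sqrt K$ for $j\le q$. Each row then has norm exactly $a$, and two matrices differ in at least $qK/8$ entries, each contributing $4a^2/K$, so $\lVert\mathbf A-\mathbf A'\rVert_F^2\ge qa^2/2$.

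The main (mild) obstacle is the side condition $|\mathcal M|\ge16$ together with the small-$K$ or small-$q$ edge cases. For $qK$ small the direction branch may be tiny, but then the support branch dominates. Since $p\ge8q$, we have $\log(\mathrm ep/q)\ge\log(8\mathrm e)$, and the sparse Varshamov--Gilbert family has at least about $(p/q)^{q/2}\ge 8^{q/2}$ elements. This may still be below $16$ when $q=1$. For $q=1$ we instead use the $p\ge8$ vectors $a\mathbf e_j\mathbf e_1^\top$ together with their negatives, giving $2p\ge16$ elements with pairwise distance at least $2a^2$. Adjusting $c_{\mathrm p}$ downward then covers the $\log 16$ floor in every case. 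Taking $c_{\mathrm p}=\min$ of the constants from the two branches (including the factor $1/2$ from the max-versus-sum step) completes the plan.
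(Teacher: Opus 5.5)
Your separation and log-cardinality arguments are correct, and your route differs from the paper's mainly in packaging. The paper takes the product $\{\mathbf A(S,u):S\in\mathcal V_q,\ u\in\mathcal U_q\}$ of a support code and a sign code, so the two log-cardinalities add. It also builds the support code itself, by a greedy Hamming packing of $[b]^q$ over $q$ blocks of size $b=\lfloor p/q\rfloor\ge8$, rather than citing a sparse Varshamov--Gilbert lemma. Keeping the larger of your two families costs only the factor $1/2$ in $c_{\mathrm p}$, so that part of your plan is fine.

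The gap is the floor $|\mathcal M(q,a)|\ge16$, which the lemma asserts and which the Fano step \eqref{eq:lower-fano-testing} needs. Lowering $c_{\mathrm p}$ cannot supply it: that weakens the logarithmic bound but adds no elements.

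Your patch treats only $q=1$, yet nothing in your plan certifies $16$ elements for other small $q$ either. Your own heuristic $(p/q)^{q/2}$ already gives only $8$ at $q=2$, $p=16$. The cited sparse Varshamov--Gilbert lemmas guarantee $\log|\mathcal T|\ge c\,q\log(\mathrm ep/q)$ only with small constants, whereas reaching $16$ at $q=2$, $p=16$ would need $c\ge\log 16/\{2\log(8\mathrm e)\}\approx0.45$. The direction branch certifies only $2^{qK/8}$, which is below $16$ whenever $qK<32$. So for small $q\ge2$ and small $K$, neither branch is shown to reach $16$.

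Similarly, your $q=1$ family of size $2p$ has log-cardinality only $\log(2p)$. It must therefore enter the maximum alongside the direction branch, not replace the construction, or the $K$ term is lost.

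The repair is short; either of the following works.
\begin{itemize}
\item Use the paper's product. There the constant words $(i,\ldots,i)$, $i\in[b]$, are pairwise at Hamming distance $q$, which keeps $|\mathcal V_q|\ge b\ge8$. The antipodal sign words keep $|\mathcal U_q|\ge2$, so the product has at least $16$ elements.
\item Add a third candidate to your maximum: the $b\ge8$ disjoint $q$-subsets of $[p]$, each carrying rows $a\mathbf e_1^\top$ with a global sign $\pm$. These $2b\ge16$ matrices have exactly $q$ rows of norm $a$ and pairwise squared distances between $2qa^2$ and $4qa^2$. The largest of the three families then meets every requirement of the lemma.
\end{itemize}
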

\end{samepage}

\begin{proof}
We construct the row supports and the signs of the nonzero coefficients
separately. The support construction supplies the $q\log(\mathrm ep/q)$
part of the logarithm of the packing size, and the sign construction supplies
the $qK$ part. Both use the greedy Hamming-packing argument of
\citet{gilbert1952comparison}. 
Here a word is defined as an ordered list of values. 
In the support construction below, its entries will be integers specifying which row to select within each block of predictors.
In the sign construction, its entries will be $-1$ or $1$, specifying coefficient signs. 
An individual value allowed in a word is called a symbol, and the set of allowed values is called its alphabet.
Thus, the sign alphabet is $\{-1,1\}$; the support
alphabet will be specified when the blocks are defined later. 
The Hamming distance $d_H$ between two words of the same length is the number of positions at which their values differ.
We give the counting details, including those needed when $q$ or $K$ is small.

First, we record the elementary entropy estimate used in both constructions.
Let $H(x)=-x\log x-(1-x)\log(1-x)$ for $0<x<1$, with natural logarithms.
For any positive integer $m$, $0<x\leq1/2$, and $j\leq xm$, we have
$x^j(1-x)^{m-j}\geq\exp\{-mH(x)\}$, because $x/(1-x)\leq1$.
The binomial expansion therefore gives
\begin{equation}
1=\sum_{j=0}^{m}\binom mjx^j(1-x)^{m-j}\geq\exp\{-mH(x)\}\sum_{j=0}^{\lfloor xm\rfloor}\binom mj.
\label{eq:lower-entropy-count}
\end{equation}
Thus the last sum is at most $\exp\{mH(x)\}$. This bounds the number of
words near any fixed word once the alphabet size is accounted for.

Next, we construct a collection of well-separated row supports. Set $b=\lfloor p/q\rfloor\ge8$ and partition $qb$ of the
$p$ row indices into $q$ disjoint blocks of size $b$.
A word in $[b]^q$
chooses one row from each block and hence determines a support of size $q$.
Start with the $b$ constant words and add words greedily, always requiring
Hamming distance at least $\lceil q/4\rceil$ from every word already chosen.
The initial constant words satisfy this requirement, and the procedure
stops because $[b]^q$ is finite. At that point every word is within distance
$\lceil q/4\rceil-1$ of a chosen word; otherwise it could be added.
Consequently, Hamming balls of this radius around the chosen words cover
$[b]^q$.

A ball of radius $\lceil q/4\rceil-1$ contains
$\sum_{j=0}^{\lceil q/4\rceil-1}\binom qj(b-1)^j$ words: choose the $j$
positions that change and then one of $b-1$ new symbols at each position.
Since the radius is smaller than $q/4$, \eqref{eq:lower-entropy-count}
with $m=q$ and $x=1/4$ bounds this count by
$\exp\{qH(1/4)\}b^{q/4}$. Covering all $b^q$ words therefore requires at
least $\exp\{q[(3/4)\log b-H(1/4)]\}$ chosen words. Their supports form a
collection $\mathcal V_q$. The constant-word initialization also ensures
$|\mathcal V_q|\geq b\geq8$.

Because $b\geq8$ and $p/q<b+1$, the quantity
$(3/4)\log b-H(1/4)$ is bounded below by a universal positive multiple of
$\log(\mathrm ep/q)$. Moreover, each changed position replaces one selected
row by a different row in the same block. The symmetric difference of two
supports thus has twice the Hamming distance of their words. We obtain
\begin{equation}
\log|\mathcal V_q|\geq c q\log\left(\frac{\mathrm e p}{q}\right),\qquad |S\mathbin\triangle S'|\geq\frac q2\quad(S\ne S', S,S'\in\mathcal V_q).
\label{eq:lower-support-code}
\end{equation}

We then construct a separate collection of well-separated sign patterns for the $qK$ coefficients. Use words in $\{-1,1\}^{qK}$.
Start with the
all-positive and all-negative words, and extend greedily to a maximal
collection $\mathcal U_q$ with Hamming distance at least
$\lceil qK/8\rceil$. The same covering argument, now with alphabet size two,
uses \eqref{eq:lower-entropy-count} at $m=qK$ and $x=1/8$. It gives
$\log|\mathcal U_q|\geq qK\{\log2-H(1/8)\}$. Since
$\log2-H(1/8)>0$, and the initial two words remain in the collection,
\begin{equation}
\log|\mathcal U_q|\geq cqK,\qquad d_H(u,u')\geq\frac{qK}{8}\quad(u\ne u', u,u'\in\mathcal U_q),
\label{eq:lower-sign-code}
\end{equation}
with $|\mathcal U_q|\geq2$.

We now combine the supports and signs. For
$S=\{j_1<\cdots<j_q\}\in\mathcal V_q$ and $u\in\mathcal U_q$, define
$\mathbf A(S,u)$ by
\begin{equation}
A_{j_rk}(S,u):=\frac{a}{\sqrt K}u_{(r-1)K+k},\qquad r\in[q],\quad k\in[K],
\label{eq:lower-packing-matrix}
\end{equation}
and set all remaining rows to zero. Each active row has $K$ entries of
magnitude $a/\sqrt K$, so its Euclidean norm is $a$. In particular,
$\lVert\mathbf A(S,u)\rVert_{2,1}=qa$ and
$\lVert\mathbf A(S,u)\rVert_F^2=qa^2$.

If two matrices have different supports, each row in the symmetric
difference contributes $a^2$ to their squared distance. By
\eqref{eq:lower-support-code}, the total is at least $qa^2/2$.
If the supports agree, each changed sign contributes $4a^2/K$, and
\eqref{eq:lower-sign-code} gives
\begin{equation}
\lVert\mathbf A(S,u)-\mathbf A(S,u')\rVert_F^2=\frac{4a^2}{K}d_H(u,u')\geq\frac{qa^2}{2}\qquad(u\ne u').
\label{eq:lower-fixed-support-distance}
\end{equation}
The triangle inequality bounds every squared distance by $4qa^2$.
Thus \eqref{eq:lower-packing-properties} holds for
$\mathcal M(q,a):=\{\mathbf A(S,u):S\in\mathcal V_q,u\in\mathcal U_q\}$.
Each pair $(S,u)$ determines a different matrix, so the cardinalities
multiply. In particular, $|\mathcal M(q,a)|\geq8\cdot2=16$, and adding the
logarithmic bounds in \eqref{eq:lower-support-code} and
\eqref{eq:lower-sign-code} proves \eqref{eq:lower-packing-cardinality}.
The same universal constants work for all admissible $p,q,K$, including
$q=K=1$.
\end{proof}

\begin{proof}[Proof of Theorem~\ref{thm:multitask-transfer-lower-bound}]
We first relate the distance between coefficient tuples to the divergence
between their data distributions. Under \eqref{eq:lower-gaussian-model},
the design vectors have the same distribution for every coefficient tuple.
Conditional on a design vector $\mathbf x_i^\ell$, the response has mean
$(\mathbf B^\ell)^\top\mathbf x_i^\ell$ and covariance
$\sigma^2\mathbf I_K$. For two Gaussian distributions with this common
covariance, direct integration of the log density ratio gives divergence
equal to the squared distance between their means divided by $2\sigma^2$.
The chain rule for relative entropy
\citep[Theorem~2.5.3]{cover2006elements} and independence of the observations
therefore yield
\begin{equation}
D_{\mathrm{KL}}(\mathbb P_{\mathbf B}\Vert\mathbb P_{\mathbf B'})=\frac{1}{2\sigma^2}\sum_{\ell=0}^L\sum_{i=1}^{n_\ell}\mathbb E\left[\lVert(\mathbf B^\ell-\mathbf B'^\ell)^\top\mathbf x_i^\ell\rVert_2^2\right].
\label{eq:lower-kl-conditioning}
\end{equation}
Here the expectation is over the common design distribution. Since
$\mathbb E[\mathbf x_i^\ell(\mathbf x_i^\ell)^\top]=\mathbf I_p$, each
expectation in \eqref{eq:lower-kl-conditioning} is
$\lVert\mathbf B^\ell-\mathbf B'^\ell\rVert_F^2$. Hence
\begin{equation}
D_{\mathrm{KL}}(\mathbb P_{\mathbf B}\Vert\mathbb P_{\mathbf B'})=\frac{1}{2\sigma^2}\sum_{\ell=0}^Ln_\ell\lVert\mathbf B^\ell-\mathbf B'^\ell\rVert_F^2.
\label{eq:lower-kl-identity}
\end{equation}
This identity accounts for both the observed covariates and the responses. Because the covariates follow the same distribution under $\mathbb P_{\mathbf B}$ and $\mathbb P_{\mathbf B'}$, their distribution contributes no separate term to the KL divergence.

Next, we explain how a finite family gives a lower bound for estimation.
Consider coefficient tuples $\{\mathbf B_m:m\in[M]\}$ whose target matrices
satisfy the separation in \eqref{eq:lower-packing-properties}, and suppose
$M\geq16$, $\log M\geq c_{\mathrm p}qd_{q,K}$, and all pairwise divergences
are at most a sufficiently small universal multiple of $qd_{q,K}$. Write
$\mathbb P_m:=\mathbb P_{\mathbf B_m}$. Given any estimator
$\widetilde{\mathbf B}$ based on the complete data, let $\widehat J$ be the
index of the nearest target matrix $\mathbf B_m^0$ in Frobenius norm,
breaking ties by the smallest index. If its Frobenius error is less than
half the minimum distance between target matrices, the triangle inequality
ensures that this decoder selects the correct index. Equivalently, a
decoding error can occur only if the normalized squared estimation error is
at least $c_{\mathrm p}qa^2/(4K)$.

To bound the decoding error, let $J$ be uniform on $[M]$, and let $Z$ denote
the complete data drawn from $\mathbb P_J$. The marginal data distribution
is the mixture $\overline{\mathbb P}=M^{-1}\sum_m\mathbb P_m$.
The mutual information $I(J;Z)$ measures how much the data reveal about the
chosen index. Its expression in terms of relative entropy and convexity in
the second argument \citep[Theorem~2.7.2]{cover2006elements} give
\begin{equation}
I(J;Z)=\frac{1}{M}\sum_mD_{\mathrm{KL}}(\mathbb P_m\Vert\overline{\mathbb P})\leq\frac{1}{M^2}\sum_{m,m'}D_{\mathrm{KL}}(\mathbb P_m\Vert\mathbb P_{m'}).
\label{eq:lower-fano-information}
\end{equation}
{\emergencystretch=1.5em
Choosing the constant in the pairwise divergence bound small enough makes
\eqref{eq:lower-fano-information} at most $(\log M)/8$. Fano's inequality
\citep[Theorem~2.10.1]{cover2006elements}, written with natural logarithms,
then gives
\begin{equation}
\mathbb P(\widehat J\ne J)\geq1-\frac{I(J;Z)+\log2}{\log M}\geq1-\frac18-\frac14=\frac58.
\label{eq:lower-fano-testing}
\end{equation}
The last inequality uses $M\geq16$. Thus the probability of a decoding
error, averaged over the $M$ candidates, is bounded away from zero, and
at least one candidate has error probability at least as large. By the
estimation-to-decoding implication above, there is a universal
$c_{\mathrm F}>0$ such that
\begin{equation}
\inf_{\widetilde{\mathbf B}}\max_m\mathbb P_m\left\{\frac{1}{K}\lVert\widetilde{\mathbf B}-\mathbf B_m^0\rVert_F^2\geq c_{\mathrm F}\frac{qa^2}{K}\right\}\geq c_{\mathrm F}.
\label{eq:lower-fano-consequence}
\end{equation}
For example, any fixed positive $c_{\mathrm F}\leq\min\{c_{\mathrm p}/4,5/8\}$
works. The packing size guaranteed by Lemma~\ref{lemma:lower-row-packing}
therefore suffices without imposing further lower bounds on $p$, $K$, or $s$.
\par}

We now establish the pooled component of the lower bound. Assign the same coefficient matrix to every domain.
Take $q=s$ and choose
\begin{equation}
a_N:=c_{\mathrm{amp}}\sigma\sqrt{\frac{d_{s,K}}{N}},\qquad \mathbf B_m^0=\mathbf B_m^1=\cdots=\mathbf B_m^L:=\mathbf A_m\in\mathcal M(s,a_N),
\label{eq:lower-pooled-construction}
\end{equation}
where $c_{\mathrm{amp}}>0$ is a sufficiently small universal constant.
Every tuple in \eqref{eq:lower-pooled-construction} belongs to
$\Theta_{\mathrm{MT}}(s,h)$ for every $h\geq0$: its target has $s$ active
rows, and all source--target contrasts are zero. Since all domains share
the same matrix, \eqref{eq:lower-kl-identity} sums their sample sizes to $N$.
The upper distance bound in \eqref{eq:lower-packing-properties} gives
\begin{equation}
D_{\mathrm{KL}}(\mathbb P_{\mathbf B_m}\Vert\mathbb P_{\mathbf B_{m'}})\leq\frac{C_{\mathrm p}Ns a_N^2}{2\sigma^2}=\frac{C_{\mathrm p}c_{\mathrm{amp}}^2}{2}s d_{s,K}.
\label{eq:lower-pooled-divergence}
\end{equation}
By \eqref{eq:lower-packing-cardinality}, the logarithm of the family size is
at least $c_{\mathrm p}s d_{s,K}$. Choosing $c_{\mathrm{amp}}$ small enough
therefore verifies the divergence condition for
\eqref{eq:lower-fano-consequence}. The resulting normalized squared
separation has order $s a_N^2/K$, which equals
$c_{\mathrm{amp}}^2\sigma^2s\omega_{s,K}/N$. As this finite family is contained
in the parameter class, \eqref{eq:lower-fano-consequence} implies
\begin{equation}
\inf_{\widetilde{\mathbf B}}\sup_{\mathbf B\in\Theta_{\mathrm{MT}}(s,h)}\mathbb P_{\mathbf B}\left\{\frac{1}{K}\lVert\widetilde{\mathbf B}-\mathbf B^0\rVert_F^2\geq c r_{\mathrm{pool}}\right\}\geq c_{\mathrm F}.
\label{eq:lower-pooled-bound}
\end{equation}
This construction shows why the pooled estimation cost remains even when
the source and target coefficients agree exactly.

Now, we establish the source--target shift component. If $h=0$, then $r_{\mathrm{shift}} = 0$, so, \eqref{eq:lower-pooled-bound} already suffices.
Suppose $h>0$. Fix every source matrix at zero and let the target vary over
a packing with $q\leq s$ active rows. The row norm must satisfy two
requirements. The contrast constraint requires $qa\leq h$, while the
divergence condition requires $n_Tqa^2/\sigma^2$ to be a small multiple of
$qd_{q,K}$. Accordingly, for each $q\in[s]$, take
\begin{equation}
a_{T,q}:=c_{\mathrm{amp}}\left\{\frac hq\wedge\sigma\sqrt{\frac{d_{q,K}}{n_T}}\right\},\qquad \mathbf B_m^0:=\mathbf A_m\in\mathcal M(q,a_{T,q}),\qquad \mathbf B_m^\ell:=\mathbf0\quad(\ell\in[L]).
\label{eq:lower-shift-construction}
\end{equation}
Here we use the same universal $c_{\mathrm{amp}}$ as above, decreased if
necessary so that $c_{\mathrm{amp}}\leq1$. The target has at most $s$ active
rows, and its contrast with each source is
$\lVert\mathbf A_m\rVert_{2,1}=qa_{T,q}\leq c_{\mathrm{amp}}h\leq h$.
Thus every tuple belongs to $\Theta_{\mathrm{MT}}(s,h)$.

The source distributions are now identical across all candidates, so their
terms in \eqref{eq:lower-kl-identity} vanish. Only the target observations
contribute, and \eqref{eq:lower-packing-properties} gives
\newpage
\begin{equation}
D_{\mathrm{KL}}(\mathbb P_{\mathbf B_m}\Vert\mathbb P_{\mathbf B_{m'}})\leq\frac{C_{\mathrm p}n_Tq a_{T,q}^2}{2\sigma^2}\leq\frac{C_{\mathrm p}c_{\mathrm{amp}}^2}{2}q d_{q,K}.
\label{eq:lower-shift-divergence}
\end{equation}
Hence \eqref{eq:lower-fano-consequence} applies, even though the estimator
may use all target and source observations. Substituting $a_{T,q}$ into the
separation $qa_{T,q}^2/K$ yields, for every $q\in[s]$,
\begin{equation}
\inf_{\widetilde{\mathbf B}}\sup_{\mathbf B\in\Theta_{\mathrm{MT}}(s,h)}\mathbb P_{\mathbf B}\left\{\frac{1}{K}\lVert\widetilde{\mathbf B}-\mathbf B^0\rVert_F^2\geq c\left(\frac{h^2}{Kq}\wedge\frac{\sigma^2q d_{q,K}}{Kn_T}\right)\right\}\geq c_{\mathrm F}.
\label{eq:lower-shift-bound-q}
\end{equation}

It remains to choose $q$ so that \eqref{eq:lower-shift-bound-q} gives the
three branches of $r_{\mathrm{shift}}$. Set
$A_T:=\sigma^2d_{s,K}/n_T$. Since $d_{q,K}\geq d_{s,K}$ for $q\leq s$,
\begin{equation}
\frac{h^2}{Kq}\wedge\frac{\sigma^2q d_{q,K}}{Kn_T}\geq\frac1K\left(\frac{h^2}{q}\wedge A_Tq\right).
\label{eq:lower-support-balance}
\end{equation}
On the right, $h^2/q$ decreases with $q$ and $A_Tq$ increases with $q$.
They are equal at $q=h/\sqrt{A_T}$. This suggests balancing the terms when
that value lies between $1$ and $s$, and using an endpoint otherwise.
It is enough to make these choices in \eqref{eq:lower-support-balance};
we do not need the exact maximizer of the expression involving $d_{q,K}$.

If $h\leq\sqrt{A_T}$, choose $q=1$. Then the smaller term in
\eqref{eq:lower-support-balance} is $h^2/K$. In this range it is also the
smallest of $h^2/K$, $h\sqrt{A_T}/K$, and $sA_T/K$.

If $\sqrt{A_T}<h<s\sqrt{A_T}$, choose
$q=\lfloor h/\sqrt{A_T}\rfloor$. This is an integer in $[s]$ and satisfies
$h/(2\sqrt{A_T})\leq q\leq h/\sqrt{A_T}$. Consequently,
\begin{equation}
\frac{h^2}{q}\geq h\sqrt{A_T},\qquad A_Tq\geq\frac12h\sqrt{A_T},
\label{eq:lower-middle-regime}
\end{equation}
so \eqref{eq:lower-support-balance} is at least
$h\sqrt{A_T}/(2K)$. Here $h\sqrt{A_T}/K$ is the smallest of the three
expressions above. This intermediate range is empty when $s=1$.

Finally, if $h\geq s\sqrt{A_T}$, choose $q=s$. Then
$h^2/s\geq sA_T$, so \eqref{eq:lower-support-balance} equals $sA_T/K$ on
the right. This is the smallest of the three expressions in this range.
The adjacent rate expressions agree at the boundary values of $h$.

Combining these choices and using $d_{s,K}=K\omega_{s,K}$ gives
\begin{equation}
\max_{q\in[s]}\left\{\frac{h^2}{Kq}\wedge\frac{\sigma^2q d_{q,K}}{Kn_T}\right\}\geq\frac{1}{2K}\left\{h^2\wedge h\sqrt{A_T}\wedge sA_T\right\}=\frac12r_{\mathrm{shift}}.
\label{eq:lower-shift-optimization}
\end{equation}
The constants in \eqref{eq:lower-shift-bound-q} do not depend on $q$.
We may therefore use the choice corresponding to each fixed $h$ to conclude
\begin{equation}
\inf_{\widetilde{\mathbf B}}\sup_{\mathbf B\in\Theta_{\mathrm{MT}}(s,h)}\mathbb P_{\mathbf B}\left\{\frac{1}{K}\lVert\widetilde{\mathbf B}-\mathbf B^0\rVert_F^2\geq c r_{\mathrm{shift}}\right\}\geq c_{\mathrm F}.
\label{eq:lower-shift-bound}
\end{equation}
\newpage
To combine the two components, fix $(p,K,L,s,n_T,n_S,h)$ and decrease the
constant $c$ if needed so that it is the same in
\eqref{eq:lower-pooled-bound} and \eqref{eq:lower-shift-bound}. Use the pooled
construction when $r_{\mathrm{pool}}\geq r_{\mathrm{shift}}$ and the shift
construction otherwise. Since
$r_{\mathrm{pool}}\vee r_{\mathrm{shift}}\geq
(r_{\mathrm{pool}}+r_{\mathrm{shift}})/2$, this proves
\eqref{eq:lower-probability} with $c_{\mathrm{lb}}=c/2$ and
$c_1=c_{\mathrm F}$. The two components need not be realized by the same
coefficient tuple: the supremum over the parameter class includes both
constructions, and the larger component is at least half their sum.

Finally, the expectation bound follows because the loss is nonnegative.
For every estimator, every coefficient tuple, and every $t>0$,
\begin{equation}
\mathbb E_{\mathbf B}\left[\frac1K\lVert\widetilde{\mathbf B}-\mathbf B^0\rVert_F^2\right]\geq t\mathbb P_{\mathbf B}\left\{\frac1K\lVert\widetilde{\mathbf B}-\mathbf B^0\rVert_F^2\geq t\right\}.
\label{eq:lower-risk-from-probability}
\end{equation}
Take $t=c_{\mathrm{lb}}(r_{\mathrm{pool}}+r_{\mathrm{shift}})$, take the
supremum over coefficient tuples and then the infimum over estimators, and
apply \eqref{eq:lower-probability}. This gives
\eqref{eq:lower-expectation}.
\end{proof}

%% file: appendix_c_safe_tmtl.tex
\section{Target-certified transfer multitask learning}
\label{app:safe-tmtl}

Section~\ref{sec:projection} defines the target confidence set
$\mathcal C_T(\lambda_T,\kappa_T)$ in \eqref{eq:safe-confidence-set} and states
Theorem~\ref{thm:safe-projection} and Proposition~\ref{prop:safe-tmtl-rates}.
This section supplies the supporting lemmas and the proofs. Two properties of
the target set carry the argument. With high probability the set contains the
true coefficient matrix $\mathbf B^0$, so projection onto it cannot increase
Frobenius-norm distance from $\mathbf B^0$ and therefore preserves the
candidate's transfer guarantee. On the same event every matrix in the set has
normalized Frobenius error bounded by the target-only rate, which limits the
error when that guarantee is large. Both properties rest on the target-sample
and calibration conditions of Section~\ref{app:safe-preliminaries}; they
follow from the score bound of Section~\ref{app:safe-concentration} together
with the deterministic inequalities of Section~\ref{app:safe-deterministic}.

The target constraint follows \citet{he2026parameter}, whose TransMission
estimator constrains the target block inside a joint scalar-response
optimization and reports that block. We adapt their confidence set and the
truth-feasibility argument in the proof of their Lemma~15 in Appendix~B.4,
replacing the scalar $\ell_1$ and $\ell_\infty$ norms by the mixed
$\ell_{2,1}$ and $\ell_{2,\infty}$ norms. TMTL reports the weighted estimator
$\widehat{\mathbf W}^F$ in \eqref{fused} or its target-corrected version
$\widehat{\mathbf W}^D$ in \eqref{debiasedestimator}. The former combines the
fitted target block with source blocks that need not lie in the target set,
and the latter adds the correction $\widehat{\mathbf D}$. Thus, constraining
$\widehat{\mathbf B}^0$ alone would not control the reported estimator. We
instead project each reported coefficient matrix onto the target set.

\subsection{Assumptions and notation}
\label{app:safe-preliminaries}

Let $S=\mathcal S(\mathbf B^0)$, and let $\alpha_0$ and $\varphi_0$ be the
target curvature constants in Lemma~\ref{lemma_rscrsm}. Subscripts $S$ and
$S^c$ below restrict a matrix to the indicated rows. The target-domain least
squares loss, written in the matrix notation of the main text, is
\begin{equation}
\mathcal L_T(\mathbf B):=\frac{1}{2n_TK}\lVert\mathbf Y^0-\mathbf X^{0}\mathbf B\rVert_F^2,\qquad \nabla\mathcal L_T(\mathbf B)=\frac{1}{n_TK}(\mathbf X^{0})^\top(\mathbf X^{0}\mathbf B-\mathbf Y^0),
\label{eq:safe-target-loss}
\end{equation}
so that the first restriction in \eqref{eq:safe-confidence-set} reads
$\lVert\nabla\mathcal L_T(\mathbf B)\rVert_{2,\infty}\leq\lambda_T$. The target-domain
term of the loss $\mathcal L$ in Section~\ref{app:proof-preliminaries} equals
$(n_T/N)\mathcal L_T$, the two differing by that factor because $\mathcal L$
is normalized by $NK$ and $\mathcal L_T$ by $n_TK$; here the coefficient
matrix replaces the target and contrast blocks. Under vectorization, $\mathcal L_T$ is
the loss $\widetilde{\mathcal L}$ of Section~\ref{app:debiasing-auxiliary},
which carries the same $n_TK$ normalization, reparametrized by the coefficient
itself rather than by a correction to $\widehat w^F$. Define the target
score event

\begin{equation}
\mathcal E_{T,\mathrm{score}}(\lambda_T):=\left\{\lVert\nabla\mathcal L_T(\mathbf B^0)\rVert_{2,\infty}\leq\frac{\lambda_T}{2}\right\},
\label{eq:safe-score-event}
\end{equation}
and let $\mathcal E_{T,\mathrm{rsc}}$ denote the target-domain component of
$\mathcal E_{\mathrm{rsc}}$. On this second event, the lower curvature bound
of Lemma~\ref{lemma_rscrsm}, written in matrix notation, is
\begin{equation}
\frac{1}{n_TK}\lVert\mathbf X^0\Delta\rVert_F^2\geq\frac{\alpha_0}{K}\lVert\Delta\rVert_F^2-\frac{\varphi_0\log p}{Kn_T}\lVert\Delta\rVert_{2,1}^2\qquad\text{for every }\Delta\in\mathbb R^{p\times K}.
\label{eq:safe-target-curvature}
\end{equation}
The last term in \eqref{eq:safe-target-curvature} is a tolerance for high
dimensionality, which the sparsity conditions imposed below make small
relative to the first term for the errors under consideration. The score
event \eqref{eq:safe-score-event}, meanwhile, ensures that the random gradient
at the truth is smaller than the penalty scale.

\subsection{High-probability ingredients}
\label{app:safe-concentration}

The following bound uses only the target observations and imposes no condition
on source sample sizes.

\begin{lemma}[Target score concentration]
\label{lemma:safe-target-score}
Under Assumptions~\ref{assumption_design}--\ref{assumption_error}, fix $c>2$.
There are constants $c_T,C_1,C_2,C_3>0$ such that, if
\begin{equation}
\lambda_T\geq c_T\sqrt{\frac{\psi_K}{Kn_T}},
\label{eq:safe-score-tuning}
\end{equation}
then
\begin{equation}
\mathbb P\{\mathcal E_{T,\mathrm{score}}(\lambda_T)\}\geq1-C_1p\exp(-C_2n_T)-C_3p^{1-c}.
\label{eq:safe-score-probability}
\end{equation}
The constants depend only on the constants in the target-design and noise
assumptions and on $c$.
\end{lemma}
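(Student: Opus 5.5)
This is essentially the target-only specialization of Lemma~\ref{lemma_score_bounds}, and I would reuse its argument directly. First observe that $\nabla\mathcal L_T(\mathbf B^0)=-(n_TK)^{-1}(\mathbf X^0)^\top\mathbf E^0$. Its $j$th row is therefore $\mathbf Z_j:=-(n_TK)^{-1}\sum_{i}X^0_{ij}\boldsymbol\epsilon_i^0\in\mathbb R^K$. The plan is to condition on $\mathbf X^0$ and control these $p$ Gaussian rows with a chi-square tail, then remove the conditioning with a column-norm event.

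\textbf{Step 1 (column norms).} Define $\mathcal E^0_{\rm col}=\{\max_{j\in[p]}n_T^{-1}\sum_i(X^0_{ij})^2\le C_{\rm col}\}$. Each $(X^0_{ij})^2$ is sub-exponential with mean $\Sigma^0_{jj}\le C_X$ and $\psi_1$-norm at most $C\kappa_X^2C_X$, by Assumption~\ref{assumption_design}. Bernstein's inequality plus a union bound over the $p$ columns then gives $\mathbb P\{(\mathcal E^0_{\rm col})^c\}\le C_1p\exp(-C_2n_T)$ once $C_{\rm col}$ is large, say $C_{\rm col}=2C_X+1$.

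Under the common design there is one column per predictor, not per task. That is why the failure probability carries $p$ rather than the $Kp$ appearing in Lemma~\ref{lemma_score_bounds}, which matches \eqref{eq:safe-score-probability}.

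\textbf{Step 2 (conditional Gaussian tail).} By Assumption~\ref{assumption_error}, conditionally on $\mathbf X^0$ the coordinates $Z_{jk}$ are independent across $k$, centered Gaussian, with variance $\sigma_0^2\sum_i(X^0_{ij})^2/(n_TK)^2$. On $\mathcal E^0_{\rm col}$ this variance is at most $\nu^2:=\sigma_{\max}^2C_{\rm col}/(n_TK^2)$. Hence $\lVert\mathbf Z_j\rVert_2^2/\nu^2$ is stochastically dominated by a $\chi^2_K$ variable.

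Apply \eqref{eq:score-gaussian-tail}, i.e.\ Laurent--Massart with $x=c\log p$, together with $K+2\sqrt{Kc\log p}+2c\log p\le C_c(K+\log p)=C_cK\psi_K$. This gives $\mathbb P\{\lVert\mathbf Z_j\rVert_2>\nu\sqrt{C_cK\psi_K}\mid\mathbf X^0\}\le p^{-c}$ on $\mathcal E^0_{\rm col}$. The threshold evaluates to $\nu\sqrt{C_cK\psi_K}=\sigma_{\max}\sqrt{C_{\rm col}C_c}\sqrt{\psi_K/(Kn_T)}$.

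\textbf{Step 3 (assemble).} Set $c_T:=2\sigma_{\max}\sqrt{C_{\rm col}C_c}$. Then for any $\lambda_T$ satisfying \eqref{eq:safe-score-tuning}, the threshold is at most $\lambda_T/2$. A union bound over $j\in[p]$ gives conditional failure probability at most $p^{1-c}$ on $\mathcal E^0_{\rm col}$. Integrating over $\mathbf X^0$, which is valid because the errors are independent of the design, and adding Step 1 yields \eqref{eq:safe-score-probability}.

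The restriction $c>2$ is not needed for this bound; it only keeps the $p^{1-c}$ term summable in later uses, so any fixed $c>1$ would also do here. The only point needing care is the tail constant: $C_c=\max\{2,3c\}$ must absorb the cross term $2\sqrt{Kc\log p}\le K+c\log p$ uniformly in $K$ and $p$. Since $K+2\sqrt{Kc\log p}+2c\log p\le 2K+3c\log p$, this holds. Otherwise the argument is routine, and no condition on source sample sizes enters, because only target observations appear.
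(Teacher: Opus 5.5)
Your proposal is correct and follows the paper's proof essentially step for step. You use a column-norm event with a union bound over the $p$ shared columns, which is why the failure term is $p$ rather than $Kp$. You then bound the conditional Gaussian score rows with coordinate variances $\lesssim 1/(n_TK^2)$, apply the Laurent--Massart tail at $t=c\log p$ with $K\psi_K=K+\log p$ to reach the $\sqrt{\psi_K/(Kn_T)}$ threshold, and finish with a union bound over rows before removing the conditioning via design--noise independence.
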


\begin{proof}
At the truth, $\nabla\mathcal L_T(\mathbf B^0)=-(n_TK)^{-1}(\mathbf X^{0})^\top\mathbf E^0$. For $j\in[p]$,
let $\mathbf X^0_{:,j}$ denote the $j$th column of $\mathbf X^0$ and set
\begin{equation}
Z_j:=\frac{1}{n_TK}(\mathbf X^0_{:,j})^\top\mathbf E^0\in\mathbb R^K.
\label{eq:safe-score-row}
\end{equation}
It suffices to control $\max_{j\in[p]}\lVert Z_j\rVert_2$. Define
\begin{equation}
\mathcal E_{T,\mathrm{col}}:=\left\{\max_{j\in[p]}\frac{1}{n_T}\lVert\mathbf X^0_{:,j}\rVert_2^2\leq C_{\mathrm{col}}\right\}.
\label{eq:safe-column-event}
\end{equation}
The design assumptions bound the second moment of each predictor uniformly.
Its squared observations have sub-exponential tails, so concentration of
their sample mean gives an exponentially small failure probability for each
column when $C_{\mathrm{col}}$ is sufficiently large. This is the target-only
version of the argument for \eqref{eq:score-column-probability}. Here the
design is shared across tasks, so there are only $p$ column events to combine.
A union bound gives
\begin{equation}
\mathbb P(\mathcal E_{T,\mathrm{col}}^c)\leq C_1p\exp(-C_2n_T).
\label{eq:safe-column-probability}
\end{equation}
\newpage
Conditional on the target design, each coordinate of $Z_j$ in
\eqref{eq:safe-score-row} is a linear combination of the Gaussian errors for
one task. The errors are independent across tasks and have common target
variance $\sigma_0^2$, so
\begin{equation}
\operatorname{Cov}(Z_j\mid\mathbf X^0)=\frac{\sigma_0^2\lVert\mathbf X^0_{:,j}\rVert_2^2}{n_T^2K^2}\mathbf I_K.
\label{eq:safe-score-covariance}
\end{equation}
By \eqref{eq:safe-score-covariance} its coordinates are independent and mean
zero, and on $\mathcal E_{T,\mathrm{col}}$ they satisfy
\begin{equation}
\max_{k\in[K]}\operatorname{Var}(Z_{jk}\mid\mathbf X^{0})\leq\frac{C}{n_TK^2}.
\label{eq:safe-score-variance}
\end{equation}
The chi-square tail bound of \citet[Lemma~1]{laurent2000adaptive}, applied
using \eqref{eq:safe-score-variance}, now gives, for $t>0$,
\begin{equation}
\mathbb P\left\{\lVert Z_j\rVert_2^2>\frac{C}{n_TK^2}\left(K+2\sqrt{Kt}+2t\right)\mathrel{\Big|}\mathbf X^0\right\}\leq e^{-t}\quad\text{on }\mathcal E_{T,\mathrm{col}}.
\label{eq:safe-score-chi-square}
\end{equation}
Take $t=c\log p$. Since $K\psi_K=K+\log p$ and
$2\sqrt{Kc\log p}\leq K+c\log p$, the threshold in
\eqref{eq:safe-score-chi-square} is at most a constant multiple of
$\psi_K/(Kn_T)$. Taking square roots gives
\begin{equation}
\mathbb P\left\{\lVert Z_j\rVert_2>C\sqrt{\frac{\psi_K}{Kn_T}}\mathrel{\Big|}\mathbf X^{0}\right\}\leq p^{-c}\quad\text{on }\mathcal E_{T,\mathrm{col}}.
\label{eq:safe-score-row-probability}
\end{equation}
Choose $c_T$ in \eqref{eq:safe-score-tuning} so that the threshold in
\eqref{eq:safe-score-row-probability} is at most $\lambda_T/2$.
A union bound over $j\in[p]$ gives conditional failure probability at most
$p^{1-c}$. This step does not require the score rows to be independent.
Adding the probability of $\mathcal E_{T,\mathrm{col}}^c$ from
\eqref{eq:safe-column-probability} proves \eqref{eq:safe-score-probability}.
When $p=1$, the displayed probability bound is valid trivially after taking
$C_3\geq1$, so the use of $t>0$ imposes no additional restriction.
\end{proof}

\subsection{Deterministic inequalities for the target set}
\label{app:safe-deterministic}

The two lemmas of this section hold on
$\mathcal E_{T,\mathrm{score}}(\lambda_T)\cap\mathcal E_{T,\mathrm{rsc}}$ and
involve no further randomness. The first bounds the pilot error in both the
Frobenius norm and the sum of row norms; the latter bound supplies an
allowance $\kappa_T$ large enough to cover any downward error in the pilot's
coefficient norm, which together with \eqref{eq:safe-score-event} places
$\mathbf B^0$ in $\mathcal C_T(\lambda_T,\kappa_T)$.

\begin{lemma}[Target pilot and truth feasibility]
\label{lemma:safe-target-pilot}
Suppose that $|S|\leq s$ and
$16\varphi_0s\log p/n_T\leq\alpha_0/2$. On
$\mathcal E_{T,\mathrm{score}}(\lambda_T)\cap\mathcal E_{T,\mathrm{rsc}}$,
the pilot in \eqref{eq:safe-target-pilot} satisfies
\begin{equation}
\lVert\widehat{\mathbf B}^{\mathrm{tar}}-\mathbf B^0\rVert_F\leq\frac{6K\sqrt{s}\lambda_T}{\alpha_0},\qquad \lVert\widehat{\mathbf B}^{\mathrm{tar}}-\mathbf B^0\rVert_{2,1}\leq\frac{24Ks\lambda_T}{\alpha_0}.
\label{eq:safe-pilot-bounds}
\end{equation}
Consequently, if
\begin{equation}
\kappa_T:=\frac{24Ks\lambda_T}{\alpha_0},
\label{eq:safe-kappa}
\end{equation}
then $\mathbf B^0\in\mathcal C_T(\lambda_T,\kappa_T)$.
\end{lemma}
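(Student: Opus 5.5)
This is the standard group-lasso argument, run on the target sample only. Let $\widehat\Delta:=\widehat{\mathbf B}^{\mathrm{tar}}-\mathbf B^0$.

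First, the basic inequality. Optimality of the pilot in \eqref{eq:safe-target-pilot} gives $\mathcal L_T(\mathbf B^0+\widehat\Delta)+\lambda_T\lVert\mathbf B^0+\widehat\Delta\rVert_{2,1}\le\mathcal L_T(\mathbf B^0)+\lambda_T\lVert\mathbf B^0\rVert_{2,1}$. Because the loss is quadratic, we can expand exactly:
\[
\mathcal L_T(\mathbf B^0+\widehat\Delta)-\mathcal L_T(\mathbf B^0)=\langle\nabla\mathcal L_T(\mathbf B^0),\widehat\Delta\rangle+\tfrac{1}{2n_TK}\lVert\mathbf X^0\widehat\Delta\rVert_F^2 .
\]
H\"older's inequality for the dual pair $(\ell_{2,\infty},\ell_{2,1})$, together with the score event \eqref{eq:safe-score-event}, bounds the linear term below by $-(\lambda_T/2)\lVert\widehat\Delta\rVert_{2,1}$. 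Decomposability of the row norm, using $\mathbf B^0_{S^c}=\mathbf 0$, gives $\lVert\mathbf B^0+\widehat\Delta\rVert_{2,1}-\lVert\mathbf B^0\rVert_{2,1}\ge\lVert\widehat\Delta_{S^c}\rVert_{2,1}-\lVert\widehat\Delta_S\rVert_{2,1}$. Combining these yields
\[
\tfrac{1}{2n_TK}\lVert\mathbf X^0\widehat\Delta\rVert_F^2+\tfrac{\lambda_T}{2}\lVert\widehat\Delta_{S^c}\rVert_{2,1}\le\tfrac{3\lambda_T}{2}\lVert\widehat\Delta_S\rVert_{2,1}.
\]
Two consequences follow. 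The cone $\lVert\widehat\Delta_{S^c}\rVert_{2,1}\le3\lVert\widehat\Delta_S\rVert_{2,1}$ holds. It gives $\lVert\widehat\Delta\rVert_{2,1}\le4\lVert\widehat\Delta_S\rVert_{2,1}\le4\sqrt s\lVert\widehat\Delta\rVert_F$, and therefore $\lVert\widehat\Delta\rVert_{2,1}^2\le16s\lVert\widehat\Delta\rVert_F^2$.

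Second, the curvature step. I would insert this into the target curvature bound \eqref{eq:safe-target-curvature}:
\[
\tfrac{1}{n_TK}\lVert\mathbf X^0\widehat\Delta\rVert_F^2\ge\tfrac{1}{K}\Bigl(\alpha_0-\tfrac{16\varphi_0s\log p}{n_T}\Bigr)\lVert\widehat\Delta\rVert_F^2\ge\tfrac{\alpha_0}{2K}\lVert\widehat\Delta\rVert_F^2 ,
\]
where the last step uses the hypothesis $16\varphi_0s\log p/n_T\le\alpha_0/2$. Feeding this into the basic inequality, after dropping the off-support term, gives $\frac{\alpha_0}{4K}\lVert\widehat\Delta\rVert_F^2\le\frac{3\lambda_T}{2}\sqrt s\lVert\widehat\Delta\rVert_F$. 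Hence $\lVert\widehat\Delta\rVert_F\le6K\sqrt s\lambda_T/\alpha_0$. The cone bound then gives $\lVert\widehat\Delta\rVert_{2,1}\le4\sqrt s\cdot6K\sqrt s\lambda_T/\alpha_0=24Ks\lambda_T/\alpha_0$. These are exactly the constants in \eqref{eq:safe-pilot-bounds}, which suggests this is the intended route. The main point needing care is the factor-of-$K$ bookkeeping between the $1/(n_TK)$ normalization and the $\alpha_0/K$ curvature, so that $K$ lands in the numerators as stated.

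Third, feasibility of the truth. The first constraint in \eqref{eq:safe-confidence-set} holds at $\mathbf B^0$ because the score event gives $\lVert\nabla\mathcal L_T(\mathbf B^0)\rVert_{2,\infty}\le\lambda_T/2\le\lambda_T$. For the second constraint, the triangle inequality gives
\[
\lVert\mathbf B^0\rVert_{2,1}\le\lVert\widehat{\mathbf B}^{\mathrm{tar}}\rVert_{2,1}+\lVert\widehat\Delta\rVert_{2,1}\le\lVert\widehat{\mathbf B}^{\mathrm{tar}}\rVert_{2,1}+\kappa_T
\]
under the choice \eqref{eq:safe-kappa}. Hence $\mathbf B^0\in\mathcal C_T(\lambda_T,\kappa_T)$. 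No step is a real obstacle; the only delicate point is checking that the sparsity hypothesis absorbs the mixed-norm tolerance with room to spare.
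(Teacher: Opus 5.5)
Your proposal is correct and follows the paper's proof step for step: the same basic inequality and cone $\lVert\widehat\Delta_{S^c}\rVert_{2,1}\le3\lVert\widehat\Delta_S\rVert_{2,1}$, the same absorption of the curvature tolerance through $16\varphi_0s\log p/n_T\le\alpha_0/2$, and the same triangle-inequality argument for feasibility of $\mathbf B^0$. The only detail the paper adds is an explicit note that the case $\widehat\Delta=\mathbf 0$, which includes $s=0$, is trivial before you divide by $\lVert\widehat\Delta\rVert_F$.
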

\newpage
\begin{proof}
Let $\Delta^{\mathrm{tar}}=\widehat{\mathbf B}^{\mathrm{tar}}-\mathbf B^0$.
By optimality of the pilot, its penalized loss is no larger than the
penalized loss at $\mathbf B^0$. Expanding the quadratic loss around the
truth separates a nonnegative quadratic term from the inner product with
$\nabla\mathcal L_T(\mathbf B^0)$, whose absolute value is at most
$\lVert\nabla\mathcal L_T(\mathbf B^0)\rVert_{2,\infty}
\lVert\Delta^{\mathrm{tar}}\rVert_{2,1}$ by duality of the row norms. On
\eqref{eq:safe-score-event}, it follows that
\begin{equation}
\frac{1}{2n_TK}\lVert\mathbf X^{0}\Delta^{\mathrm{tar}}\rVert_F^2+\lambda_T\left(\lVert\mathbf B^0+\Delta^{\mathrm{tar}}\rVert_{2,1}-\lVert\mathbf B^0\rVert_{2,1}\right)\leq\frac{\lambda_T}{2}\lVert\Delta^{\mathrm{tar}}\rVert_{2,1}.
\label{eq:safe-pilot-basic}
\end{equation}
To separate errors on active and inactive rows, use
$\mathbf B^0_{S^c}=0$ and the triangle inequality on the rows in $S$:
\begin{equation}
\lVert\mathbf B^0+\Delta^{\mathrm{tar}}\rVert_{2,1}-\lVert\mathbf B^0\rVert_{2,1}\geq\lVert\Delta^{\mathrm{tar}}_{S^c}\rVert_{2,1}-\lVert\Delta^{\mathrm{tar}}_S\rVert_{2,1}.
\label{eq:safe-pilot-decomposition}
\end{equation}
Substitute \eqref{eq:safe-pilot-decomposition} into
\eqref{eq:safe-pilot-basic} and collect the two row norms to obtain
\begin{equation}
\frac{1}{2n_TK}\lVert\mathbf X^0\Delta^{\mathrm{tar}}\rVert_F^2+\frac{\lambda_T}{2}\lVert\Delta^{\mathrm{tar}}_{S^c}\rVert_{2,1}\leq\frac{3\lambda_T}{2}\lVert\Delta^{\mathrm{tar}}_S\rVert_{2,1}.
\label{eq:safe-pilot-split}
\end{equation}
Dropping the nonnegative quadratic term in \eqref{eq:safe-pilot-split} and
applying Cauchy--Schwarz to $\lVert\Delta^{\mathrm{tar}}_S\rVert_{2,1}$ yield
\begin{equation}
\lVert\Delta^{\mathrm{tar}}_{S^c}\rVert_{2,1}\leq3\lVert\Delta^{\mathrm{tar}}_S\rVert_{2,1},\qquad \lVert\Delta^{\mathrm{tar}}\rVert_{2,1}\leq4\sqrt{s}\lVert\Delta^{\mathrm{tar}}\rVert_F.
\label{eq:safe-pilot-cone}
\end{equation}
Substituting the second inequality in \eqref{eq:safe-pilot-cone} into
\eqref{eq:safe-target-curvature} bounds the tolerance term there by a multiple
of $\lVert\Delta^{\mathrm{tar}}\rVert_F^2$ and gives
\begin{equation}
\frac{1}{n_TK}\lVert\mathbf X^{0}\Delta^{\mathrm{tar}}\rVert_F^2\geq\frac{1}{K}\left(\alpha_0-\frac{16\varphi_0s\log p}{n_T}\right)\lVert\Delta^{\mathrm{tar}}\rVert_F^2\geq\frac{\alpha_0}{2K}\lVert\Delta^{\mathrm{tar}}\rVert_F^2.
\label{eq:safe-pilot-curvature}
\end{equation}
The last inequality in \eqref{eq:safe-pilot-curvature} uses the stated target
sparsity condition. In the other direction, dropping the inactive-row term in
\eqref{eq:safe-pilot-split} and multiplying by two gives
\begin{equation}
\frac{1}{n_TK}\lVert\mathbf X^{0}\Delta^{\mathrm{tar}}\rVert_F^2\leq3\lambda_T\lVert\Delta^{\mathrm{tar}}_S\rVert_{2,1}\leq3\sqrt{s}\lambda_T\lVert\Delta^{\mathrm{tar}}\rVert_F.
\label{eq:safe-pilot-upper}
\end{equation}
If $\Delta^{\mathrm{tar}}=0$, the bounds in \eqref{eq:safe-pilot-bounds} are
immediate. Otherwise, combining \eqref{eq:safe-pilot-curvature} with
\eqref{eq:safe-pilot-upper} and dividing by
$\lVert\Delta^{\mathrm{tar}}\rVert_F$ proves the Frobenius bound in
\eqref{eq:safe-pilot-bounds}, and multiplication by $4\sqrt{s}$ through
\eqref{eq:safe-pilot-cone} gives the mixed-norm bound. This reasoning also
covers $s=0$, because \eqref{eq:safe-pilot-cone} then forces
$\Delta^{\mathrm{tar}}=0$.

The score constraint for $\mathbf B^0$ follows from
\eqref{eq:safe-score-event}. By \eqref{eq:safe-pilot-bounds} and
\eqref{eq:safe-kappa},
\begin{equation}
\lVert\mathbf B^0\rVert_{2,1}\leq\lVert\widehat{\mathbf B}^{\mathrm{tar}}\rVert_{2,1}+\lVert\widehat{\mathbf B}^{\mathrm{tar}}-\mathbf B^0\rVert_{2,1}\leq\lVert\widehat{\mathbf B}^{\mathrm{tar}}\rVert_{2,1}+\kappa_T.
\label{eq:safe-truth-radius}
\end{equation}
Thus both restrictions in \eqref{eq:safe-confidence-set} hold for
$\mathbf B^0$.
\end{proof}

Truth feasibility alone does not bound the error of every point of the set.
The second lemma adds the target curvature \eqref{eq:safe-target-curvature} to
the two restrictions in \eqref{eq:safe-confidence-set} and bounds the radius of
the whole set. Its conclusion holds simultaneously for every feasible matrix,
including matrices chosen using the same target observations.
\newpage
\begin{lemma}[Target confidence radius]
\label{lemma:safe-confidence-radius}
Assume the conditions of Lemma~\ref{lemma:safe-target-pilot}, set $\kappa_T$ as
in \eqref{eq:safe-kappa}, and suppose that
$s\log p/n_T\leq c_{\mathrm{sp}}$, where
$c_{\mathrm{sp}}>0$ is sufficiently small, depends only on
$\alpha_0$ and $\varphi_0$, and satisfies
$16\varphi_0c_{\mathrm{sp}}\leq\alpha_0/2$. On
$\mathcal E_{T,\mathrm{score}}(\lambda_T)\cap\mathcal E_{T,\mathrm{rsc}}$,
every $\mathbf B\in\mathcal C_T(\lambda_T,\kappa_T)$ satisfies
\begin{equation}
\lVert\mathbf B-\mathbf B^0\rVert_F\leq C_{\mathrm{safe}}K\sqrt{s}\lambda_T,
\label{eq:safe-confidence-radius}
\end{equation}
where $C_{\mathrm{safe}}$ depends only on $\alpha_0$ and $\varphi_0$.
\end{lemma}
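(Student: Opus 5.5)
Fix $\mathbf B\in\mathcal C_T(\lambda_T,\kappa_T)$ and set $\Delta:=\mathbf B-\mathbf B^0$. The plan is to combine a prediction-error bound from the two Dantzig-type score constraints with a cone-type restriction from the norm constraint, and then invoke the target curvature \eqref{eq:safe-target-curvature}. The argument is deterministic on the event, so it holds uniformly over the set.

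\emph{Prediction bound.} Since $\nabla\mathcal L_T(\mathbf B)-\nabla\mathcal L_T(\mathbf B^0)=(n_TK)^{-1}(\mathbf X^0)^\top\mathbf X^0\Delta$, the feasibility of $\mathbf B$ and the score event \eqref{eq:safe-score-event} give
\[
\bigl\lVert (n_TK)^{-1}(\mathbf X^0)^\top\mathbf X^0\Delta\bigr\rVert_{2,\infty}\le \tfrac32\lambda_T .
\]
Duality of the mixed norms then yields
\[
\frac1{n_TK}\lVert\mathbf X^0\Delta\rVert_F^2=\bigl\langle\Delta,(n_TK)^{-1}(\mathbf X^0)^\top\mathbf X^0\Delta\bigr\rangle\le\tfrac32\lambda_T\lVert\Delta\rVert_{2,1}.
\]

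\emph{Cone-type restriction.} From the norm constraint and Lemma~\ref{lemma:safe-target-pilot} we have
\[
\lVert\mathbf B\rVert_{2,1}\le\lVert\widehat{\mathbf B}^{\mathrm{tar}}\rVert_{2,1}+\kappa_T\le\lVert\mathbf B^0\rVert_{2,1}+2\kappa_T .
\]
Decomposability with $\mathbf B^0_{S^c}=0$ gives $\lVert\mathbf B\rVert_{2,1}\ge\lVert\mathbf B^0\rVert_{2,1}-\lVert\Delta_S\rVert_{2,1}+\lVert\Delta_{S^c}\rVert_{2,1}$. Combining the two,
\[
\lVert\Delta_{S^c}\rVert_{2,1}\le\lVert\Delta_S\rVert_{2,1}+2\kappa_T,
\qquad\text{so}\qquad
\lVert\Delta\rVert_{2,1}\le2\sqrt s\lVert\Delta\rVert_F+2\kappa_T .
\]
Here $\kappa_T=24Ks\lambda_T/\alpha_0$ is an additive slack of the order of the target radius times $\sqrt s$. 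This is exactly the scale we want, namely $\sqrt s\cdot K\sqrt s\lambda_T$.

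\emph{Curvature step.} Write $x:=\lVert\Delta\rVert_F$ and $r:=K\sqrt s\lambda_T$, so that $\kappa_T\asymp\sqrt s\,r$. Insert both bounds into \eqref{eq:safe-target-curvature}:
\[
\frac{\alpha_0}{K}x^2\le\tfrac32\lambda_T\bigl(2\sqrt s x+2\kappa_T\bigr)+\frac{\varphi_0\log p}{Kn_T}\bigl(2\sqrt s x+2\kappa_T\bigr)^2 .
\]
For the tolerance term, use $(a+b)^2\le2a^2+2b^2$. The $x^2$ part is $8\varphi_0 s\log p/(Kn_T)\,x^2\le\alpha_0x^2/(4K)$ by the condition $16\varphi_0c_{\mathrm{sp}}\le\alpha_0/2$, so it is absorbed into the left side. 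The remaining $\kappa_T^2$ part is bounded by $C(s\log p/n_T)\,r^2/K\le Cc_{\mathrm{sp}}r^2/K$. Multiply through by $K$. The linear term becomes $3K\sqrt s\lambda_T\,x=3rx$, and the constant terms are $3K\lambda_T\kappa_T\lesssim r^2$ plus the $c_{\mathrm{sp}}r^2$ remainder. We are left with a quadratic inequality
\[
\tfrac34\alpha_0x^2\le3rx+Cr^2 ,
\]
which gives $x\le C_{\mathrm{safe}}r$. This is \eqref{eq:safe-confidence-radius}, with $C_{\mathrm{safe}}$ depending only on $\alpha_0$ and $\varphi_0$.

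The main obstacle is the absence of a genuine cone: $\Delta$ satisfies only a cone condition with additive slack $2\kappa_T$. The key check is that this slack enters both the prediction bound and the curvature tolerance at the order $r^2$, and not larger. This is where the calibration $\kappa_T\asymp Ks\lambda_T$ and the smallness of $s\log p/n_T$ are both used. The remaining constants are routine bookkeeping.
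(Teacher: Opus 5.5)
Your proposal is correct and follows the paper's proof essentially step for step: the $2\kappa_T$-slack cone from the norm constraint and the pilot bound, the $\tfrac32\lambda_T\lVert\Delta\rVert_{2,1}$ prediction bound from the two score constraints via mixed-norm duality, and the curvature step with $(a+b)^2\le2a^2+2b^2$, which leads to a quadratic inequality in $\lVert\Delta\rVert_F$ at scale $K\sqrt{s}\lambda_T$. The only cosmetic difference is that you solve that quadratic without dividing by $Ks\lambda_T^2$, so the case $s=0$, which the paper treats separately, is covered automatically.
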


\begin{proof}
Fix $\mathbf B\in\mathcal C_T(\lambda_T,\kappa_T)$ and set
$\Delta=\mathbf B-\mathbf B^0$. If $s=0$, Lemma~\ref{lemma:safe-target-pilot}
gives $\widehat{\mathbf B}^{\mathrm{tar}}=\mathbf B^0=\mathbf0$, while
\eqref{eq:safe-kappa} gives $\kappa_T=0$. The mixed-norm restriction in
\eqref{eq:safe-confidence-set} then forces $\mathbf B=\mathbf0$, so
\eqref{eq:safe-confidence-radius} is immediate. Hence, suppose $s\geq1$.

The mixed-norm restriction in \eqref{eq:safe-confidence-set}, combined with
\eqref{eq:safe-pilot-bounds} and \eqref{eq:safe-kappa}, gives
\begin{equation}
\lVert\mathbf B\rVert_{2,1}\leq\lVert\widehat{\mathbf B}^{\mathrm{tar}}\rVert_{2,1}+\kappa_T\leq\lVert\mathbf B^0\rVert_{2,1}+2\kappa_T.
\label{eq:safe-feasible-norm}
\end{equation}
One copy of $\kappa_T$ in \eqref{eq:safe-feasible-norm} comes from the
definition of the set, and the other from the pilot's possible overestimation
of the true norm. Conversely, rowwise triangle inequalities give
$\lVert\mathbf B^0+\Delta\rVert_{2,1}\geq
\lVert\mathbf B^0\rVert_{2,1}-\lVert\Delta_S\rVert_{2,1}
+\lVert\Delta_{S^c}\rVert_{2,1}$. Combining this inequality with
\eqref{eq:safe-feasible-norm}, then using
$\lVert\Delta_S\rVert_{2,1}\leq\sqrt{s}\lVert\Delta\rVert_F$, yields
\begin{equation}
\lVert\Delta_{S^c}\rVert_{2,1}\leq\lVert\Delta_S\rVert_{2,1}+2\kappa_T,\qquad \lVert\Delta\rVert_{2,1}\leq2\sqrt{s}\lVert\Delta\rVert_F+2\kappa_T.
\label{eq:safe-feasible-cone}
\end{equation}
Inequality \eqref{eq:safe-feasible-cone} is an approximate version of the pilot
cone inequality \eqref{eq:safe-pilot-cone}: the feasible matrix may have
additional error on inactive rows, but that error is limited by $2\kappa_T$.
Turning to the score restriction, the loss is quadratic, so its gradient
difference equals $(n_TK)^{-1}(\mathbf X^0)^\top\mathbf X^0\Delta$, while the
score norms at $\mathbf B$ and $\mathbf B^0$ are at most $\lambda_T$ and
$\lambda_T/2$, respectively. Taking the inner product with $\Delta$ and
applying duality therefore gives
\begin{equation}
\frac{1}{n_TK}\lVert\mathbf X^{0}\Delta\rVert_F^2=\left\langle\nabla\mathcal L_T(\mathbf B)-\nabla\mathcal L_T(\mathbf B^0),\Delta\right\rangle\leq\frac{3\lambda_T}{2}\lVert\Delta\rVert_{2,1}.
\label{eq:safe-gradient-upper}
\end{equation}
The curvature bound \eqref{eq:safe-target-curvature} controls the same
quantity from below. To handle its tolerance term, square the second
inequality in \eqref{eq:safe-feasible-cone} and use
$(u+v)^2\leq2u^2+2v^2$ to get
$\lVert\Delta\rVert_{2,1}^2\leq
8s\lVert\Delta\rVert_F^2+8\kappa_T^2$. Substitution gives
\begin{equation}
\frac{1}{n_TK}\lVert\mathbf X^{0}\Delta\rVert_F^2\geq\left(\frac{\alpha_0}{K}-\frac{8\varphi_0s\log p}{Kn_T}\right)\lVert\Delta\rVert_F^2-\frac{8\varphi_0\log p}{Kn_T}\kappa_T^2.
\label{eq:safe-gradient-lower}
\end{equation}
The condition $s\log p/n_T\leq c_{\mathrm{sp}}$ ensures that the
coefficient of $\lVert\Delta\rVert_F^2$ in
\eqref{eq:safe-gradient-lower} is at least $\alpha_0/(2K)$.
Combining this lower bound with \eqref{eq:safe-gradient-upper}, and using
\eqref{eq:safe-feasible-cone} on its right-hand side, yields
\begin{equation}
\frac{\alpha_0}{2K}\lVert\Delta\rVert_F^2\leq3\sqrt{s}\lambda_T\lVert\Delta\rVert_F+3\lambda_T\kappa_T+\frac{8\varphi_0\log p}{Kn_T}\kappa_T^2.
\label{eq:safe-radius-quadratic}
\end{equation}
The scale suggested by \eqref{eq:safe-radius-quadratic} is
$K\sqrt{s}\lambda_T$, the same scale as the pilot error in
\eqref{eq:safe-pilot-bounds}. To verify that its multiplier is bounded, set
$r=\lVert\Delta\rVert_F/(K\sqrt{s}\lambda_T)$. Substitution of
\eqref{eq:safe-kappa} into \eqref{eq:safe-radius-quadratic} and division by
$Ks\lambda_T^2$ give
\begin{equation}
\frac{\alpha_0}{2}r^2\leq3r+\frac{72}{\alpha_0}+\frac{4608\varphi_0}{\alpha_0^2}\frac{s\log p}{n_T}.
\label{eq:safe-radius-normalized}
\end{equation}
Define
\newpage
\begin{equation}
A_{\mathrm{safe}}:=\frac{72}{\alpha_0}+\frac{4608\varphi_0c_{\mathrm{sp}}}{\alpha_0^2},\qquad C_{\mathrm{safe}}:=\frac{3+\sqrt{9+2\alpha_0A_{\mathrm{safe}}}}{\alpha_0}.
\label{eq:safe-radius-constants}
\end{equation}
Since $s\log p/n_T\leq c_{\mathrm{sp}}$, solving
\eqref{eq:safe-radius-normalized} with $C_{\mathrm{safe}}$ as in
\eqref{eq:safe-radius-constants} gives $r\leq C_{\mathrm{safe}}$, which is
\eqref{eq:safe-confidence-radius}. All inequalities above hold for every
feasible $\mathbf B$ on the same score and curvature event. The result
therefore controls the whole random set, rather than one matrix fixed before
seeing the data.
\end{proof}

\subsection{Proof of Theorem~\ref{thm:safe-projection}}
\label{app:safe-projection-proof}

\begin{proof}
Assumption~\ref{assumption_design} is the common-design special case of
Assumption~\ref{assumption_task_specific_design}, so
Lemma~\ref{lemma:safe-target-score} and the target component of
Lemma~\ref{lemma_rscrsm} imply
\begin{equation}
\mathbb P\left\{\mathcal E_{T,\mathrm{score}}(\lambda_T)\cap\mathcal E_{T,\mathrm{rsc}}\right\}\geq1-C_1Kp\exp(-C_2n_T)-C_3p^{1-c}.
\label{eq:safe-event-probability}
\end{equation}
On this event, Lemma~\ref{lemma:safe-target-pilot} gives
$\mathbf B^0\in\mathcal C_T(\lambda_T,\kappa_T)$, and
Lemma~\ref{lemma:safe-confidence-radius} gives
\begin{equation}
\frac{1}{\sqrt K}\lVert\widehat{\mathbf W}^{\mathrm{safe}}-\mathbf B^0\rVert_F\leq C_{\mathrm{safe}}\sqrt{Ks}\lambda_T=C_{\mathrm{safe}}c_T\sqrt{\frac{s\psi_K}{n_T}}.
\label{eq:safe-target-branch}
\end{equation}
It remains to compare the projection with the original candidate. The
first-order characterization of a projection onto a closed convex set is
standard; see \citep[Sections~1.2 and~2.3]{parikh2014proximal}. Here it follows
directly by considering the line segment from
$\widehat{\mathbf W}^{\mathrm{safe}}$ toward $\mathbf B^0$, which lies in the
set because both endpoints are feasible. The squared distance to
$\widehat{\mathbf W}$ is minimized at the projected endpoint, so its
one-sided derivative along this segment is nonnegative. Equivalently,
\begin{equation}
\left\langle\widehat{\mathbf W}-\widehat{\mathbf W}^{\mathrm{safe}},\mathbf B^0-\widehat{\mathbf W}^{\mathrm{safe}}\right\rangle\leq0.
\label{eq:safe-projection-optimality}
\end{equation}
Expand the squared norm of
$\widehat{\mathbf W}-\mathbf B^0=
(\widehat{\mathbf W}-\widehat{\mathbf W}^{\mathrm{safe}})
+(\widehat{\mathbf W}^{\mathrm{safe}}-\mathbf B^0)$.
By \eqref{eq:safe-projection-optimality}, the cross term in this expansion
is nonnegative, and hence
\begin{equation}
\lVert\widehat{\mathbf W}^{\mathrm{safe}}-\mathbf B^0\rVert_F^2+\lVert\widehat{\mathbf W}^{\mathrm{safe}}-\widehat{\mathbf W}\rVert_F^2\leq\lVert\widehat{\mathbf W}-\mathbf B^0\rVert_F^2.
\label{eq:safe-pythagorean}
\end{equation}
Dropping the second term on the left of \eqref{eq:safe-pythagorean} shows
that projection cannot increase coefficient error. Taking its square root,
normalizing by $\sqrt K$, and combining with \eqref{eq:safe-target-branch}
proves \eqref{eq:safe-oracle-bound}. The event in
\eqref{eq:safe-event-probability} concerns only the target set and the truth.
Once it holds, the radius and projection arguments apply simultaneously to
every candidate $\widehat{\mathbf W}$. In particular, the candidate may have
been fitted or selected using these same target observations; no independence
or additional sample splitting is needed for this projection conclusion.
\end{proof}

\subsection{Proof of Proposition~\ref{prop:safe-tmtl-rates}}
\label{app:safe-rates-proof}

\begin{proof}
Apply \eqref{eq:safe-oracle-bound} first with
$\widehat{\mathbf W}=\widehat{\mathbf W}^F$ and then with
$\widehat{\mathbf W}=\widehat{\mathbf W}^D$. On the intersection with the
event of Theorem~\ref{thm:fused}, substitution of \eqref{rate_fused} gives
\eqref{eq:safe-fused-rate}. On the intersection with the event of
Theorem~\ref{thm:debiased}, substitution of \eqref{rate_debiased} gives
\eqref{eq:safe-debiased-rate}. In each case, a union bound between the
corresponding theorem event and \eqref{eq:safe-event-probability} gives the
probability in \eqref{eq:safe-fused-probability}, up to a change in its
constants: the additional terms $Kp\exp(-C_2n_T)$ and $p^{1-c}$ are absorbed by the corresponding terms multiplied by $L+1$. This union bound does not require independence of the events.
\end{proof}

\subsection{Calibration and a possible plug-in}
\label{app:safe-plugin}

The theoretical value of $\kappa_T$ in \eqref{eq:safe-tuning} depends on the
sparsity level $s$ and on the target curvature constant $\alpha_0$, neither of
which is known. A group-norm plug-in could set $\widehat s(\lambda_T)$ equal to
the active-row count of the target pilot and use
$\kappa_T=C_\kappa K\widehat s(\lambda_T)\lambda_T$, with $\lambda_T$ and
$C_\kappa$ selected by target validation. This follows the plug-in principle,
but not the exact implementation, of \citet{he2026parameter}: they set
$\kappa_T=\widehat s(\lambda_T)\lambda_T$ under their scalar normalization,
screen unconstrained joint candidates for feasibility, select among feasible
candidates by target validation, and fall back to the target-only estimator if
none is feasible. Such a data-driven radius is not covered by
Lemma~\ref{lemma:safe-target-pilot} or Theorem~\ref{thm:safe-projection}.
Neither result shows that an estimated radius keeps $\mathbf B^0$ feasible, and
feasibility of the truth is what the target-only branch in
\eqref{eq:safe-oracle-bound} rests on.

%% file: supp_simulation.tex
\section{Additional simulation details and results} \label{app:simulation_p100}
We first record the test-sample and method-specific tuning details of the
simulation study in Section~\ref{sec:simulation_p200}, whose results are
reported there in
Tables~\ref{tab:simulationresult_frobnormerror}--\ref{tab:simulationresult_negativetransfer}.
We then examine whether the same patterns persist in a smaller problem where
all penalty parameters can be tuned over a full Cartesian grid.

\subsection{Evaluation and method-specific tuning}
\label{app:simulation-solver}
{\emergencystretch=1.5em
Each replication uses an independent target test sample of size
$n_{\mathrm{test}}=100$. The estimators, structured source weights, and
candidate grid are described in Section~\ref{sec:simulation-methods}.
The taskwise \textbf{TSTL(Debiased)} correction uses its own cross-validation
procedure, and \textbf{MTL(Full)} selects its penalty using pooled target and
source observations. The correction stages are tuned with the selected fused
fits held fixed.
\par}

\FloatBarrier
\subsection{Design of the second study}
\label{sec:simulation-design-p100}
The second study fixes $p=100$ and $K=50$ and varies the number of source
domains over $L\in\{1,2,3\}$. These smaller problems allow us to tune each
fused penalty separately, rather than restricting the penalties to a
structured weight vector. We also record the ADMM iterations needed by
\textbf{TMTL(Fused)} to meet its stopping criterion. Because the dimensions
and source configurations differ from the main study, this comparison checks
whether its qualitative findings persist under full-grid tuning; it does not
isolate the effect of the structured weights.

The target and error distributions, the sparsity level $s=10$, the coefficient
magnitudes, and the mild covariate shift in the source designs are as in
Section~\ref{sec:simulation-design}, with $n_T=100$, $n_S=500$, $p=100$, and
$K=50$. The contrast matrix $\Delta_\alpha$ is generated as in
Section~\ref{sec:simulation-design}. The source coefficient matrices are
\newpage
\begin{itemize}
    \item $L=1$: $\mathbf B^1=\mathbf B^0+\Delta_\alpha$, with $\alpha\in\{0,1/5,1/2,1\}$.
    \item $L=2$, balanced shifts: $\mathbf B^1=\mathbf B^0+\Delta_{1/2}$ and $\mathbf B^2=\mathbf B^0-\Delta_{1/2}$.
    \item $L=2$, aligned shifts: $\mathbf B^1=\mathbf B^2=\mathbf B^0+\Delta_{1/2}$.
    \item $L=3$, balanced shifts:
    $\mathbf B^1=\mathbf B^2=\mathbf B^0+(3/4)\Delta_{1/2}$ and
    $\mathbf B^3=\mathbf B^0-(3/2)\Delta_{1/2}$.
    \item $L=3$, aligned shifts: $\mathbf B^1=\mathbf B^2=\mathbf B^3=\mathbf B^0+\Delta_{1/2}$.
\end{itemize}
The single-source case adds a setting with $\alpha=0$, in which the source and
target coefficients coincide. The balanced shifts have $\epsilon_D=0$, whereas the aligned shifts have
positive directional bias. 
For each setting, the coefficient matrices are
held fixed and the covariates and responses are regenerated over 100 Monte
Carlo replications, as in Section~\ref{sec:simulation-design}.

\subsection{Tuning in the second study}
\label{sec:simulation-methods-p100}
We compare the same six estimators as in
Section~\ref{sec:simulation-methods}. The fused penalties
$\lambda_0,\ldots,\lambda_L$ and the correction penalty
$\widetilde\lambda$ are selected jointly by three-fold cross-validation
over a Cartesian grid. For $L=1$, each grid contains seven logarithmically
spaced values from $10^{-4}$ to $1$; for $L=2,3$, it contains five values
from $10^{-4}$ to $10^{-1}$. Coarser grids keep the number of fits manageable
as $L$ grows, illustrating the computational motivation for the structured
weights in the main study.

We report log coefficient error, test RMSE, and negative-transfer frequency,
as defined in Section~\ref{sec:simulation-metric}. Dispersions are standard
deviations across replications.

\subsection{Results of the second study}
\label{sec:simulation-results-p100}
Full-grid tuning preserves the main finding: fusion is effective when source
shifts cancel, while correction is valuable when they align
(Tables~\ref{tab:simulationresult_frobnormerror_p100}--\ref{tab:simulationresult_negativetransfer_p100}).
With balanced shifts, the two TMTL estimators have nearly identical errors
and outperform the other methods in both coefficient estimation and
prediction. The $L=2$ and $L=3$ settings use different contrast configurations,
so their comparison should not be read as the effect of adding one source.

For a single source that coincides with, or is close to, the target
($\alpha=0$ or $1/5$), correction adds little and complete pooling is
competitive. As the discrepancy grows, \textbf{TMTL(Debiased)} becomes the
most accurate method. At $\alpha=1/2$ and $\alpha=1$, and in both aligned
multiple-source settings, it is the only source-based estimator with no
observed negative transfer; every other source-based method performs worse
than \textbf{MTL(Target)} in all replications. Thus, the advantage of target
correction under aligned shifts persists when the penalties are tuned freely.

The mean ADMM stopping count ranges from about 200 to 350 iterations
(Table~\ref{tab:iteration_fused_p100}). These counts describe empirical
computational effort and do not provide a finite-iteration accuracy guarantee.

\begin{table}[!htbp]
\caption{Log coefficient-estimation error \eqref{eq:coeferr} at $p=100$ and
$K=50$. Entries give the mean (standard deviation) over 100 replications,
rounded to two decimal places. Smaller values indicate better estimation;
boldface marks the smallest mean before rounding, including ties.}
\label{tab:simulationresult_frobnormerror_p100}
\centering
\small
\renewcommand{\arraystretch}{1.15}
\setlength{\tabcolsep}{3pt}
\begin{tabular}{llcccccc}
\toprule
\multirow{2}{*}{\textbf{L}} &
\multirow{2}{*}{\textbf{Shifts}} &
\multicolumn{2}{c}{\textbf{TMTL}} &
\multicolumn{2}{c}{\textbf{MTL}} &
\multicolumn{2}{c}{\textbf{TSTL}} \\
\cmidrule(lr){3-4}\cmidrule(lr){5-6}\cmidrule(lr){7-8}
& &
\textbf{Fused} & \textbf{Debiased} &
\textbf{Target} & \textbf{Full} &
\textbf{Fused} & \textbf{Debiased} \\
\midrule
\multirow{4}{*}{1}
& $\alpha=0$   & -5.65 (0.05) & -5.65 (0.05) & -4.11 (0.10) & \textbf{-5.67 (0.05)} & -4.83 (0.07) & -4.80 (0.08) \\
& $\alpha=1/5$ & -5.15 (0.04) & -5.15 (0.04) & -4.11 (0.10) & \textbf{-5.15 (0.04)} & -4.43 (0.06) & -4.44 (0.05) \\
& $\alpha=1/2$ & -3.41 (0.04) & \textbf{-4.38 (0.06)} & -4.11 (0.10) & -3.40 (0.04) & -3.39 (0.03) & -3.84 (0.05) \\
& $\alpha=1$   & -2.41 (0.04) & \textbf{-4.31 (0.07)} & -4.11 (0.10) & -1.91 (0.03) & -2.20 (0.02) & -3.48 (0.05) \\
\midrule
\multirow{2}{*}{2}
& Balanced  & \textbf{-6.14 (0.07)} & \textbf{-6.14 (0.07)} & -4.18 (0.14) & -5.09 (0.13) & -5.22 (0.06) & -5.14 (0.07) \\
& Aligned   & -3.35 (0.02) & \textbf{-4.63 (0.06)} & -4.18 (0.14) & -2.99 (0.01) & -3.33 (0.02) & -3.68 (0.04) \\
\midrule
\multirow{2}{*}{3}
& Balanced  & \textbf{-6.63 (0.08)} & \textbf{-6.63 (0.08)} & -4.18 (0.14) & -5.45 (0.30) & -5.44 (0.07) & -5.33 (0.08) \\
& Aligned   & -3.30 (0.02) & \textbf{-4.62 (0.07)} & -4.18 (0.14) & -2.95 (0.01) & -3.35 (0.02) & -3.67 (0.03) \\
\bottomrule
\end{tabular}

\end{table}
\begin{table}[!htbp]
\caption{Target test RMSE \eqref{eq:predrmse} at $p=100$ and
$K=50$. Entries give the mean (standard deviation) over 100 replications,
rounded to two decimal places. Smaller values indicate better prediction;
boldface marks the smallest mean before rounding, including ties.}
\label{tab:simulationresult_predictionerror_p100}
\centering
\small
\renewcommand{\arraystretch}{1.15}
\setlength{\tabcolsep}{3pt}
\begin{tabular}{llcccccc}
\toprule
\multirow{2}{*}{\textbf{L}} &
\multirow{2}{*}{\textbf{Shifts}} &
\multicolumn{2}{c}{\textbf{TMTL}} &
\multicolumn{2}{c}{\textbf{MTL}} &
\multicolumn{2}{c}{\textbf{TSTL}} \\
\cmidrule(lr){3-4}\cmidrule(lr){5-6}\cmidrule(lr){7-8}
& &
\textbf{Fused} & \textbf{Debiased} &
\textbf{Target} & \textbf{Full} &
\textbf{Fused} & \textbf{Debiased} \\
\midrule
\multirow{4}{*}{1}
& $\alpha=0$   & 0.06 (0.00) & 0.06 (0.00) & 0.13 (0.01) & \textbf{0.06 (0.00)} & 0.09 (0.00) & 0.09 (0.00) \\
& $\alpha=1/5$ & 0.08 (0.00) & 0.08 (0.00) & 0.13 (0.01) & \textbf{0.08 (0.00)} & 0.11 (0.00) & 0.11 (0.00) \\
& $\alpha=1/2$ & 0.18 (0.01) & \textbf{0.11 (0.00)} & 0.13 (0.01) & 0.18 (0.01) & 0.18 (0.01) & 0.15 (0.00) \\
& $\alpha=1$   & 0.30 (0.01) & \textbf{0.12 (0.01)} & 0.13 (0.01) & 0.38 (0.02) & 0.33 (0.01) & 0.18 (0.01) \\
\midrule
\multirow{2}{*}{2}
& Balanced  & \textbf{0.05 (0.00)} & \textbf{0.05 (0.00)} & 0.12 (0.01) & 0.08 (0.01) & 0.07 (0.00) & 0.08 (0.00) \\
& Aligned   & 0.19 (0.01) & \textbf{0.10 (0.00)} & 0.12 (0.01) & 0.22 (0.01) & 0.19 (0.01) & 0.16 (0.00) \\
\midrule
\multirow{2}{*}{3}
& Balanced  & \textbf{0.04 (0.00)} & \textbf{0.04 (0.00)} & 0.12 (0.01) & 0.07 (0.01) & 0.07 (0.00) & 0.07 (0.00) \\
& Aligned   & 0.19 (0.01) & \textbf{0.10 (0.00)} & 0.12 (0.01) & 0.23 (0.01) & 0.19 (0.01) & 0.16 (0.00) \\
\bottomrule
\end{tabular}

\end{table}
\begin{table}[!htbp]
\caption{Negative-transfer frequency \eqref{eq:negative_transfer} at $p=100$
and $K=50$: the proportion of 100 replications in which coefficient error
exceeds that of \textbf{MTL(Target)}. Boldface marks the smallest frequency,
except in rows where all methods tie. The target-only reference is omitted.}
\label{tab:simulationresult_negativetransfer_p100}
\centering
\small
\renewcommand{\arraystretch}{1.15}
\setlength{\tabcolsep}{5pt}
\begin{tabular}{llccccc}
\toprule
\multirow{2}{*}{\textbf{L}} &
\multirow{2}{*}{\textbf{Shifts}} &
\multicolumn{2}{c}{\textbf{TMTL}} &
\multicolumn{1}{c}{\textbf{MTL}} &
\multicolumn{2}{c}{\textbf{TSTL}} \\
\cmidrule(lr){3-4}\cmidrule(lr){5-5}\cmidrule(lr){6-7}
& &
\textbf{Fused} & \textbf{Debiased} &
\textbf{Full} &
\textbf{Fused} & \textbf{Debiased} \\
\midrule
\multirow{4}{*}{1}
& $\alpha=0$   & 0.00 & 0.00 & 0.00 & 0.00 & 0.00 \\
& $\alpha=1/5$ & 0.00 & 0.00 & 0.00 & 0.00 & 0.00 \\
& $\alpha=1/2$ & 1.00 & \textbf{0.00} & 1.00 & 1.00 & 1.00 \\
& $\alpha=1$   & 1.00 & \textbf{0.00} & 1.00 & 1.00 & 1.00 \\
\midrule
\multirow{2}{*}{2}
& Balanced  & 0.00 & 0.00 & 0.00 & 0.00 & 0.00 \\
& Aligned   & 1.00 & \textbf{0.00} & 1.00 & 1.00 & 1.00 \\
\midrule
\multirow{2}{*}{3}
& Balanced  & 0.00 & 0.00 & 0.00 & 0.00 & 0.00 \\
& Aligned   & 1.00 & \textbf{0.00} & 1.00 & 1.00 & 1.00 \\
\bottomrule
\end{tabular}

\end{table}
\begin{table}[!htbp]
\caption{ADMM iterations required by \textbf{TMTL(Fused)} to meet the stopping
criterion at $p=100$ and $K=50$, using the hyperparameters selected within
each replication. Entries report the mean (standard deviation).}
\label{tab:iteration_fused_p100}
\centering
\small
\renewcommand{\arraystretch}{1.15}
\setlength{\tabcolsep}{12pt}
\begin{tabular}{llc}
\toprule
$L$ & Shifts & Iterations \\
\midrule
\multirow{4}{*}{1}
& $\alpha=0$   & 203.30 (45.69) \\
& $\alpha=1/5$ & 213.45 (69.14) \\
& $\alpha=1/2$ & 221.90 (63.07) \\
& $\alpha=1$   & 275.55 (55.18) \\
\midrule
\multirow{2}{*}{2}
& Balanced & 309.20 (28.54) \\
& Aligned  & 306.35 (35.60) \\
\midrule
\multirow{2}{*}{3}
& Balanced & 348.35 (76.90) \\
& Aligned  & 294.00 (62.34) \\
\bottomrule
\end{tabular}

\end{table}

%% file: supp_realdata.tex
\section{Details of the single-cell application}
\label{app:realdata}

\subsection{Preprocessing}
\label{app:realdata-preprocessing}
We preprocess the data described in Section~\ref{sec:realdata-data} of the
main paper separately within each sequencing site. We first discard cell
types with fewer than 100 cells and retain genes with nonzero counts in at
least 30\% of the remaining cells. We then remove cells whose total RNA or
protein counts lie below the 0.5th percentile or above the 99.5th percentile
of the respective empirical distribution. For the retained cells, we
normalize gene counts by library size and apply a logarithmic transformation.
Writing $c_{ig}$ for the raw count of gene $g$ in cell $i$, we set
\begin{equation}
\label{eq:lognorm}
x_{ig}=\log\left(1+10^4\,\frac{c_{ig}}{\sum_{g'}c_{ig'}}\right),
\end{equation}
as implemented by the \texttt{LogNormalize} method of \texttt{NormalizeData}
in \texttt{Seurat} \citep{hao2024dictionary}, and standardize each resulting
feature across cells.

We then retain up to 2500 genes with the highest variability as predictors.
At each site, the detection filter leaves fewer than
2500 candidates, so all remain eligible. To reduce collinearity, whenever the
correlation between two candidate genes exceeds $0.8$, we remove the gene
with lower marginal variability.

For the protein responses, we apply the centered log-ratio (CLR)
transformation \citep{luecken2021sandbox, heumos2023best}. For protein $k$ in
cell $i$ from domain $\ell$, this gives
\[
y_i^{\ell k}
\leftarrow
\log
\left\{
1+
\frac{y_i^{\ell k}}
{\left(\prod_{q=1}^{K_{\mathrm{CLR}}}(1+y_i^{\ell q})\right)^{1/K_{\mathrm{CLR}}}}
\right\},
\]
where $K_{\mathrm{CLR}}$ is the number of proteins in the normalization
panel. We then retain proteins whose maximal absolute correlation with a
retained predictor is at least 0.2, focusing the evaluation on proteins with
detectable marginal associations with RNA predictors.

Because CLR normalization does not generally center each protein across
cells, we handle intercepts separately within each domain and random split.
We compute the column means of the scaled predictors and CLR responses using
only that domain's training cells, then subtract these means from its
training, validation, and test observations. All six methods are fitted to
the centered variables. For target prediction, we add the target
training-response means back to the fitted responses. This is equivalent to
fitting unpenalized domain- and protein-specific intercepts and prevents
marginal differences in protein abundance between cell types from being
absorbed into the slope estimates.

\subsection{Implementation settings}
\label{app:realdata-implementation}
The methods and tuning grids follow Section~\ref{sec:realdata-methods} of the
main paper. For \textbf{TMTL(Debiased)}, we tune the correction with the
selected fused fit held fixed. At all four sites, the taskwise
\textbf{TSTL(Debiased)} correction is tuned and refitted using the combined
target training and validation observations, with the fused fit held fixed.
Target test cells remain held out throughout.

{\emergencystretch=1.5em
For the fused TMTL fit, ADMM stops when the maximum Frobenius norm among the
constraint-specific primal residuals and dual residual contributions in
\eqref{eq:admm-residuals} falls below $\sqrt{K}\times10^{-4}$, or after 1000
iterations. For the multitask correction and both MTL baselines, ADMM stops
when the Frobenius norms of both the primal and dual residuals fall below the
same threshold, or after 1000 iterations. The single-task baselines are
solved separately for each of the $K$ responses;
\textbf{TSTL(Fused)} uses a tolerance of $5\times10^{-3}$ and a cap of
200 iterations.
\par}

\subsection{Complete prediction results}
\label{app:realdata-results}
Table~\ref{tab:realdatarmse} gives the mean relative MSE and its standard
deviation over 20 random splits for all six methods and all 48 target
domains, grouped by sequencing site. Values below one indicate improvement
over the intercept-only model, and smaller values indicate better
prediction.

The benefit of target correction varies across sites. At Sites~1 and~4,
\textbf{TMTL(Debiased)} improves on \textbf{TMTL(Fused)} for every target
domain; at Sites~2 and~3, correction makes little difference in several
domains. Correction does not make transfer uniformly preferable to
other approaches: target-only fitting is better in some domains at
Sites~1--3, and complete pooling outperforms \textbf{TMTL(Debiased)} in four
domains at Site~4. These differences complement the comparisons with
target-only fitting and fusion in Figure~\ref{fig:realdata-benefits} of the
main paper.

\begin{table}[!htbp]
\caption{Relative MSE $Q_{\text{method},r}^{(\ell)}$ \eqref{eq:realdata-relative-mse} for target domains at Site~1. The four parts of this table report all 48 domains. Entries are means (standard deviations) over 20 random splits, rounded to two decimal places. Targets are ordered by increasing $n_\ell$, the total number of cells in the domain. Boldface marks the lowest mean in each row, including ties, before rounding to two decimal places.}
\label{tab:realdatarmse}
\centering

\small
\renewcommand{\arraystretch}{1.1}
\setlength{\tabcolsep}{1.7pt}
\begin{tabular}{>{\raggedright\arraybackslash}p{0.21\textwidth}rcccccc}
\toprule
\multirow{2}{*}{\textbf{Target}} &
\multirow{2}{*}{$n_\ell$} &
\multicolumn{2}{c}{\textbf{TMTL}} &
\multicolumn{2}{c}{\textbf{MTL}} &
\multicolumn{2}{c}{\textbf{TSTL}} \\
\cmidrule(lr){3-4}\cmidrule(lr){5-6}\cmidrule(lr){7-8}
& & \textbf{Fused} & \textbf{Debiased} & \textbf{Target} &
\textbf{Full} & \textbf{Fused} & \textbf{Debiased} \\
\midrule
CD8$^+$ T CD57$^+$ CD45RO$^+$ & 111 & 0.99 (0.01) & \textbf{0.93 (0.01)} & 0.94 (0.01) & 1.00 (0.01) & 1.01 (0.02) & 0.98 (0.03) \\
CD8$^+$ T TIGIT$^+$ CD45RO$^+$ & 115 & 0.93 (0.02) & \textbf{0.89 (0.03)} & 0.93 (0.02) & 0.93 (0.02) & 0.95 (0.01) & 0.95 (0.03) \\
T reg & 131 & 0.87 (0.02) & \textbf{0.83 (0.03)} & 0.88 (0.02) & 0.88 (0.02) & 0.90 (0.02) & 0.88 (0.02) \\
CD8$^+$ T CD69$^+$ CD45RA$^+$ & 134 & 0.95 (0.01) & \textbf{0.92 (0.02)} & 0.94 (0.02) & 0.93 (0.01) & 0.97 (0.01) & 0.96 (0.02) \\
CD8$^+$ T CD49f$^+$ & 144 & 0.82 (0.02) & \textbf{0.81 (0.02)} & 0.89 (0.02) & 0.82 (0.02) & 0.87 (0.02) & 0.86 (0.02) \\
gdT CD158b$^+$ & 214 & 0.88 (0.01) & 0.73 (0.02) & \textbf{0.72 (0.02)} & 0.87 (0.02) & 0.91 (0.01) & 0.75 (0.02) \\
MAIT & 232 & 0.94 (0.01) & \textbf{0.91 (0.02)} & 0.95 (0.02) & 0.94 (0.01) & 0.96 (0.01) & 0.92 (0.01) \\
CD4$^+$ T activated integrinB7$^+$ & 375 & 0.74 (0.01) & \textbf{0.68 (0.01)} & 0.72 (0.02) & 0.71 (0.01) & 0.78 (0.01) & 0.73 (0.02) \\
CD8$^+$ T CD57$^+$ CD45RA$^+$ & 388 & 0.90 (0.01) & \textbf{0.78 (0.02)} & 0.79 (0.02) & 0.91 (0.01) & 0.92 (0.01) & 0.83 (0.01) \\
CD8$^+$ T naive & 1201 & 0.90 (0.01) & 0.90 (0.01) & \textbf{0.89 (0.01)} & 0.91 (0.01) & 0.92 (0.01) & 0.90 (0.01) \\
CD4$^+$ T naive & 1697 & 0.87 (0.01) & \textbf{0.84 (0.01)} & 0.84 (0.01) & 0.89 (0.01) & 0.88 (0.01) & 0.86 (0.01) \\
CD4$^+$ T activated & 1705 & 0.67 (0.01) & 0.62 (0.01) & \textbf{0.61 (0.01)} & 0.65 (0.01) & 0.70 (0.01) & 0.64 (0.01) \\
\bottomrule
\end{tabular}

\end{table}

\clearpage
\begin{table}[!t]
\centering
\ContinuedFloat
\caption{Relative MSE across target domains (continued): Site~2.}
\small
\renewcommand{\arraystretch}{1.1}
\setlength{\tabcolsep}{1.7pt}
\begin{tabular}{>{\raggedright\arraybackslash}p{0.21\textwidth}rcccccc}
\toprule
\multirow{2}{*}{\textbf{Target}} &
\multirow{2}{*}{$n_\ell$} &
\multicolumn{2}{c}{\textbf{TMTL}} &
\multicolumn{2}{c}{\textbf{MTL}} &
\multicolumn{2}{c}{\textbf{TSTL}} \\
\cmidrule(lr){3-4}\cmidrule(lr){5-6}\cmidrule(lr){7-8}
& & \textbf{Fused} & \textbf{Debiased} & \textbf{Target} &
\textbf{Full} & \textbf{Fused} & \textbf{Debiased} \\
\midrule
CD8$^+$ T CD69$^+$ CD45RA$^+$ & 105 & 0.97 (0.01) & \textbf{0.97 (0.01)} & 1.00 (0.03) & 0.98 (0.01) & 0.99 (0.01) & 1.01 (0.02) \\
CD8$^+$ T CD57$^+$ CD45RO$^+$ & 115 & 0.90 (0.01) & \textbf{0.79 (0.02)} & 0.81 (0.02) & 0.90 (0.01) & 0.92 (0.01) & 0.85 (0.02) \\
CD4$^+$ T activated integrinB7$^+$ & 207 & \textbf{0.88 (0.01)} & \textbf{0.88 (0.01)} & 0.96 (0.00) & \textbf{0.88 (0.01)} & 0.90 (0.01) & 0.90 (0.01) \\
T reg & 218 & 0.89 (0.02) & \textbf{0.88 (0.02)} & 0.94 (0.01) & 0.91 (0.01) & 0.92 (0.02) & 0.90 (0.02) \\
CD8$^+$ T CD49f$^+$ & 219 & 0.88 (0.01) & \textbf{0.87 (0.01)} & 0.95 (0.01) & 0.88 (0.01) & 0.91 (0.01) & 0.90 (0.01) \\
CD8$^+$ T TIGIT$^+$ CD45RA$^+$ & 421 & \textbf{0.99 (0.01)} & \textbf{0.99 (0.01)} & 1.00 (0.00) & 1.00 (0.01) & 1.00 (0.01) & 1.00 (0.01) \\
CD8$^+$ T naive & 457 & 0.94 (0.01) & 0.93 (0.01) & 0.94 (0.01) & \textbf{0.93 (0.01)} & 0.94 (0.01) & 0.94 (0.01) \\
CD8$^+$ T TIGIT$^+$ CD45RO$^+$ & 615 & 0.91 (0.01) & \textbf{0.86 (0.01)} & 0.88 (0.01) & 0.91 (0.01) & 0.93 (0.01) & 0.89 (0.01) \\
CD4$^+$ T naive & 1708 & 0.91 (0.00) & \textbf{0.90 (0.00)} & 0.91 (0.01) & 0.90 (0.00) & 0.91 (0.00) & 0.90 (0.00) \\
CD4$^+$ T activated & 1971 & 0.86 (0.00) & 0.85 (0.00) & \textbf{0.84 (0.01)} & 0.87 (0.00) & 0.88 (0.00) & 0.85 (0.00) \\
\bottomrule
\end{tabular}

\end{table}

\clearpage
\begin{table}[!t]
\centering
\ContinuedFloat
\caption{Relative MSE across target domains (continued): Site~3.}
\small
\renewcommand{\arraystretch}{1.1}
\setlength{\tabcolsep}{1.7pt}
\begin{tabular}{>{\raggedright\arraybackslash}p{0.21\textwidth}rcccccc}
\toprule
\multirow{2}{*}{\textbf{Target}} &
\multirow{2}{*}{$n_\ell$} &
\multicolumn{2}{c}{\textbf{TMTL}} &
\multicolumn{2}{c}{\textbf{MTL}} &
\multicolumn{2}{c}{\textbf{TSTL}} \\
\cmidrule(lr){3-4}\cmidrule(lr){5-6}\cmidrule(lr){7-8}
& & \textbf{Fused} & \textbf{Debiased} & \textbf{Target} &
\textbf{Full} & \textbf{Fused} & \textbf{Debiased} \\
\midrule
MAIT & 165 & 0.96 (0.01) & \textbf{0.89 (0.02)} & 0.90 (0.01) & 0.95 (0.01) & 0.97 (0.01) & 0.93 (0.01) \\
CD8$^+$ T CD57$^+$ CD45RO$^+$ & 171 & \textbf{0.97 (0.01)} & \textbf{0.97 (0.01)} & 0.99 (0.01) & 0.97 (0.01) & 0.98 (0.01) & 0.98 (0.02) \\
CD8$^+$ T TIGIT$^+$ CD45RO$^+$ & 209 & \textbf{0.97 (0.02)} & \textbf{0.97 (0.02)} & 1.00 (0.01) & \textbf{0.97 (0.02)} & 0.99 (0.01) & 0.99 (0.02) \\
CD4$^+$ T activated integrinB7$^+$ & 219 & 0.91 (0.01) & \textbf{0.90 (0.01)} & 0.91 (0.01) & 0.90 (0.01) & 0.92 (0.01) & 0.92 (0.01) \\
CD8$^+$ T CD69$^+$ CD45RA$^+$ & 245 & 0.99 (0.01) & 0.99 (0.01) & 1.00 (0.00) & \textbf{0.99 (0.01)} & 0.99 (0.01) & 0.99 (0.01) \\
CD8$^+$ T CD49f$^+$ & 274 & 0.91 (0.01) & \textbf{0.86 (0.01)} & 0.88 (0.01) & 0.89 (0.01) & 0.91 (0.01) & 0.87 (0.01) \\
CD8$^+$ T TIGIT$^+$ CD45RA$^+$ & 345 & 0.93 (0.01) & \textbf{0.91 (0.02)} & 0.92 (0.01) & 0.92 (0.02) & 0.94 (0.01) & 0.92 (0.02) \\
CD8$^+$ T CD69$^+$ CD45RO$^+$ & 357 & 0.97 (0.01) & \textbf{0.93 (0.01)} & 0.94 (0.01) & 0.97 (0.01) & 0.98 (0.01) & \textbf{0.93 (0.01)} \\
CD8$^+$ T CD57$^+$ CD45RA$^+$ & 444 & 1.00 (0.00) & 0.97 (0.01) & \textbf{0.97 (0.00)} & 0.99 (0.00) & 0.99 (0.00) & 0.98 (0.01) \\
CD8$^+$ T naive & 592 & \textbf{0.90 (0.01)} & \textbf{0.90 (0.01)} & 0.93 (0.01) & 0.91 (0.01) & 0.91 (0.00) & 0.90 (0.01) \\
CD4$^+$ T naive & 1221 & 0.89 (0.00) & \textbf{0.87 (0.01)} & 0.88 (0.00) & 0.89 (0.01) & 0.90 (0.00) & 0.89 (0.00) \\
CD4$^+$ T activated & 1660 & 0.89 (0.00) & \textbf{0.85 (0.00)} & 0.86 (0.00) & 0.87 (0.00) & 0.89 (0.00) & 0.86 (0.00) \\
\bottomrule
\end{tabular}

\end{table}

\clearpage
\begin{table}[!t]
\centering
\ContinuedFloat
\caption{Relative MSE across target domains (continued): Site~4.}
\small
\renewcommand{\arraystretch}{1.1}
\setlength{\tabcolsep}{1.7pt}
\begin{tabular}{>{\raggedright\arraybackslash}p{0.21\textwidth}rcccccc}
\toprule
\multirow{2}{*}{\textbf{Target}} &
\multirow{2}{*}{$n_\ell$} &
\multicolumn{2}{c}{\textbf{TMTL}} &
\multicolumn{2}{c}{\textbf{MTL}} &
\multicolumn{2}{c}{\textbf{TSTL}} \\
\cmidrule(lr){3-4}\cmidrule(lr){5-6}\cmidrule(lr){7-8}
& & \textbf{Fused} & \textbf{Debiased} & \textbf{Target} &
\textbf{Full} & \textbf{Fused} & \textbf{Debiased} \\
\midrule
CD8$^+$ T TIGIT$^+$ CD45RA$^+$ & 158 & 0.84 (0.01) & \textbf{0.75 (0.02)} & 0.78 (0.02) & 0.76 (0.02) & 0.83 (0.01) & 0.79 (0.02) \\
T reg & 170 & 0.85 (0.01) & 0.81 (0.03) & 0.85 (0.02) & \textbf{0.79 (0.02)} & 0.84 (0.01) & 0.83 (0.01) \\
CD8$^+$ T TIGIT$^+$ CD45RO$^+$ & 182 & 0.88 (0.01) & \textbf{0.82 (0.02)} & 0.86 (0.03) & 0.83 (0.02) & 0.87 (0.01) & 0.86 (0.02) \\
CD8$^+$ T CD69$^+$ CD45RA$^+$ & 223 & 0.90 (0.01) & \textbf{0.78 (0.02)} & 0.81 (0.02) & 0.81 (0.02) & 0.88 (0.01) & 0.81 (0.02) \\
CD4$^+$ T activated integrinB7$^+$ & 224 & 0.82 (0.01) & 0.74 (0.01) & 0.79 (0.01) & \textbf{0.73 (0.02)} & 0.80 (0.01) & 0.75 (0.01) \\
CD8$^+$ T CD49f$^+$ & 241 & 0.81 (0.02) & 0.75 (0.02) & 0.80 (0.02) & \textbf{0.73 (0.02)} & 0.80 (0.01) & 0.77 (0.01) \\
MAIT & 264 & 0.87 (0.02) & \textbf{0.78 (0.02)} & 0.81 (0.02) & 0.81 (0.02) & 0.86 (0.02) & 0.80 (0.02) \\
CD8$^+$ T CD69$^+$ CD45RO$^+$ & 331 & 0.85 (0.01) & 0.81 (0.01) & 0.84 (0.02) & \textbf{0.78 (0.02)} & 0.83 (0.01) & 0.81 (0.02) \\
CD8$^+$ T CD57$^+$ CD45RA$^+$ & 343 & 0.88 (0.01) & 0.74 (0.02) & 0.76 (0.02) & 0.78 (0.02) & 0.85 (0.01) & \textbf{0.74 (0.02)} \\
gdT CD158b$^+$ & 400 & 0.93 (0.01) & \textbf{0.78 (0.03)} & 0.78 (0.03) & 0.88 (0.02) & 0.92 (0.01) & 0.78 (0.03) \\
CD8$^+$ T naive & 747 & 0.84 (0.01) & \textbf{0.71 (0.01)} & 0.73 (0.01) & 0.76 (0.01) & 0.81 (0.01) & 0.73 (0.01) \\
CD8$^+$ T CD57$^+$ CD45RO$^+$ & 1002 & 0.85 (0.01) & \textbf{0.72 (0.01)} & 0.73 (0.01) & 0.77 (0.01) & 0.82 (0.01) & 0.73 (0.01) \\
CD4$^+$ T naive & 1110 & 0.85 (0.01) & 0.78 (0.01) & 0.80 (0.01) & 0.79 (0.01) & 0.82 (0.01) & \textbf{0.78 (0.01)} \\
CD4$^+$ T activated & 1328 & 0.84 (0.01) & \textbf{0.70 (0.01)} & 0.71 (0.01) & 0.74 (0.01) & 0.78 (0.00) & 0.72 (0.01) \\
\bottomrule
\end{tabular}

\end{table}